\newtheoremstyle{spacesaver}%
        {7pt plus2pt minus4pt}% [space above] plus
        {7pt plus2pt minus4pt}% [space below]
        {\upshape}% body font
        {}%
        {\bfseries\upshape}% heading font
        {}%
        {1em}%
        {}%
\theoremstyle{spacesaver}
\newtheorem{thm}{Theorem}[section]
\newtheorem{definition}[thm]{Definition}
\newtheorem{defn}[thm]{Definition}
\newtheorem{reduc}[thm]{Reduction}
\newtheorem{obs}[thm]{Observation}
\newtheorem{coro}[thm]{Corollary}
\newtheorem{lem}[thm]{Lemma}
\newcommand{\Bnote}[1]{}
\newcommand{\nc}{\newcommand}
\newcommand{\bigo}{{\mathcal{O}}}
\newcommand{\supsrc}{{\bullet}}
\newcommand{\msg}{\mathfrak{M}}
\newcommand{\cf}{{f}}
\newcommand{\nwc}{{\Phi}}
\newcommand{\In}{{\mathsf{In}}}
\newcommand{\hgap}[1]{{\gamma(#1)}}
\newcommand{\gsdef}{{\mathsf{SD}}}
\newcommand{\sdef}{{\mathsf{LSD}}}
\newcommand{\bal}{{A}}
\newcommand{\balg}[1]{{\langle #1 \rangle}}
\newcommand{\spn}{+}
\newcommand{\intsct}{\cap}
\newcommand{\vctr}[1]{{\mathbf{#1}}}
\newcommand{\Sigmas}{{\{\Sigma_e\}}}
\newcommand{\fs}{{\{f_e\}}}
\newcommand{\atoms}{{\operatorname{At}}}
\newcommand{\gdiv}{{\Gamma}}
\newcommand{\ggdiv}{{\Upsilon}}
\nc{\bvec}{\vctr{e}}
\nc{\Salph}{\Sigma_{S}}
\nc{\Ealph}{\Sigma_e}
\nc{\Src}{S}
\nc{\Snk}{\mathcal{T}}
\nc{\Malph}{\Sigma_{\mathcal{M}}}
\nc{\networkcode}[1]{(G^{#1}, \msg, \{\Ealph^{#1}\}, \{f^{#1}_e\})}
\nc{\gf}{G \cdot F}
\nc{\vgf}{v^{\gf}}
\nc{\Agf}{A^{\gf}}
\nc{\Bgf}{B^{\gf}}
\nc{\Pgf}{P_{\gf}}
\nc{\Spn}{+}
\nc{\Intsct}{\cap}
\nc{\field}{{\mathbb{F}}}
\nc{\ce}{C\&E }
\nc{\word}{functions}
\nc{\src}[1]{(\bullet, #1)}
\nc{\st}{\text{ s.t. }}
\nc{\CallSer}{\textbf{LinSerialize}($\nwc$) }
\nc{\GenSer}{\textbf{GenSerialize}($\nwc$)}
\title{The Serializability of Network Codes}
\author{
Anna Blasiak \footnotemark[1]  \and
Robert Kleinberg \footnotemark[2] 
}
\begin{document}
\date{}
\maketitle
\thispagestyle{empty}

\begin{abstract}
Network coding theory studies the transmission of information in networks whose vertices may
perform nontrivial encoding and decoding operations on data as it passes through the network.  The main approach to deciding the feasibility of network coding problems aims to reduce the problem to optimization over a polytope of ``entropic vectors" subject to constraints imposed by the network structure.  In the case of directed acyclic graphs, these constraints are completely understood, but for general graphs the problem of enumerating them  remains open: it is not known how to classify the  constraints implied by a property that we call \emph{serializability}, which refers to the absence of paradoxical circular dependencies in a network code.

In this work we initiate the first systematic study of the constraints imposed on a network
code by serializability.  We find that serializability cannot be detected solely by evaluating the Shannon entropy of edge sets in the graph, but nevertheless, we give a polynomial-time algorithm that decides the serializability of a network code.  We define a certificate of non-serializability, called an \emph{information vortex}, that plays a role in the theory of serializability comparable to the role of fractional cuts in multicommodity flow theory, including a type of min-max relation.  Finally, we study the \emph{serializability deficit} of a network code, defined as the 
minimum number of extra bits that must be sent in order to make it serializable.  For linear codes, we show that it is NP-hard to approximate this parameter within a constant factor, and we demonstrate some surprising facts about the behavior of this parameter under parallel composition of codes.

\end{abstract}

\newpage
\setcounter{page}{1}

\section{Introduction}
\label{sec:intro}
Network coding theory studies the transmission of information in networks whose vertices may
perform nontrivial encoding and decoding operations on data as it passes through the network.  More specifically, a network code consists of a network with specified sender and receiver edges and coding functions on each edge.  The classic definition of a network code requires that each vertex can compute the message on every outgoing edge from the messages received on its incoming edges, and that each receiver is sent the message it requires.  In directed acyclic graphs, a network code that satisfies these requirements specifies a valid communication protocol.  However, in graphs with cycles this need not be the case; the definition does not preclude the possibility of cyclic dependencies among coding functions.  Therefore, in graphs with cycles we also require that a network code is \emph{serializable}, meaning it correctly summarizes a communication protocol in which symbols are transmitted on edges over time, and each symbol transmitted by a vertex is computed without knowledge of information it will receive in the future. The present paper is devoted to the study of characterizing the constraint of serializability.

\paragraph{Motivation.}
The central question in the area of network coding is to determine the amount by which coding can increase the rate of information flow as compared to transferring information without coding.  Crucial to answering this question is developing tools to find upper bounds for the network coding rate.  The question of serializability must be considered in order to determine tight upper bounds on network codes in cyclic graphs.  Determining tight upper bounds is especially relevant to one of the most important open problems in network coding, the \emph{undirected $k$-pairs conjecture}, which states that in undirected graphs with $k$ sender-receiver pairs, coding cannot increase the maximum rate of information flow;  that is, the network coding rate is the same as the multicommodity flow rate.  Apart from its intrinsic interest, the conjecture also has important complexity-theoretic implications: for example, if true, it implies an affirmative answer to a 20-year-old conjecture regarding the I/O complexity of matrix transposition \cite{SODA}.

Almost all efforts to produce upper bounds on the network coding rate have focused on
the following construction.  We regard
each edge of the network as defining a
random variable on a probability space and then associate each set of edges with the Shannon entropy of the
joint distribution of their random variables.
This gives us a vector of non-negative
numbers, one for each edge set, called the
\emph{entropic vector} of the network code.
The closure of the set of entropic vectors of network codes forms a convex set, and network coding problems can be expressed as optimization problems over this set~\cite{YeungFC}.  In much previous work, tight upper bounds have been constructed by combining the constraints that define this convex set.  However, this technique is limited because there is no known description
of all these constraints.

There are two types of constraints: the purely
information-theoretic ones (i.e.,~those
that hold universally for all $n$-tuples of
random variables, regardless of their interpretation
as coding functions on edges of a network) and
the constraints derived from the combinatorial
structure of the network.  The former type of
constraints include the so-called 
Shannon and non-Shannon inequalities,
and are currently a topic of intense 
investigation~\cite{ChanYeung,DFZ06,Matus,MMR,ZY98}.
The latter type of constraints --- namely, those
determined by the network structure --- are trivial to characterize in the case of directed acyclic
graphs: if one imposes a constraint that the
entropy of each node's incoming edge set equals
the entropy of all of its incoming and outgoing 
edges, then these constraints together with
the purely information-theoretic ones imply
all other constraints resulting from the network
structure~\cite{YZ99}.  However, in graphs with cycles 
there are additional constraints determined from the network structure. 
% It is necessary to consider the constraints implied by serializability, because, as shown in \cite{CE}, we can create examples of coding functions that are not serializable, and ``achieve'' a faster coding rate than any serialziable code.  We need to exclude such codes in order to get a tight lower bound.

In a series of work on finding network coding upper bounds, large classes of information inequalities in graphs with cycles were discovered independently by Jain et al.~\cite{Jain}, Kramer and Savari~\cite{KramerSavari},
and Harvey et al.~\cite{TransIT}.  These go by the names
\emph{crypto inequality}, \emph{PdE bound}, and 
\emph{informational dominance bound}, respectively.
In various forms, all of them describe a situation
in which the information on one set of edges 
completely determines the information on another
set of edges.
%, in which case there is an information
%inequality involving those two components of the
%entropic vector.  
A more general necessary condition for serializability was presented in
a recent paper by Harvey et al.~\cite{CE};
% presents a six-term information inequality that follows from serializability (
we will henceforth refer to this information inequality as the \emph{Chicken and Egg} inequality; see Theorem~\ref{thm:2cycle}.  Though in all the previous work the inequalities used were sufficient to prove the needed bounds on the specific graphs analyzed in the paper, no one has asked if this set of inequalities provides a complete characterization of serializability.  This inspires the following natural questions: Are the information inequalities given in previous work sufficient to characterize serializability?  Does there exist a set of information theoretic inequalities that gives a sufficient condition for serializability?  Is there a finite set of information theoretic inequalities implied by serializability?  Is there any ``nice'' condition that is necessary and sufficient serializability?  We see these questions not only as interesting for a general understanding of network coding, but also a key ingredient to eventually developing algorithms and upper bounds for network coding in general graphs.
       
\paragraph{Our contributions.}
Our work is the first systematic study of
criteria for serializability of network
codes.  We find that serializability cannot be detected solely
from the entropic vector of the network
code; a counter-example is given in Section \ref{sec:2cycle}.  
This leads us to focus the paper on two independent, but dual, questions: Is there any efficiently verifiable necessary and sufficient condition for serializability?  What is the complete set of entropy inequalities implied by serializability?  

We answer the first question in the affirmative in Section \ref{sec:serialize} by providing an algorithm to decide whether a code is  serializable.
The running time of this algorithm is 
polynomial in the cardinalities of
the edge alphabets, and it is polynomial
in their dimensions in the case of linear
network codes.
We answer the second question for the 2-cycle in Section \ref{sec:2cycle}, giving four inequalities derived from the network structure, and showing that any entropic vector satisfying those inequalities as well as Shannon's inequalities can be realized by a serializable network code.  (Though structurally simple, the 2-cycle graph has been an important source of inspiration for information inequalities in prior work, including the crypto inequality~\cite{Jain}, the informational dominance bound~\cite{TransIT}, and the Chicken and Egg inequality~\cite{CE}.)  Disappointingly, we do not know if this result extends beyond the 2-cycle.

Beyond providing an algorithm for deciding if a network code is
serializable, our work provides important insights into the 
property of serializability.  
In Section~\ref{sec:serialize} we 
define a certificate that we call an \emph{information vortex} 
that is a necessary and sufficient condition for non-serializability.  
For linear network codes, an information vortex consists
of linear subspaces of the
dual of the message space.  For general network codes, it consists of Boolean
subalgebras of the power set of the message set.  We prove a number of
theorems about information vortices that
suggest their role in the theory of network coding
may be similar to the role of fractional cuts in
network flow theory.  In particular, we prove a type of
min-max relation between serializable codes 
and information vortices: under a suitable
definition of \emph{serializable restriction} it 
holds that every network code has a unique
maximal serializable restriction, a unique
minimal information vortex, and these two
objects coincide.

Finally, motivated by examples in which non-serializable codes, whose coding functions have dimension $n$, can be serialized by adding a single bit, we consider the idea of a network code being ``close'' to serializable.  We formalize this by studying a parameter we call the \emph{serializability
deficit} of a network code, defined as the 
minimum number of extra bits that must be sent 
in order to make it serializable.
For linear codes, we show that it is NP-hard to 
approximate this parameter within a constant factor.
We also demonstrate, perhaps surprisingly, that 
the serialization deficit may behave subadditively
under parallel composition: when executing two
independent copies of a network code, the 
serialization deficit may scale up by a factor
less than two.  In fact, for every $\delta>0$ 
there is a network code $\Phi$ and a positive
integer $n$ such that the serialization deficit
of $\Phi$ grows by a factor less than $\delta n$
when executing $n$ independent copies of $\Phi$.
Despite these examples, we are able to prove
that for any non-serializable linear code $\Phi$ there
is a constant $c_\Phi$ such that the
serialization deficit of $n$ independent
executions of $\Phi$ is at least $c_\Phi n$.  The concept of an information vortex is crucial to our results on serializability deficit.

\paragraph{Related work.}

For a general introduction to network coding we 
refer the reader to~\cite{ARL-thesis,YeungCZ-book}.
There is a standard definition of network codes
in directed acyclic graphs (Definition~\ref{def:nwc}
below) but in many papers on graphs with cycles the 
definition is either not explicit (e.g.~\cite{ACLY})
or is restricted to special classes of codes 
(e.g.~\cite{ErezFeder}).  Precise and general definitions of network codes in graphs
with cycles appear in~\cite{KoetterMedard,SODA,TransIT}
and the equivalence of these definitions (modulo some
differing assumptions about nodes' memory) is
proven in~\cite{ARL-thesis}.  The definition for serializability that we set forth in Section~\ref{sec:defs} 
was used, but never formally defined, in \cite{CE}.  In its essence it is the same as the ``graph over time'' definition given in \cite{ARL-thesis} but requires less cumbersome notation.

\section{Definitions}
\label{sec:defs}

We define a network code to operate on a directed multigraph we call a \emph{sourced graph}, denoted $G = (V,E,S). $\footnote{In prior work it is customary for the underlying network to also have a special set of receiving edges.  Specifying a special set of receivers is irrelevant in our work, so we omit them for convenience, but everything we do can be easily extended to include receivers.}   $S$ is a set of special edges, called \emph{sources} or \emph{source edges}, that have a head but no tail.  We denote
a source with head $s$ by an ordered pair
$(\supsrc,s)$.  Elements of $E \cup S$ are called \emph{edges} and elements of $E$ are called \emph{ordinary edges}.  For a vertex $v$, we let $\In(v) = \{(u,v) \in E\}$ be the set of edges whose head is $v$.  For an edge $e=(u,v) \in E$, we also use $\In(e) = \In(u)$ to denote  the set of \emph{incoming edges} to $e$.  % $\In(\src{s}) = \In(\bullet) =  \emptyset$.  

A network code in a sourced graph specifies a
protocol for communicating symbols on error-free
channels corresponding to the graph's ordinary
edges, given the tuple of messages that originate
at the source edges.  

\begin{definition}
\label{def:nwc}
A \emph{network code} is specified by a $4$-tuple
$\nwc = (G,\msg,\{\Sigma_e\}_{e \in E \cup S},\{\cf_e\}_{e \in E \cup S})$
where $G=(V,E,S)$ is a sourced graph, $\msg$ is a
set whose elements are called \emph{message-tuples},
and for all edges $e \in E \cup S,$
$\Sigma_e$ is a set called 
the \emph{alphabet} of $e$ and 
$\cf_e : \msg \rightarrow \Sigma_e$ is 
a function called the \emph{coding function}
of $e$.  If $e$ is an edge and $e_1,\ldots,e_k$
are the elements of $\In(e)$
then the value of the coding function $\cf_e$ must
be completely determined by the values of
$\cf_{e_1},\ldots,\cf_{e_k}$.  In other words, 
there must exist a function $g_e : \prod_{i=1}^k
\Sigma_{e_i} \rightarrow \Sigma_e$ such that
for all $m \in \msg$, 
$
\cf_e(m) = g_e(\cf_{e_1}(m),\ldots,\cf_{e_k}(m)).
$
\end{definition}
In graphs with cycles a code can have cyclic dependencies so Definition~\ref{def:nwc} does not suffice to characterize the notion of a valid network code.  We must impose a further constraint that we call \emph{serializability},
which requires that the network code summarizes
a complete execution of a communication protocol
in which every bit transmitted by a vertex depends
only on bits that it has already received.  

Below we define serializability formally using a definition implicit in \cite{CE}.
%The protocol itself can be represented by a DAG
%whose edges correspond to the transmissions.
%To express the property that a network code in 
%a graph with cycles summarizes a sequence of 
%transmissions in another graph, we define a 
%\emph{homomorphism} of network codes.  The following
%definitions apply to a pair of sourced graphs
%$G=(V,E,S)$ and $G'=(V',E',S')$ and a pair of
%network codes $\Phi=(G,\msg,\{\Sigma_e\},\{f_e\})$
%and $\Phi'=(G',\msg',\{\Sigma'_e\},\{f'_e\})$, where $\nwc$ will be the original network code and $\nwc'$ the network code in the DAG.
%Note that the composition of two weak homomorphisms is a weak homomorphism.  
%\begin{defn} \label{def:homomorphism}
%\label{1a} \label{1b}
%\label{1c} \label{2a} \label{2b} 
%A \emph{network code homomorphism} $h : \Phi' \rightarrow \Phi$ consists of:
%\begin{enumerate}
%\item A graph homomorphism\footnote{Note that, in contrast to the usual definition of graph homomorphism, we allow $h(u')=h(v')$ for an edge $e'=(u',v')$ even if the vertex $h(u')$ doesn't have a self-loop.} $h: G' \rightarrow G$ \st for every edge $e' =(u',v') \in E'$, either $h$ maps $u'$ and $v'$ to the same vertex $v \in V$, or to the endpoints of an edge $(u,v) \in E$; and $h$ restricts to a bijection between $S$ and $S'$.
%\item A bijection $h^* : \msg \rightarrow \msg'$ and functions $h_{e'}^* : \Sigma_{h(e')} \rightarrow \Sigma_{e'}$ $\forall\;e' \in E' \cup S'$ such that $h_{s'}^*$ is a bijection for all $s' \in S'$
%and $h_{e'}^*(f_{h(e')}(m)) = f_{e'}(h^*(m))$ for all
%$e' \in E' \cup S'$ such that $h(e') \not\in V$.  
%\end{enumerate}
%\end{defn}

\begin{definition} \label{def:ser} 
\label{2c}
A network code $\Phi$ is serializable if for all $e \in E$ there exists a set of alphabets $\Sigma_e^{(1...k)} = \left\{\Sigma_e^{(1)}, \Sigma_e^{(2)}, \ldots, \Sigma_e^{(k)}\right\}$ and a set of functions $f_e^{(1...k)} = \left\{f_e^{(1)}, f_e^{(2)}, \ldots, f_e^{(k)}\right\}$ such that 
\begin{enumerate}
\item \label{def:ser:1}
$f_e^{(i)}: \msg \rightarrow \Sigma_e^{(i)}$, 
\item \label{def:ser:2}
$\forall \; m_1, m_2 \in \msg$, if $f_e(m_1) = f_e(m_2)$, then $\forall i, \; f_e^{(i)}(m_1) = f_e^{(i)}(m_2)$,
\item \label{def:ser:3}
$\forall \; m_1, m_2 \in \msg$, if $f_e(m_1) \ne f_e(m_2)$, then $\exists i, \; f_e^{(i)}(m_1) \ne f_e^{(i)}(m_2)$, and 
\item \label{def:ser:4}
$\forall \; m \in \msg$, $e \in E$, $j \in \{1\ldots k\}$ there is some function $h_e^{(j)}$ such that 
$$
\cf_e^{(j)}(m) = h_e^{(j)}\left(\prod_{\hat{e} \in \In(e)} f_{\hat{e}}^{(1..j-1)}\right).
\footnote{Throughout this paper, when the operator $\prod$ is applied to functions rather then sets we mean it to denote the operation of forming an ordered tuple from an indexed list of elements.} 
$$
\end{enumerate}
We call such a $\Sigma_e^{(1...k)}, f_e^{(1...k)}$ a \emph{serialization} of $\Phi$. 
\end{definition}

The function $f_e^{(i)}$ describes the information sent on edge $e$ at time step $i$.  Item 2 requires that together the functions $f_e^{(1..k)}$ send no more information than $f_e$ and Item 3 requires that $f_e^{(1..k)}$ sends at least as much information as $f_e$.  Item 4 requires that we can compute $f_e^{(j)}$ given the information sent on all of $e$'s incoming edges at previous time steps.

In working with network codes, we will occasionally want to 
compare two network codes $\nwc, \, \nwc'$ such that $\nwc'$
``transmits all the information that is transmitted by $\nwc$.''  
In this case, we say that $\nwc'$ is an extension of $\nwc$,
and $\nwc$ is a restriction of $\nwc'$.

\begin{definition}
Suppose that $\Phi=(G,\msg,\{\Sigma_e\},\{\cf_e\})$ and
$\Phi'=(G,\msg,\{\Sigma'_e\},\{\cf'_e\})$ are two network
codes with the same sourced graph $G$ and the same message set
$\msg$.  We say that $\Phi$ is a restriction of $\Phi'$,
and $\Phi'$ is an extension of $\Phi$, if it is the case 
that for every $m \in \msg$ and $e \in E$, the value of 
$\cf'_e(m)$ completely determines the value of $\cf_e(m)$;
in other words, $\cf_e = g_e \circ \cf'_e$ for some 
function $g_e : \Sigma'_e \rightarrow \Sigma_e.$
\end{definition}
%
%
%<<<<<<< defs.tex
%When a code $\Phi$ is not serializable, it might be that some subset of the code is serializable.  
%This motivates the following
%definition.
%
%\begin{defn}
%For a network codes $\nwc = \networkcode{}$ and $\Phi' = \networkcode{'}$ we say that $\Phi'$ is a \emph{restriction} of $\Phi$ and $\Phi$ is an extention of $\Phi'$ if for all $m \in \msg$ $f'_e(m)$ is computable from $f_e(m)$.  $\Phi'$ is a \emph{serializable restriction} of $\Phi$ if $\Phi'$ is serializable, and likewise, $\Phi$ is a \emph{serializable restriction} of $\Phi'$ if $\Phi$ is serializable.    
%\end{defn}
%
%\Anote{This needs some lead in sentance but I don't know what to say}
The entropic vector of a network code gives a non-negative value for each subset of a network code.  The value of an edge set $F$ is the Shannon entropy of the joint distribution of the random variables associated with each element of $F$, as is formalized in the following definition.

\begin{defn}
Given a network code $\nwc = \networkcode{}, G = (V, E, S)$,
% and an edge set $F = \{e_1, e_2, \ldots, e_j\}\subset E \cup S$, 
the \emph{entropic vector} of $\nwc$ has coordinates $H(F)$ defined for each edge set $F = \{e_1,\ldots,e_j\} \subseteq E \cup S$ by:
$$ H(F) = H(e_1e_2\ldots e_j) =  \sum_{x_1 \in \Sigma_{e_1}, x_2 \in \Sigma_{e_2}, \ldots, x_j \in \Sigma_{e_j}} -p(x_1,x_2,\ldots, x_j)\log(p(x_1,x_2,\ldots, x_j)),$$ where the probabilities are computed assuming a uniform distribution over $\msg$.
\end{defn}

\section{Serializability and Entropy Inequalities}
\label{sec:2cycle}
Constraints imposed on the entropic vector alone suffice to characterize serializability for DAGs, but, the addition of one cycle causes the entopic vector to be an insufficient characterization.  We show that the entropic vector is not enough to determine serializability even on the 2-cycle by giving a serializable and non-serializable code with the same entropic vector.  
   
   \begin{figure}[h!]
   \centering
     \subfigure[Serializable]{
          \label{fig:same_Ha}
          \includegraphics[width=.4\textwidth]{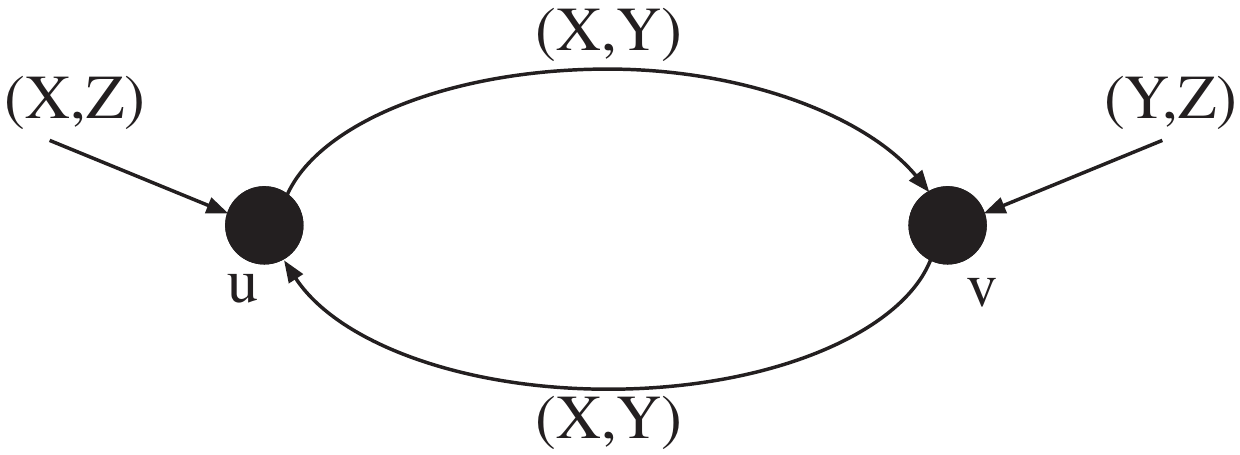}}
     \subfigure[Not Serializable]{
          \label{fig:same_Hb}
          \includegraphics[width=.4\textwidth]{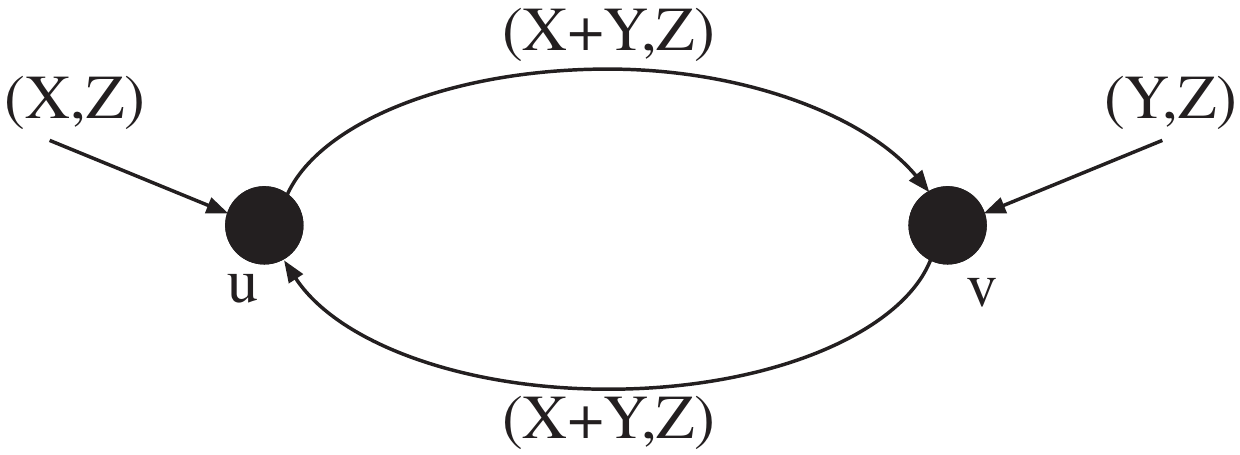}}\\
     \caption{Two network codes with the same entropy function.}
     \label{fig:same_H}
    \end{figure}

The two codes illustrated in Figure \ref{fig:same_H} apply to the message tuple $(X,Y,Z)$, where $X,Y,Z$ are uniformly distributed random variable over $\field_2$.  It is easy to check that the entropy of every subset of corresponding source and edge functions is the same, and thus the codes have the same entropic vector.  The code in Figure \ref{fig:same_Ha} is clearly serializable: at time step one we send $X$ on $(u,v)$ and $Y$ on $(v,u)$; then $Y$ on $(u,v)$ and $X$ on $(v,u)$.  On the other hand, the code in Figure \ref{fig:same_Hb} is not serializable because, informally, to send $X+Y$ on the top edge  requires that we already sent $X+Y$ on the bottom edge, and vice versa.  A formal proof that the code in Figure \ref{fig:offsetb} is not serializable can be obtained by applying the characterization of serializability in Theorem~\ref{thm:tfae-general}.

In order to use entropy inequalities to give tight upper bounds on network coding rates, we need
% don't need a set of inequalities that are necessary and sufficient, but rather 
an enumeration of the complete set of entropy inequalities implied by serializability, i.e.~a list of necessary and sufficient conditions for a vector $V$ to be the entropic vector of a serializable code.  (Note that it need not be the case that every code whose entropic vector is $V$ must be serializable.)  For the 2-cycle we can enumerate the complete set of inequalities.  In particular, we give four inequalities that must hold for any serializable code on the 2-cycle: two are a result of \emph{downstreamness} which is a condition that must hold for all graphs (it says that the entropy of the incoming edges of a vertex must be at least as much as the entropy of the incoming and outgoing edges together), the third is the \emph{Chicken and Egg} inequality due to \cite{CE}, and the fourth is a new inequality that we call the \emph{greedy} inequality.  It is equivalent to being able to complete the first iteration of our greedy algorithm in Section \ref{sec:serialize}.  We show that these four inequalities together with Shannon's inequalities are the only inequalities implied by serializability, in the following sense:

\begin{thm}
Given a rational-valued entropic vector, $V$, of a 2-cycle on nodes $u,v$, with source $x$ into node $u$, source $y$ into node $v$, and edges $a = (u,v)$ and $b = (v,u)$, there exists a serializable code that realizes $cV$, for some constant $c$, if and only if $V$ satisfies Shannon's inequalities, downstreamness ($H(abx) = H(bx)$, $H(aby) = H(ay)$), the Chicken and Egg inequality ($H(ab) \geq H(abx) - H(x) + H(aby) - H(y)$), and the greedy inequality ($H(a) + H(b) > H(ax) - H(x) + H(by) - H(y)$ when $H(a) + H(b) \ne 0$).
\label{thm:2cycle}
\end{thm}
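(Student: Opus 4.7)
The proof has two halves. Necessity is mostly classical: Shannon's inequalities are universal, downstreamness is immediate from Definition~\ref{def:nwc} because $a$ is a deterministic function of $\In(a)=\{b,x\}$ (hence $H(abx)=H(bx)$, and symmetrically for $aby$), and the Chicken and Egg inequality specializes from~\cite{CE}. The only genuinely new ingredient is the greedy inequality, which I would prove by a first-step argument. Rewrite the bound as $I(a;x)+I(b;y)>0$ whenever $H(a)+H(b)>0$ and fix any serialization. By Definition~\ref{def:ser}(4), the round-$1$ transmission $f_a^{(1)}$ is computable only from what $u$ has before any message has arrived, namely $x$, so $f_a^{(1)}$ is a function of $x$; by Definition~\ref{def:ser}(2) it is also a function of $a$. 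If $I(a;x)=0$, any function of $x$ that is simultaneously a function of an independent $a$ must be constant, and symmetrically $f_b^{(1)}$ is constant. An induction on the round number then shows that if $I(a;x)=I(b;y)=0$, every $f_a^{(t)}$ and $f_b^{(t)}$ is constant, because the inputs to $f_a^{(t)}$ (namely $x$ together with $f_b^{(<t)}$) reduce at every stage to $x$ alone. Definition~\ref{def:ser}(3) then forces $f_a$ and $f_b$ themselves to be constant, contradicting $H(a)+H(b)>0$.

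For sufficiency, the plan is to construct an explicit serializable linear code realizing $cV$ for some positive integer $c$. Clearing denominators, assume $V$ is integer-valued. I would decompose $V$ into non-negative integer bit counts describing the contents of a round-by-round protocol: how many bits of each information-theoretic type (bits of $x$ sent only on $a$; bits of $y$ sent only on $b$; bits appearing in both $a$ and $b$; bits correlated with the source at the opposite node; etc.) the protocol transmits. Shannon's inequalities plus the two downstreamness equalities guarantee these counts are non-negative and that the atoms forbidden by downstreamness (informally, $\{a\},\{a,y\},\{b\},\{b,x\}$ in I-measure language) carry zero weight; the two remaining inequalities are then exactly what convert these bit counts into a realizable protocol.

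The main obstacle is handling the cyclic bits that must appear in both $a$ and $b$ but in neither source. A naive assignment of a fresh independent bit to the $\{a,b\}$-atom is not serializable, since neither $u$ nor $v$ knows such a bit a priori. My plan is to represent each cyclic bit as an XOR of two contributions: one that $u$ derives from $x$ together with messages already received on $b$ (so it lives in an atom containing $x$) and a symmetric one from $v$'s side. The Chicken and Egg inequality is precisely the budget guaranteeing enough such ingredients on both sides -- in atom notation, $A_{\{a,b\}} \leq A_{\{x,y,a\}}+A_{\{x,y,b\}}+A_{\{x,y,a,b\}}$. A round-robin schedule, bootstrapped by the greedy inequality in round $1$ (which forces at least one side to have a nontrivial first transmission whenever $H(a)+H(b)>0$), then serializes the whole code. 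Verifying that this linear rewrite reproduces all $2^{4}-1$ subset entropies of $cV$ reduces to a dimension-of-span calculation and is routine bookkeeping. The hard part is the cyclic-atom resolution and first-round bootstrap; once those are in place, the induction closes the argument, and the characterization becomes tight exactly at the boundary instances of the C\&E and greedy inequalities.
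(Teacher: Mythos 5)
Your necessity argument is essentially the same as the paper's. For the greedy inequality, the paper also pivots on the first nonzero round: it takes the minimal $i^*$ with $f_{e^*}^{(i^*)}$ nonzero, shows $H(f_{e^*}^{(i^*)}\mid x)=0$, and derives a contradiction from $H(a)=H(a\mid x)$. Your mutual-information phrasing and the induction that all $f_a^{(t)},f_b^{(t)}$ are constant are just the contrapositive of the same idea, and the key step (a function of $x$ that is also a function of an independent $a$ has zero entropy, by data processing) is sound.

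The sufficiency direction is where the proposal breaks down. The decomposition into nonnegative atom weights does not go through: for four random variables, Shannon plus the downstreamness equalities only zero out the atoms $\{a\},\{a,y\},\{b\},\{b,x\}$ (each of these is $H(\cdot\mid\text{rest})\ge 0$ or a conditional mutual information $\ge 0$, and they come in pairs summing to zero). The three- and four-element atoms — in particular $I(a;x;y\mid b)$, $I(b;x;y\mid a)$, $I(a;b;x\mid y)$, $I(a;b;y\mid x)$, and $I(a;b;x;y)$ — are \emph{not} sign-constrained by Shannon, so you cannot interpret them directly as nonnegative bit budgets. Concretely, the quantity $H(a)-H(a\mid x)-H(a\mid y)$, which in your atom notation equals $\mu(\{a,x,y\})+\mu(\{a,b,x,y\})-\mu(\{a,b\})$, can take either sign, and the paper's construction has to branch on exactly this: it splits into four cases according to the signs of $H(a)-H(a\mid x)-H(a\mid y)$ and $H(b)-H(b\mid x)-H(b\mid y)$ and exhibits a \emph{different} explicit code in each case (uncoded bits plus $Z$-XORs when both are nonnegative; an $X$-with-$Y$ XOR on one edge when one is negative; an offset pattern of $X$-with-$Y$ XORs on both edges, like Figure~\ref{fig:offsetb}, when both are negative). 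Your XOR/round-robin sketch is the right intuition for the all-nonnegative case, but it does not by itself resolve the negative-atom cases, and the "routine bookkeeping" remark skips precisely the part of the argument where most of the work lies. You also need, as the paper does, the auxiliary observations $H(b\mid x)\ge H(a\mid x)$, $H(a\mid y)\ge H(b\mid y)$, and the two-sided bounds on $H(a),H(b),H(ab)$ (Observations~\ref{obs:bx_g_ax}--\ref{obs:boundHab}) to pin down the free parameters $f,g,h$ before a construction can be written. Without those, the claim that every vector passing the four tests is realized by some serializable code is unsubstantiated.
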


Multiplication of the vector by a constant $c$ is a natural relaxation because the theorem becomes oblivious to the base of the logarithm we use to compute the Shannon entropy.

The proof of Theorem \ref{thm:2cycle} involves considering four cases corresponding to the relationship between $H(a)$ and $H(ax), H(ay), H(x), H(y)$ and between $H(b)$ and $H(bx), H(by) H(x), H(y)$.  Each case requires a distinctly different coding function to realize the entropic vector.  All the coding functions are relatively simple, involving only sending uncoded bits, and the XOR of two bits.  Most of the work is limiting the values of coordinates of the entropic vector based on the inequalities that the entropic vector satisfies.  The proof of Theorem \ref{thm:2cycle} is provided in Appendix \ref{ap:2cycle}.

The big open question left from this work is whether we can find a complete set of constraints on the entropic vector implied by the serializability of codes on arbitrary graphs.  We currently do not know of any procedure for producing such a list of inequalities.  
%\footnote{The greedy algorithm we present in Section \ref{sec:serialize} implicitly implies that a large set of inequalities must hold - similar in nature to our greedy inequality, but one for each iteration.  These inequalities are not constraints on the entropic vector because they involve comparing the entropy of common information which cannot be determined from entropy alone, but one needs to know the actual random variables.}  
Even if we had a conjecture for such a list, showing that it is complete is likely to be quite hard.  If we have more than three sources, just determining the possible dependencies between sources is difficult because they are subject to non-Shannon information inequalities.

 \section{A Characterization of Serializability}
\label{sec:serialize}

\subsection{Linear Codes}
\label{sec:linear}

A characterization of serializability for linear network codes is simpler than the general
case because it relies on more standard algebraic 
tools.  Accordingly, we treat this case first before moving
on to the general case.  Throughout this section
we use $V^*$ to denote the dual of a vector space
over a field $\field$, and $f^*$ to denote the adjoint
of a linear transformation $f$.  For an edge $e$
with alphabet $\Ealph$ and coding function $f_e :
\msg \rightarrow \Ealph$, we use $T_e$ to denote
the linear subspace $f_e^*(\Ealph^*) \subseteq \msg^*$.

Though it is impossible to characterize the serializability of a network code in terms of its entropic vector, computationally there is a straightforward solution.  In polynomial time we can either determine a serialization for a code or show that no serialization exists using the obvious algorithm: try to serialize the code by ``sending new information when possible.''  When we can no longer send any new information along any edge we terminate.  If we have sent all the information required along each edge, then the greedy algorithm finds a serialization; otherwise, we show that no serialization exists by presenting a succinct certificate of non-serializability.  Though our algorithm is straightforward, we believe that the change in mindset from characterizing codes in terms of the entropic vector is an important  one, and that our certificate of non-serializability (see Definition~\ref{def:iv}) furnishes an effective tool for addressing other questions about serializability, as we shall see in later sections.  
%And, the ease of characterizing serializability in this way is surprising given how difficult it is to find constriants on the entropic vector implied by serializability.  

Given a network code $\nwc = \networkcode{}$, with coding functions over the field $\field$, our greedy algorithm (pseudocode, \CallSer, is given
in Appendix~\ref{ap:linear}), constructs a set of edge functions $f^{(1..k)}_e$ and alphabets $\Sigma_e^{(1..k)}$ for each edge.  
These objects are constructed iteratively, defining the edge alphabets $\Sigma_e^{(i)}$ and coding functions $f^{(i)}_e$ in the $i^{th}$ iteration. 
Throughout this process,
we maintain a pair of linear subspaces $A_e,\, B_e \subseteq \msg^*$ 
for each edge $e=(u,v)$ of $G$.  $A_e$ is the
linear span\footnote{If $\{V_i : i \in \mathcal{I}\}$ is a collection of linear subspaces of a vector space $V$, their linear span is the minimal linear subspace containing the union $\bigcup_{i \in \mathcal{I}} V_i.$  We denote the linear span by $\spn_{i \in \mathcal{I}} V_i.$} of all the messages transmitted on $e$ so far,
and $B_e$ is intersection of $T_e$ with 
the linear span of all the messages transmitted to
$u$ so far.  (In other words, $B_e$ spans all the messages that could currently be sent on $e$ without receiving
any additional messages at $u$.)  In the $i^{th}$ iteration, if there exists an edge $e'$ such that $B_{e'}$ contains a 
dual vector $x_{e'}$ that does not belong to $A_{e'}$, then we create coding function $f_e^{(i)}$ for all $e$.  The coding function of $f_{e'}^{(i)}$ is set to be $x_{e'}$ and its alphabet is set to be 
$\field$.  For all other edges we set $f_e^{(i)} = 0$.  This process
continues until $B_e=A_e$ for every $e$.  At that point,
we report that the code is serializable if and only if
$A_e=T_e$ for all $e$.  At the end, the algorithm returns
the functions $f_e^{(1..k)}$ and the alphabets $\Sigma_e^{(1..k)}$, where $k$ is the number of iterations of the algorithm, as well as the subspaces $\{A_e\}$.  If the code was not serializable, then $\{A_e\}$ is
interpreted as a certificate of non-serializability
(a ``non-trivial information vortex'')
as explained below.

\CallSer runs in time polynomial in the size of the coding functions of $\nwc$.  In every iteration of the while loop we increase the dimension of some $A_e$ by one.  $A_e$ is initialized with dimension zero and can have dimension at most $\dim(T_e)$.  Therefore, the algorithm goes through at most $\sum_{e \in E} \dim(T_e)$ iterations of the while loop.  Additionally, each iteration of the while loop, aside from constant time assignments, computes only intersections and spans of vector spaces, all of which can be done in polynomial time.

To prove the algorithm's correctness, we define the following certificate of non-serializability.

\begin{defn} \label{def:iv}
An \emph{information vortex (IV)} of a network code consists
of a linear subspace $W_e \subseteq \msg^*$ for each edge
$e$, such that:
\begin{enumerate}
\item  For a source edge $s$,
$W_s = T_s.$
\item  For every other edge $e$,
$
W_e = T_e \intsct \left( \Spn_{e' \in \In(e)} W_{e'} \right).
$
\end{enumerate}
An information vortex is \emph{nontrivial} if 
$W_e \ne T_e$ for some edge $e$.
\end{defn}

We think of $W_e$ as the information that we can send over $e$ given that its incoming edges, $e' \in In(e)$, can send $W_{e'}$.  In our analysis of the greedy algorithm, we show that the messages the greedy algorithm succeeds in sending (i.e.,~the linear subspaces $\{A_e\}$) form an IV and it is non-trivial if and only if the code isn't serializable.

The following theorem shows the relationship between IVs, serialization, and the greedy algorithm.  The proof can be found in Appendix~\ref{ap:linear}.

\begin{thm}
For a network code $\nwc = \networkcode{}$, the following are equivalent:
\begin{enumerate}
\item $\nwc$ is not serializable
\item \CallSer returns $\{A_e\}$ \st $\exists \, e, A_e \ne T_e$
\item $\nwc$ has a non-trivial information vortex
\end{enumerate}
\label{thm:TFAE}
\end{thm}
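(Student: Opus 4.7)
The plan is to establish the three-way equivalence via the cycle $(2) \Rightarrow (3) \Rightarrow (1) \Rightarrow (2)$. The unifying idea is that the subspaces $\{A_e\}$ produced by \CallSer record exactly the dual vectors that any bottom-up greedy serialization can successfully push across each edge; at termination this family satisfies the IV axioms, and it either fills every $T_e$ (in which case the algorithm's output is a genuine serialization) or certifies an unbreakable circular dependence.

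For $(2) \Rightarrow (3)$, I will maintain two loop invariants of \CallSer: for every source $s$, $A_s = T_s$ throughout; and for every ordinary edge $e$, one has $A_e \subseteq T_e$ together with $B_e = T_e \Intsct \left( \Spn_{e' \in \In(e)} A_{e'} \right)$. The termination condition $B_e = A_e$ then reads $A_e = T_e \Intsct \left( \Spn_{e' \in \In(e)} A_{e'} \right)$, which is precisely the second IV axiom. Combined with the source invariant, this shows $\{A_e\}$ is an information vortex, non-trivial exactly when some $A_e \ne T_e$.

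For $(3) \Rightarrow (1)$, assume a non-trivial IV $\{W_e\}$ and, toward a contradiction, fix a serialization $\{(\Sigma_e^{(i)}, f_e^{(i)})\}_{i=1}^{k}$ of $\nwc$. Define the cumulative subspace $L_e^{(i)} = \Spn_{j \le i} (f_e^{(j)})^{*}(\Sigma_e^{(j)*}) \subseteq \msg^{*}$ of dual information transmitted on $e$ through step $i$. I will prove by induction on $i$ that $L_e^{(i)} \subseteq W_e$ for every edge $e$. For source edges the claim is immediate since $W_s = T_s$. For an ordinary edge $e$, Item~\ref{def:ser:4} of Definition~\ref{def:ser} forces $(f_e^{(i)})^{*}(\Sigma_e^{(i)*}) \subseteq \Spn_{e' \in \In(e)} L_{e'}^{(i-1)}$, while Item~\ref{def:ser:2} places this image inside $T_e$; the inductive hypothesis therefore yields $(f_e^{(i)})^{*}(\Sigma_e^{(i)*}) \subseteq T_e \Intsct \left( \Spn_{e' \in \In(e)} W_{e'} \right) = W_e$, and adjoining $L_e^{(i-1)} \subseteq W_e$ closes the step. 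At $i = k$, Items~\ref{def:ser:2} and \ref{def:ser:3} force $L_e^{(k)} = T_e$, so $T_e \subseteq W_e$ for every $e$, contradicting non-triviality.

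For $(1) \Rightarrow (2)$, I prove the contrapositive: if \CallSer terminates with $A_e = T_e$ for every $e$, the functions $\{f_e^{(1..k)}\}$ it constructs form a serialization. Items~\ref{def:ser:1}--\ref{def:ser:3} follow from the construction, since each $f_e^{(i)}$ is the functional attached to a single dual vector in $T_e$ and the subspaces $A_e$ exhaust $T_e$. The substantive check is Item~\ref{def:ser:4}: at iteration $i$ the algorithm selects $x_e \in B_e = T_e \Intsct \left( \Spn_{e' \in \In(e)} A_{e'} \right)$ and, by invariant, each $A_{e'}$ at that moment is spanned by the functionals transmitted on $e'$ in iterations $1, \ldots, i-1$, so $f_e^{(i)}$ is a function of $\prod_{e' \in \In(e)} f_{e'}^{(1..i-1)}$. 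The main obstacle is the inductive argument in $(3) \Rightarrow (1)$, which must leverage all three structural conditions of Definition~\ref{def:ser} in concert with the IV identity, while handling source and ordinary edges uniformly.
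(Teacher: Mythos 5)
Your cycle $(2)\Rightarrow(3)\Rightarrow(1)\Rightarrow(2)$ is the same cycle the paper proves, and the arguments for $(2)\Rightarrow(3)$ (the loop invariants $A_s=T_s$, $B_e=T_e\Intsct(\Spn_{e'\in\In(e)}A_{e'})$, and $B_e=A_e$ at termination) and for $(1)\Rightarrow(2)$ in contrapositive form (the algorithm's output is itself a serialization, with Item~\ref{def:ser:4} guaranteed because $x_e\in B_e$) coincide with the paper's proof. The difference is in $(3)\Rightarrow(1)$, and there is a genuine gap there.

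Your induction tracks the subspaces $L_e^{(i)}=\Spn_{j\le i}(f_e^{(j)})^{*}(\Sigma_e^{(j)*})$, but the adjoint $(f_e^{(j)})^{*}$ is only defined when $f_e^{(j)}$ is a linear map. Definition~\ref{def:ser} does not require the serialization functions $f_e^{(j)}$ or the decoders $h_e^{(j)}$ to be linear, even when the underlying code $\nwc$ is linear; so to prove non-serializability you must rule out \emph{all} serializations, not just the linear ones. If one tries to salvage your induction by replacing $(f_e^{(j)})^{*}(\Sigma_e^{(j)*})$ with the subspace of linear functionals that are constant on the level sets of $f_e^{(1..j)}$, the crucial inclusion $L_e^{(i)}\subseteq\Spn_{e'\in\In(e)}L_{e'}^{(i-1)}$ can fail: for instance, on $\msg=\field_2^2$, the functions $m_1\vee m_2$ and $m_1\wedge m_2$ each admit only the zero linear functional constant on their level sets, yet their common refinement admits $e_1^{*}+e_2^{*}$. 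So Item~\ref{def:ser:4} plus the inductive hypothesis does not give the containment you need once the $f_e^{(j)}$ are allowed to be nonlinear.

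A correct version must argue at the level of message pairs rather than dual subspaces. One can show by induction on $i$ that for every edge $e$ and every $m_1,m_2\in\msg$ with $m_1-m_2\in W_e^{\perp}$, one has $f_e^{(j)}(m_1)=f_e^{(j)}(m_2)$ for all $j\le i$. In the inductive step, given $m_1-m_2\in W_{e}^{\perp}=T_{e}^{\perp}+\bigcap_{\hat e\in\In(e)}W_{\hat e}^{\perp}$, write $m_1-m_2=a+b$ accordingly and set $m_1'=m_1-a$; then $f_e(m_1)=f_e(m_1')$ (here linearity of the \emph{code}, not of the serialization, is what is used), so $f_e^{(i)}(m_1)=f_e^{(i)}(m_1')$ by Item~\ref{def:ser:2}, while $m_1'-m_2=b\in\bigcap W_{\hat e}^{\perp}$ together with the inductive hypothesis and Item~\ref{def:ser:4} gives $f_e^{(i)}(m_1')=f_e^{(i)}(m_2)$. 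Concluding as you do, by Items~\ref{def:ser:2} and \ref{def:ser:3} the full serialization must distinguish a pair $m,0$ with $m\in W_e^{\perp}\setminus T_e^{\perp}$ for any $e$ with $W_e\subsetneq T_e$, which contradicts the invariant. (As an aside, the paper's own write-up of this step asserts that $m_1^{*},m_2^{*}\in W_{e^{*}}^{\perp}$ implies $m_1^{*},m_2^{*}\in W_{e'}^{\perp}$ for $e'\in\In(e^{*})$, which goes the wrong way under $\perp$; the coset argument above is what actually makes the implication go through.)
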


In Section \ref{sec:deficit} and Section \ref{sec:asymptotic} we will see that information vortices provide a concise way for proving the non-serializability of a network code.  Moreover, the notion of an information vortex was critical to our discovery of the result in Section \ref{sec:asymptotic}. 

\subsection{General codes}
\label{sec:general}

\renewcommand{\intsct}{\cap}

Our characterization theorem extends to the case of general 
network codes, provided that we generalize the greedy algorithm
and the definition of information vortex appropriately.  
The message space $\msg$ is no longer a vector space, so instead 
of defining information vortices using the vector space $\msg^*$ 
of all linear functions on $\msg$, we use the Boolean algebra
$2^{\msg}$ of all binary-valued functions on $\msg$.
We begin by recalling some notions from the theory of Boolean 
algebras.
\begin{definition} \label{def:balg}
Let $S$ be a set.  The Boolean algebra 
$2^S$ is the algebra consisting of all
$\{0,1\}$-valued functions on $S$, under
{\sc and} ($\wedge$), {\sc or} ($\vee$),
and {\sc not} ($\neg$).
If $f : S \rightarrow T$ is a function, then
the \emph{Boolean algebra generated by $f$},
denoted by $\balg{f}$, 
is the subalgebra of $2^S$ consisting of all
functions $b \circ f$, where $b$ is a 
$\{0,1\}$-valued function on $T$.
If $A_1,A_2$ are subalgebras of a Boolean
algebra $\bal,$ their intersection $A_1 \intsct A_2$
is a subalgebra as well.  Their union is not,
but it generates a subalgebra that we will
denote by $A_1 \spn A_2.$

If $S$ is a finite set and $\bal \subseteq 2^{S}$ is
a Boolean subalgebra, then there is an equivalence 
relation on $S$ defined by setting $x \sim y$ if 
and only if $b(x)=b(y)$ for all $b \in \bal$.
The equivalence classes of this relation are called
the \emph{atoms} of $\bal$, and we denote the set
of atoms by $\atoms(\bal)$.
There is a canonical function $f_{\bal} : S
\rightarrow \atoms(\bal)$
that maps each element to its equivalence class.
Note that $\bal = \balg{f_{\bal}}$.
\end{definition}
The relevance of Boolean subalgebras
to network coding is as follows.  
A subalgebra $\bal \subseteq 2^{\msg}$
is a set of binary-valued 
functions, and can be interpreted as describing
the complete state of knowledge of a party
that knows the value of each of these functions
but no others.  In particular, if
a sender  knows the value of $f(m)$
for some function $f:\msg \rightarrow T$,
then the binary-valued messages this sender
can transmit given its current state of knowledge
correspond precisely to the elements
of $\balg{f}$.  This observation supplies the raw
materials for our definition of the greedy algorithm
for general network codes, which we denote by \GenSer.

As before, the edge alphabets and coding functions
are constructed iteratively, with $\Sigma_e^{(i)}$ and $f^{(i)}_e$ 
defined in the $i^{th}$ iteration of the main loop.
Throughout this process,
we maintain a pair of Boolean subalgebras $A_e,\, B_e \subseteq 2^{\msg}$ 
for each edge $e=(u,v)$ of $G$.  $A_e$ is generated by
all the messages transmitted on $e$ so far,
and $B_e$ is intersection of $\balg{f_e}$ with 
the subalgebra generated by all messages transmitted to
$u$ so far.  (In other words, $B_e$ spans all the binary-valued 
messages that could currently be sent on $e$ without receiving
any additional messages at $u$.)  In the $i^{th}$ iteration, 
if there exists an edge $e'$ such that $B_{e'}$ contains a 
binary function $x_{e'} \not\in A_{e'}$, 
then we create a binary-valued 
coding function $f_e^{(i)}$ for all $e$,
which is set to be $x_{e'}$ if $e=e'$ and the constant
function $0$ if $e \neq e'$.  
This process
continues until $B_e=A_e$ for every $e$.  At that point,
we report that the code is serializable if and only if
$A_e=\balg{f_e}$ for all $e$.  At the end, the algorithm returns
the functions $f_e^{(1..k)}$ and the alphabets $\Sigma_e^{(1..k)}$, 
where $k$ is the number of iterations of the algorithm, as well 
as the subspaces $\{A_e\}$.  The pseudocode for this algorithm
\GenSer\  is presented in  Appendix~\ref{ap:general}.

% as we build up the DAG $G'$ and its coding functions,
% we keep track of a Boolean subalgebra $\bal_v \subseteq 2^{\msg}$ 
% for each vertex $v$, representing the information already known
% to $v$, as well as a Boolean subalgebra $\bal_e \subseteq 2^{\msg}$ 
% for each edge $e$ of $G$, representing the functions
% in $\balg{f_e}$ whose value is determined
% by the information that has already been sent on $e$
% (i.e., the coding functions of the existing DAG edges 
% that map to $e$).
% As long as some $u$ knows the value of a 
% binary-valued function
% whose value is not determined by the information already
% sent on $e=(u,v)$ (i.e. an element of $\bal_u$ but not $\bal_e$), 
% if it is allowed to transmit that function on $e$ (i.e. it
% belongs to $\balg{f_e}$) then it does so.  In other
% words, the algorithm terminates only when $\bal_e = 
% \bal_u \intsct \balg{f_e}$ for all $e \in E$.  
% At termination it outputs ``serializable'' if 
% $\bal_e = \balg{f_e}$ for all $e \in E$, otherwise
% ``not serializable.''
% The pseudocode for the algorithm itself is extremely
% similar to the pseudocode for Algorithm~\ref{alg:greedy-linear},
% and is presented in full in Appendix~\ref{ap:general}.  

If $\Phi$ has finite alphabets, then \GenSer\ must 
terminate because the
total number of atoms in all the Boolean algebras
$\bal_e \; (e \in E)$ is strictly increasing
in each iteration of the main loop, so 
$\sum_{e \in E} |\Sigma_e|$ is an upper bound on
the total number of loop iterations.  In implementing
the algorithm, each 
of the Boolean algebras can be represented as
a partition of $\msg$ into atoms, and all of the
operations the algorithm performs on Boolean
algebras can be implemented in polynomial time
in this representation.  Thus, the running time
of \GenSer\ is polynomial in $\sum_{e \in E} |\Sigma_e|.$
In light of the algorithm's termination 
condition, the following definition is natural.
\begin{definition} \label{def:gdiv}
If $G=(V,E,S)$ is a sourced graph, 
a \emph{generalized information vortex (GIV)}
in a network code $\Phi = (G,\msg,\Sigmas,\fs)$
is an assignment of Boolean subalgebras $\bal_e \subseteq 2^{\msg}$
to every $e \in E \cup S$, satisfying:
\begin{enumerate}
\item  \label{gdiv-s} $\bal_s = \balg{f_s}$ for all $s \in S$;
\item  \label{gdiv-e} $\bal_e = 
\left( \spn_{\hat{e} \in \In(u)} \bal_{\hat{e}} \right)
\intsct \balg{f_e}$ 
for all $e = (u,v) \in E$.
% \item  \label{gdiv-v} $\bal_v = \spn_{e \in \In(v)} \bal_e$ for all $v \in V$.
\end{enumerate}
A GIV is \emph{nontrivial} if
$\bal_e \neq \balg{f_e}$ for some $e \in E$.
A tuple of Boolean subalgebras
$\gdiv = (\bal_e)_{e \in E \cup S}$ is a 
\emph{semi-vortex} if it satisfies (\ref{gdiv-s})
but only satisfies one-sided containment in (\ref{gdiv-e}), i.e.,
\begin{enumerate} \setcounter{enumi}{2}
\item  \label{semi-vortex}
$\bal_e \subseteq
\left( \spn_{\hat{e} \in \In(u)} \bal_{\hat{e}} \right)
\intsct \balg{f_e}$ for
all $e = (u,v) \in E$.  
\end{enumerate}
If $\gdiv = (\bal_e)$ and $\ggdiv = (\bal'_e)$ 
are semi-vortices, we say that $\gdiv$
is contained in $\ggdiv$ if $\bal_e \subseteq \bal'_e$
for all $e$.
\end{definition}
In Appendix~\ref{ap:general} we prove a series of statements
(Lemmas~\ref{lem:restriction}-\ref{lem:termination})
showing that:
\begin{itemize}
\item Semi-vortices are in one-to-one  correspondence with
restrictions of $\nwc$.  The correspondence maps a 
semi-vortex $(\bal_e)_{e \in E \cup S}$ to the network
code with edge alphabets $\atoms(\bal_e)$
and coding functions given by the canonical maps
$\msg \rightarrow \atoms(\bal_e)$ defined in
Definition~\ref{def:balg}.
\item There is a set of semi-vortices
corresponding to serializable restrictions of 
$\nwc$ under this correspondence.  
They can be thought of as representing
\emph{partial serializations} of $\nwc$.
\item  There is a set
of semi-vortices corresponding to GIV's of $\nwc.$
These can be thought of as \emph{certificates
of infeasibility} for serializing $\nwc$.
% \item
% There is a unique semi-vortex $\gdiv$ in the intersection
% of these two sets: it is a GIV which is simultaneously
% a partial serialization and a certificate that no further
% serialization is possible.  It is a maximal
% partial serialization and a minimal GIV, when
% both of those sets are ordered by containment.
% $\Phi$ is serializable if
% and only if this $\gdiv$ is trivial.
\item \GenSer\ computes a semi-vortex $\gdiv$ 
which is both a GIV and a partial serialization.% of $\nwc$.
\end{itemize}
These lemmas combine to yield a 
``min-max theorem'' showing that
the every network code has a maximal serializable
restriction that coincides with its minimal GIV,
as well as an analogue of Theorem~\ref{thm:TFAE};
proofs of both theorems are in Appendix~\ref{ap:general}.
\begin{thm} \label{thm:minmax}
In the ordering of semi-vortices by containment, the
ones corresponding to partial serializations
have a maximal element and the GIV's have a minimal
element.  These maximal and minimal elements coincide, 
and they are both equal to the
semi-vortex $\gdiv=(A_e)_{e \in E \cup S}$ computed by \GenSer.
\end{thm}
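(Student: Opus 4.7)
The plan is to prove Theorem~\ref{thm:minmax} by showing that the semi-vortex $\gdiv = (A_e)$ produced by \GenSer\ simultaneously serves as an upper bound for every partial serialization and a lower bound for every GIV. Per the bulleted summary preceding the theorem (proved in Appendix~\ref{ap:general}), $\gdiv$ is itself both a partial serialization and a GIV. Hence it suffices to establish two containment statements: (i) $\gdiv$ is contained in every GIV, and (ii) every partial serialization is contained in $\gdiv$. Given (i), the unique minimal GIV exists and equals $\gdiv$; given (ii), the unique maximal partial serialization exists and also equals $\gdiv$; so the two extremal semi-vortices coincide.

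For (i), let $\ggdiv = (C_e)$ be an arbitrary GIV and induct on the iteration counter of \GenSer, maintaining that the running algebra $A_e$ is a subalgebra of $C_e$ for every edge $e$. After initialization this holds because $A_s = \balg{f_s} = C_s$ for sources and $A_e$ is trivial for ordinary edges. In a generic iteration the algorithm extends some $A_{e'}$ by a binary function $x_{e'}$ drawn from $B_{e'} = \left(\spn_{\hat e \in \In(u)} A_{\hat e}\right) \intsct \balg{f_{e'}}$, where $e' = (u,v)$. By the inductive hypothesis $A_{\hat e} \subseteq C_{\hat e}$, so $x_{e'}$ lies in $\left(\spn_{\hat e \in \In(u)} C_{\hat e}\right) \intsct \balg{f_{e'}}$; the GIV equality axiom on $\ggdiv$ then gives $x_{e'} \in C_{e'}$, completing the step.

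For (ii), suppose $\ggdiv = (\bal'_e)$ is a partial serialization with serialization $\{g_e^{(j)}\}_{j=1}^K$ of the corresponding restriction $\Phi'$. I induct on the time step $j$, showing that $\balg{g_e^{(j)}} \subseteq A_e$ for every edge $e \in E$. Item~\ref{def:ser:4} of Definition~\ref{def:ser}, combined with the source values available at the tail $u$ of $e$, implies that $g_e^{(j)}$ factors through the tuple of messages that have reached $u$ by time $j-1$, so $\balg{g_e^{(j)}} \subseteq \spn_{\hat e \in \In(u)} \balg{g_{\hat e}^{(1..j-1)}}$ joined with the source algebras at $u$; by the inductive hypothesis and $A_s = \balg{f_s}$, this is a subalgebra of $\spn_{\hat e} A_{\hat e}$, where the join ranges over both ordinary and source incoming edges into $u$. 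Simultaneously, Item~\ref{def:ser:2} forces $g_e^{(j)}$ to be a coarsening of $f'_e$, and $f'_e$ is itself a coarsening of $f_e$, so $\balg{g_e^{(j)}} \subseteq \balg{f_e}$. Combining these two containments with the GIV identity $A_e = \left(\spn_{\hat e} A_{\hat e}\right) \intsct \balg{f_e}$ yields $\balg{g_e^{(j)}} \subseteq A_e$. Taking the join over all $j$ and invoking Item~\ref{def:ser:3} gives $\bal'_e = \balg{f'_e} = \spn_j \balg{g_e^{(j)}} \subseteq A_e$, completing the proof.

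The hard part is the bookkeeping in (ii): threading the source algebras through the notation of Definition~\ref{def:ser} (whose $\In(u)$ omits source edges) and verifying that the join of the time-slice algebras $\balg{g_e^{(j)}}$ indeed reconstructs $\balg{f'_e}$, which is what Items~\ref{def:ser:2} and~\ref{def:ser:3} say in Boolean-algebra language. Once that bookkeeping is in place, both inductions are routine and the min-max equality drops out of the GIV fixed-point identity that $\gdiv$ satisfies at termination of \GenSer.
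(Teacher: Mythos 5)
Your proof is correct, but the route you take differs from the paper's in an instructive way. The paper proves a single comparison lemma (Lemma~\ref{lem:comparison}): for \emph{any} serializable restriction $\Phi''$ and \emph{any} GIV $\ggdiv$, one has $\gdiv[\Phi''] \subseteq \ggdiv$. That lemma is proved by exactly the serialization-time-step induction you give in step (ii). The paper then applies it twice, exploiting the fact that $\gdiv$ is simultaneously the semi-vortex of a serializable restriction (Lemma~\ref{lem:semi-vortex}) and a GIV (Lemma~\ref{lem:termination}): taking $\ggdiv = \gdiv$ gives your (ii), and taking $\Phi'' = \Phi'[\gdiv]$ gives your (i). You instead prove (i) from scratch by a second induction, on the iteration counter of \GenSer, rather than by reusing the comparison lemma. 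Your algorithmic induction in (i) is valid (it essentially re-derives the ``$\gdiv$ is minimal'' direction as a consequence of the invariant $B_e = \balg{f_e} \intsct \spn_{\hat{e}} A_{\hat{e}}$ from Lemma~\ref{lem:semi-vortex}), so the argument is sound. What the paper's factoring buys is economy and a stronger stand-alone structural fact: the single comparison lemma states a clean antichain relation between the family of serialization semi-vortices and the family of GIV semi-vortices, of which the min-max theorem is a corollary. Your version requires two distinct inductions and yields the containments only for the specific element $\gdiv$, though that is all the theorem requires. One small bookkeeping note you flagged yourself: Definition~\ref{def:ser} item~\ref{def:ser:4} writes the domain of $h_e^{(j)}$ as a product over $\In(e)$, and $\In$ as literally written in Section~\ref{sec:defs} ranges only over $E$; your insertion of the source algebras at the tail of $e$ in step (ii) is the right fix and is exactly what the paper's Lemma~\ref{lem:comparison} does implicitly by taking $\bal_s = \balg{f_s}$ for sources.
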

%Finally, we have the following analogue of Theorem~\ref{thm:TFAE}.
\begin{thm} \label{thm:tfae-general}
For a network code $\Phi$ with finite alphabets, 
the following are equivalent.
\begin{enumerate}
\item \label{tfae-general-ser} $\Phi$ is serializable.
\item \label{tfae-general-greedy}
\GenSer\ outputs $\{\bal_e\}_{e \in E}$ \st $\forall \, e, A_e = \balg{f_e}.$
\item \label{tfae-general-gdiv} $\Phi$ has no nontrivial
GIV.
\end{enumerate}
\end{thm}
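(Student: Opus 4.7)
The plan is to derive Theorem~\ref{thm:tfae-general} as a corollary of the min-max theorem (Theorem~\ref{thm:minmax}) together with the correspondence, established by Lemmas~\ref{lem:restriction}--\ref{lem:termination}, between semi-vortices on the one hand and restrictions of $\nwc$, partial serializations, and GIVs on the other.  The single organizing observation is that the distinguished semi-vortex $\gdiv_0$ defined by $\bal_e = \balg{f_e}$ for every edge $e$ plays a dual role: it is the largest semi-vortex (since $\bal_e \subseteq \balg{f_e}$ is forced on every semi-vortex by condition~(\ref{semi-vortex})), and under the restriction correspondence it corresponds to $\Phi$ itself.

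For (\ref{tfae-general-ser}) $\Leftrightarrow$ (\ref{tfae-general-greedy}), I invoke the bijection between semi-vortices and restrictions, under which partial serializations correspond to serializable restrictions.  If \GenSer\ outputs a semi-vortex with $A_e = \balg{f_e}$ for all $e$, then it has computed $\gdiv_0$, so the restriction it serializes is $\Phi$ itself.  Conversely, if $\Phi$ is serializable then $\gdiv_0$ is itself a partial serialization; by Theorem~\ref{thm:minmax} it is contained in the maximal partial serialization output by \GenSer, and since $\gdiv_0$ is already the top of the semi-vortex ordering the two must coincide.

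For (\ref{tfae-general-greedy}) $\Leftrightarrow$ (\ref{tfae-general-gdiv}) I use the other half of Theorem~\ref{thm:minmax}: the output of \GenSer\ is the minimal GIV.  Every GIV $\gdiv'$ satisfies $\bal'_e \subseteq \balg{f_e}$ by condition~(\ref{gdiv-e}), so $\gdiv'$ is contained in $\gdiv_0$; hence $\gdiv'$ is \emph{nontrivial} exactly when it is strictly contained in $\gdiv_0$.  Therefore $\Phi$ admits no nontrivial GIV iff the minimal GIV equals $\gdiv_0$ iff \GenSer\ returns $A_e = \balg{f_e}$ for every $e$.

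The intellectual work is already packaged into Theorem~\ref{thm:minmax} and the semi-vortex lemmas, so the main item to verify carefully before assembling the three-way equivalence is that $\gdiv_0$ really is a GIV (not merely a semi-vortex), since otherwise one could not invoke the ``minimal GIV'' half of the min-max theorem.  This check is immediate from Definition~\ref{def:gdiv}: taking $\bal_e = \balg{f_e}$ for every edge forces the right-hand side of condition~(\ref{gdiv-e}) to be $\balg{f_e}$ as well, so equality holds trivially.  With that sanity check in hand, the theorem reduces to the short bookkeeping argument above; I expect the only potential subtlety is making sure the bijections with restrictions and with partial serializations are invoked in the correct direction, which the preparatory lemmas are designed to make routine.
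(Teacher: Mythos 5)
Your proof is correct and, at its core, takes the same route as the paper: everything is squeezed out of Theorem~\ref{thm:minmax} and the preparatory lemmas that set up the semi-vortex/restriction correspondence. The paper organizes the argument as a dichotomy --- either $\Phi'[\gdiv] \cong \Phi$, in which case all three conditions hold, or not, in which case all three fail --- whereas you prove the two pairwise equivalences directly by naming the top semi-vortex $\gdiv_0$ and observing that \GenSer's output coincides with $\gdiv_0$ exactly when each condition holds. This is a cosmetic reorganization, not a different proof.

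One small correction to your sanity-check paragraph: the reason you need to know $\gdiv_0$ is a GIV is not, as you say, that otherwise you ``could not invoke the minimal GIV half of the min-max theorem'' --- that half is about the output of \GenSer being minimal among GIVs, and says nothing about $\gdiv_0$. What you really need is simply that the set of GIVs is nonempty (so that ``no nontrivial GIV'' is not vacuously true when $\Phi$ is non-serializable), and for that it suffices to invoke Lemma~\ref{lem:termination}, which guarantees \GenSer always outputs a GIV; verifying $\gdiv_0$ is a GIV is a different fact. Also, that verification is not quite as ``trivial'' as you claim: equality in Definition~\ref{def:gdiv}(\ref{gdiv-e}) for $\gdiv_0$ requires $\balg{f_e} \subseteq \spn_{\hat{e}\in\In(u)}\balg{f_{\hat e}}$, which is exactly the network-code condition in Definition~\ref{def:nwc} translated via Lemma~\ref{lem:balg}(\ref{lem:balg-1})--(\ref{lem:balg-2}). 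It holds, but it uses the hypothesis that $\Phi$ is a network code, not just the shape of the definitions.
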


\section{The Serializability Deficit of Linear Network Codes}
\label{sec:deficit}

The min-max relationship between serializable restrictions and 
information vortices (Theorem~\ref{thm:minmax}) is reminiscent
of classical results like the max-flow min-cut theorem.  However,
there is an important difference: one can use the minimum cut in
a network to detect \emph{how far} a network flow problem is from
feasibility, i.e.~the minimum amount by which edge capacities
would need to increase in order to make the problem feasible.
In this section, we will see that determining how far a network
code is from serializability is more subtle: two network codes
can be quite similar-looking, with similar-looking minimal 
information vortices, yet one of them can be serialized by
sending only one extra bit while the other requires many more
bits to be sent.

We begin with an example to illustrate this point.  The codes in Figure \ref{fig:offset} apply to the message tuple $(X_1, \ldots, X_n, Y_1, \ldots, Y_n)$ where $X_i, Y_i$ are independent, uniformly distributed random variables over $\field_2$.  The codes in Figures \ref{fig:offseta} and \ref{fig:offsetb} are almost identical; the only difference is that the code in Figure \ref{fig:offseta} has one extra bit along the top edge.  The code in Figure \ref{fig:offseta} is serializable: transmit $X_1$ along $(u,v)$, then $X_1+Y_1$ on edge $(v,u)$, then $X_2+Y_1$ on $(u,v)$, ... ,$X_n+Y_n$ on $(v,u)$, and finally $X_1+Y_n$ on $(u,v)$.  On the other hand, the code in Figure \ref{fig:offsetb} is not serializable, which can be seen by applying our greedy algorithm.

% In Section~\ref{sec:char-linear} we saw 
\begin{figure}[h!]
     \centering
     \subfigure[Serializable]{
          \label{fig:offseta}
          \includegraphics[width=.4\textwidth]{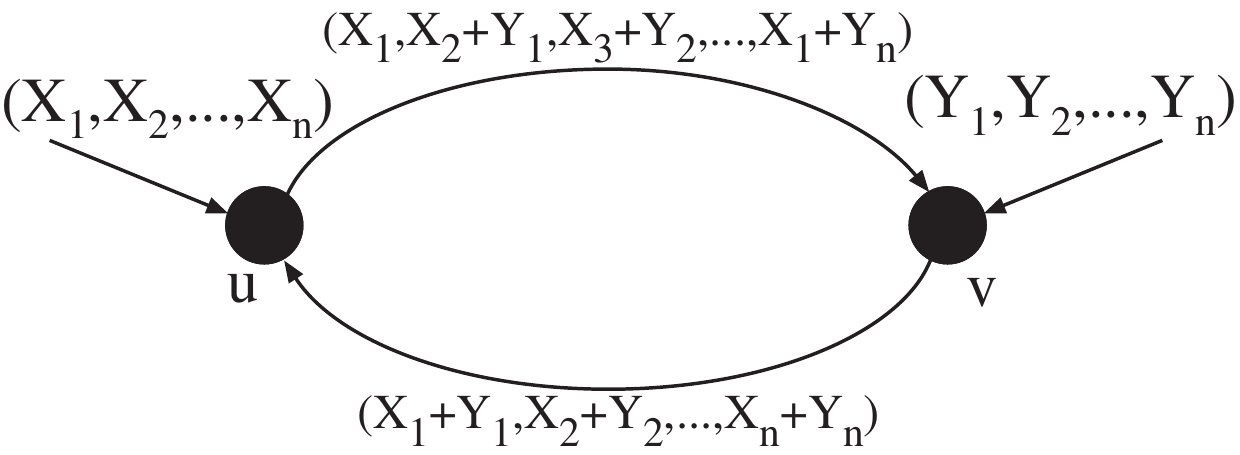}}
     \hspace{.2in}
     \subfigure[Not Serializable]{
          \label{fig:offsetb}
          \includegraphics[width=.4\textwidth]{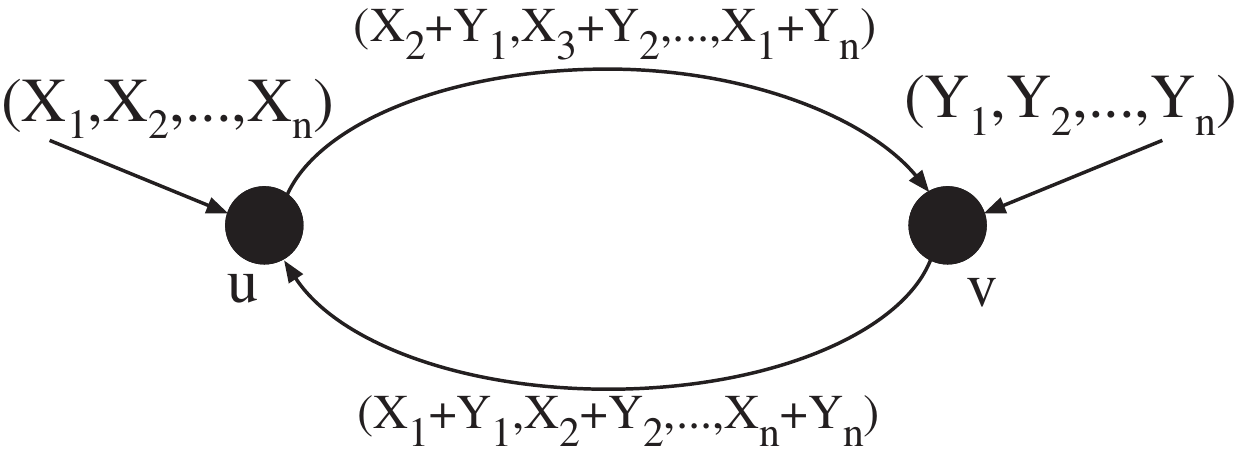}}\\
     \vspace{.01in}
     \caption{Two almost identical network codes.}
     \label{fig:offset}
\end{figure}

% Our greedy algorithm answers just the yes/no version of the serializability question.  But, given this example, even better would be an algorithm that tells us how ``close'' a code is to being serializable.  T
Thus, the code in Figure \ref{fig:offsetb} is very close to serializable because we can consider an extension of the code in which we add one bit\footnote{In this section, for simplicity, we refer to one scalar-valued linear function on an $\field$-vector space as a ``bit'' even if $|\field|>2$.} to the edge $(u,v)$ to obtain the code in Figure \ref{fig:offseta} that is serializable.  On the other hand, there are similar codes that are very far from being serializable.  If we consider the code with the same sources and $f_{(u,v)} = f_{(v,u)} = \prod_{i=1}^n X_i+Y_i$, its edge alphabets have the same size and its minimal information vortex is identical, yet any serializable extension requires adding $n$ bits.  To completely characterize serializability we would like to be able to separate codes that are close to serializable from those that are far.  This motivates the following definition.

\begin{definition}
For a network code $\Phi = \networkcode{}$ and an extension $\Phi' = \networkcode{'}$, the \emph{gap} of $\Phi'$, defined by $\hgap{\Phi'} = \sum_{e \in E} \log_2 |\Sigma'_e| - \log_2|\Sigma_e|$,
represents the combined number of extra bits transmitted
on all edges in $\Phi$ as compared to $\Phi$. The \emph{serializability
deficit} of $\Phi$, denoted by $\gsdef(\Phi)$, 
is defined to be the minimum of 
$\hgap{\Phi'}$ over all serializable extensions $\Phi'$
of $\Phi$.  The \emph{linear serializability deficit} of a linear code $\Phi$, denoted $\sdef(\Phi)$, is the minimum of $\hgap{\Phi'}$ over all linear serializable extensions $\Phi'$.
\end{definition}

Unfortunately, determining the serialization deficit is much more difficult than simply determining serializability.

\begin{thm}
Given a linear network code $\nwc$, it is NP-hard to approximate the size of the minimal linear serializable extension of $\nwc$.
Moreover, there is a linear network code $\Phi$ and a positive integer $n$ such that $\sdef(\nwc^n)/(n \sdef(\nwc)) < \bigo(\frac{1}{\log_2(n)})$.  
\label{thm:gap}
\end{thm}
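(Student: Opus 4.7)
The theorem bundles two separate statements; my plan is to address them with distinct constructions that share a common toolkit based on the information vortex characterization (Theorem~\ref{thm:TFAE}).

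For the NP-hardness of approximation, I would reduce from an NP-hard problem with known constant-factor inapproximability, such as Minimum Vertex Cover (or Label Cover if a stronger bound is desired). Given an instance graph $H=(V_H,E_H)$, I would construct a linear network code $\Phi_H$ whose sourced graph is built from $H$ via local gadgets. Each vertex $v \in V_H$ corresponds to an edge of the sourced graph where one extra bit may be added; each edge $\{u,v\} \in E_H$ corresponds to a small ``local'' information vortex in $\Phi_H$ that is broken by adding a bit at either the $u$-location or the $v$-location. The design goal is that $\sdef(\Phi_H)$ equals the size of a minimum vertex cover of $H$, up to a constant factor, so that any constant-factor approximation for $\sdef$ yields one for Vertex Cover. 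The main design challenge is to make the gadgets truly local: a bit added at one vertex should break exactly the incident-edge vortices and not inadvertently help with distant ones. Careful choice of the dimensions of the dual subspaces $T_e$ should enforce this.

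For the subadditivity claim, I would construct an explicit code $\Phi$ for which one can prove both $\sdef(\Phi) \geq c_1 N$ and $\sdef(\Phi^n) \leq c_2 N n / \log n$ for some parameter $N$. The construction I have in mind combines the ``offset sum'' mechanism of Figure~\ref{fig:offset} with a family of $N$ loosely coupled vortex components, each of which demands its own extra bit in a single execution. For $n$ copies, a suitably designed extension would let a single added bit act as a simultaneous trigger for a logarithmic number of copy-vortex pairs across the $n$ parallel executions, producing the claimed subadditive behavior. A Set Cover-style greedy argument, together with an expander-like construction of the coupling between copies, should give the $1/\log n$ factor.

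The main obstacle, in both parts, is proving the lower bound on the deficit. Upper bounds reduce to writing down an explicit serializable extension and verifying it via the greedy algorithm \CallSer\ from Section~\ref{sec:linear}. Lower bounds, however, require showing that every linear extension using too few bits \emph{must still} admit a non-trivial information vortex. By Definition~\ref{def:iv}, such a vortex is an assignment of subspaces $W_e \subseteq \msg^*$ satisfying the source boundary and fixed-point constraints, and a linear extension corresponds to enlarging each $T_e$. The proof therefore reduces to a dimension-counting argument: one exhibits a family of candidate vortex subspaces and shows that no small enlargement of the $T_e$'s can collapse all of them to the trivial vortex. For the NP-hardness reduction this should follow combinatorially (any choice of fewer bits leaves some gadget unbroken, hence yields a non-trivial vortex); for the subadditivity example it will require more care, likely via a probabilistic or expansion-style lower bound on the number of bits needed to cover all independent vortex components of $\Phi$.
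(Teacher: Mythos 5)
The paper's proof derives both claims from a \emph{single} reduction, stated as Lemma~\ref{lem:reduc}: given a hitting set instance, it constructs a linear network code $\Phi$ with $\sdef(\Phi)$ equal to the optimal \emph{integral} hitting set, and $\sdef(\Phi^q) \leq p$ whenever the optimal \emph{fractional} solution has value $p/q$. The NP-hardness part is then immediate from the hardness of approximating Hitting Set / Set Cover, and the subadditivity part is immediate from the $\Theta(\log n)$ integrality gap of the hitting-set LP. Your proposal decouples the two claims and treats them with separate constructions, which loses this unification and, more importantly, leaves the harder half essentially unproved.

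For the NP-hardness part your Vertex Cover approach is plausible: local gadgets that force one bit per cover vertex are in the same spirit as the paper's element/set gadgets (indeed the paper uses exactly the $2$-cycle of Figure~\ref{fig:offset} as its ``one-bit unlock'' gadget), and the lower bound is verified by exhibiting a family of information vortices, one per hyperedge, just as you describe. So that half could be made to work, though it would not set you up for the second claim. The gap is in the subadditivity part. You gesture at ``a single added bit acting as a trigger for a logarithmic number of copy-vortex pairs'' via an expander-style coupling, but there is no concrete mechanism given for why this should be possible, and the lower bound (that $\Phi$ itself still needs $\Omega(N)$ bits) is left as ``likely via a probabilistic or expansion-style argument.'' The paper's mechanism is precise and different in kind: it views the alphabet of $\Phi^q$ as the tensor product $\Sigma_e \otimes \field^q$, picks $|\field|$ large enough that the uniform matroid $U_{q,p}$ is representable by vectors $t_1,\ldots,t_p \in (\field^*)^q$, and then for each fractional value $z_i/q$ sends $z_i$ extra bits of the form $x \otimes t_j$. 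The fractional covering constraint guarantees that each set gadget receives $q$ of these $t_j$'s, which form a basis of $(\field^*)^q$ and therefore suffice to reconstruct the missing tensor block. That matroid/tensor step is the load-bearing idea for realizing fractional solutions, and nothing in your sketch plays the same role. Without it, there is no reason a greedy/expander argument should beat the trivial $n \cdot \sdef(\Phi)$ bound.

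To repair the proposal, replace Vertex Cover with Hitting Set (or Set Cover) so that the fractional relaxation has a $\log n$ integrality gap, build the code so that $\sdef(\Phi)$ equals the integral optimum (your gadget intuition is fine here), and then prove the $\sdef(\Phi^q) \leq p$ upper bound via the tensor-product / uniform-matroid construction rather than an ad hoc coupling. Both claims then drop out of the one lemma.
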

Both statements in the theorem follow directly from the following
lemma.

\begin{lem}
Given a hitting set instance $(N,S)$ with universe $N$, $|N| = n$, and subsets $S \subseteq 2^N$, an optimal integral solution $k$, and an optimal fractional solution $\frac{z_1}{q},\frac{z_2}{q},..,\frac{z_n}{q}$, with $\sum_{i=1}^n \frac{z_i}{p_1} = \frac{p}{q}$, in polynomial time we can construct a linear network code such that
% \begin{enumerate}
% \item 
$\sdef(\nwc) = k$, but
% \item 
$\sdef(\nwc^q) \leq p$.
% \end{enumerate}
\label{lem:reduc}
\end{lem}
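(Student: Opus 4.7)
The plan is to construct $\nwc$ from the hitting-set instance $(N,S)$ in such a way that every minimum-gap linear serializable extension of $\nwc$ corresponds to a minimum hitting set of $(N,S)$, and then to use the fractional optimum to design an explicit extension of $\nwc^q$ of gap at most $p$.  The code will consist of two nodes $u,v$ together with a ``set gadget'' for each $s\in S$, sharing sources $X_i$ at $u$ and $Y_i$ at $v$ indexed by $i\in N$.  The gadget for $s$ will be designed so that (i) removing its single set-$s$ obstruction from the minimal information vortex requires the addition of exactly one scalar linear functional drawn from the element-slot span $\{X_i : i\in s\}\subseteq \msg^*$, and (ii) the same scalar bit, once supplied, simultaneously kills the obstruction for every other set $s'\ni i$.

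Granting such a gadget, the single-copy equality $\sdef(\nwc)=k$ follows in both directions.  For $\sdef(\nwc)\le k$: take an optimum hitting set $H$ of size $k$; adding the scalar bit for each $i\in H$ on the associated edges supplies a bit for every $s$ (since $H$ hits every $s$), so by Theorem~\ref{thm:TFAE} the extension is serializable and its gap is exactly $|H|=k$.  For $\sdef(\nwc)\ge k$: suppose some linear serializable extension has gap $\gamma$; by Theorem~\ref{thm:TFAE} it has no non-trivial information vortex, so in each set gadget it must have supplied a scalar in the element-slot span of some $i\in s$.  Taking the supports of these scalars yields a hitting set of size at most $\gamma$, giving $\gamma\ge k$.

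For $\sdef(\nwc^q)\le p$, I would use the fractional solution $(z_i/q)$ to construct an extension of $\nwc^q$ as follows.  Choose, for each $i\in N$, a set $T_i\subseteq[q]$ of size $z_i$ such that for every $s\in S$ the union $\bigcup_{i\in s} T_i$ equals $[q]$; this is feasible because the fractional constraint gives $\sum_{i\in s} z_i\ge q$, so a combinatorial covering argument (for example, a cyclic interval allocation, augmented if necessary by the flexibility to use linear combinations of element-bits across copies afforded by the gadget) produces the required $T_i$'s.  Then, in copy $c$, add the scalar element-$i$ bit for every pair $(i,c)$ with $c\in T_i$.  The total gap is $\sum_i |T_i|=p$, and by the single-copy analysis applied in each copy the extension is serializable.

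The most delicate step is the gadget design itself.  A naive parallel-edges construction with $f_{a_s}=f_{b_s}=\sum_{i\in s}(X_i+Y_i)$ is unsuitable because computing this sum at the destination requires \emph{every} $X_i$ for $i\in s$, turning the hitting-set structure into a covering structure and destroying the integrality gap that makes the reduction useful.  The gadget must instead involve $|s|$ redundant tracks, cross-wired so that the bit for any single element of $s$ cascades through the gadget and resolves the entire $s$-cycle, while the information-vortex analysis still rigidly confines allowed extensions to the element-slot span.  Matching this gadget to the $q$-fold covering argument, so that the combinatorial cover provided by the $T_i$'s actually yields a serializable $\nwc^q$ under the gadget's extended dynamics, is where most of the work will go.
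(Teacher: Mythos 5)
Your plan correctly identifies that the reduction must be from hitting set, that the integral direction should go through Theorem~\ref{thm:TFAE} and information vortices, and that the fractional direction should exploit the $q$-fold product. But there is a genuine gap in the key step of the fractional construction, and it is precisely the step the paper handles with different (and essentially unavoidable) machinery.

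You propose to choose, for each $i \in N$, a set $T_i \subseteq [q]$ with $|T_i| = z_i$, such that for every $s \in S$ one has $\bigcup_{i \in s} T_i = [q]$, and you claim this is feasible because $\sum_{i\in s} z_i \ge q$. This is false: the counting bound does not yield a covering. Take $N = \{1,2,3\}$, $S = \{\{1,2\},\{2,3\},\{1,3\}\}$, with fractional optimum $z_i/q = 1/2$, so $q=2$, $z_i = 1$, $p = 3$. You would need three singletons $T_1, T_2, T_3 \subseteq \{1,2\}$, pairwise distinct (since every pair must union to $\{1,2\}$), which is impossible by pigeonhole. No amount of cyclic allocation fixes this; the problem is that overlapping $T_i$'s waste budget, and the fractional LP has no reason to prevent overlap. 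The hedge about "linear combinations of element-bits across copies" gestures at the right cure but does not supply it. The paper resolves this by working over a sufficiently large field $\field$ and using a realization of the uniform matroid $U_{p,q}$: one picks $p$ vectors $t_1,\dots,t_p \in \field^q$ in general position so that \emph{any} $q$ of them form a basis, assigns $z_i$ of them to element $i$, and sends $X_i^{A_{i(1)}} \otimes t$ in the extension of $\nwc^q$. Then for each set $A$, the $\sum_{a\in A} z_a \ge q$ bound alone guarantees that the received vectors span $\field^q$, without any disjointness or covering requirement — this is exactly what the genericity of the matroid representation buys, and it is not recoverable by combinatorics over a fixed basis of $[q]$.

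Separately, your gadget is left entirely unspecified, and your proposed two-node topology is considerably simpler than what the paper uses (a four-vertex gadget per element, a 2-cycle offset code per element, and bidirected cliques per set, with separate source blocks $X_i^A$ for each set $A$ containing $i$). You correctly diagnose why the naive parallel-edge gadget fails, but diagnosing the problem is not solving it: the lower-bound direction $\sdef(\nwc) \ge k$ requires a concrete information-vortex argument (the paper's Lemma~\ref{lem:helper}) tied to the gadget, and without the gadget in hand there is nothing to verify. As written, neither direction of the single-copy equality nor the $q$-fold bound is actually proved.
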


{\em {Proof sketch.}}
\begin{floatingfigure}[1]{0.5\textwidth}
   	  \centering
    	 \includegraphics[width=0.45\textwidth]{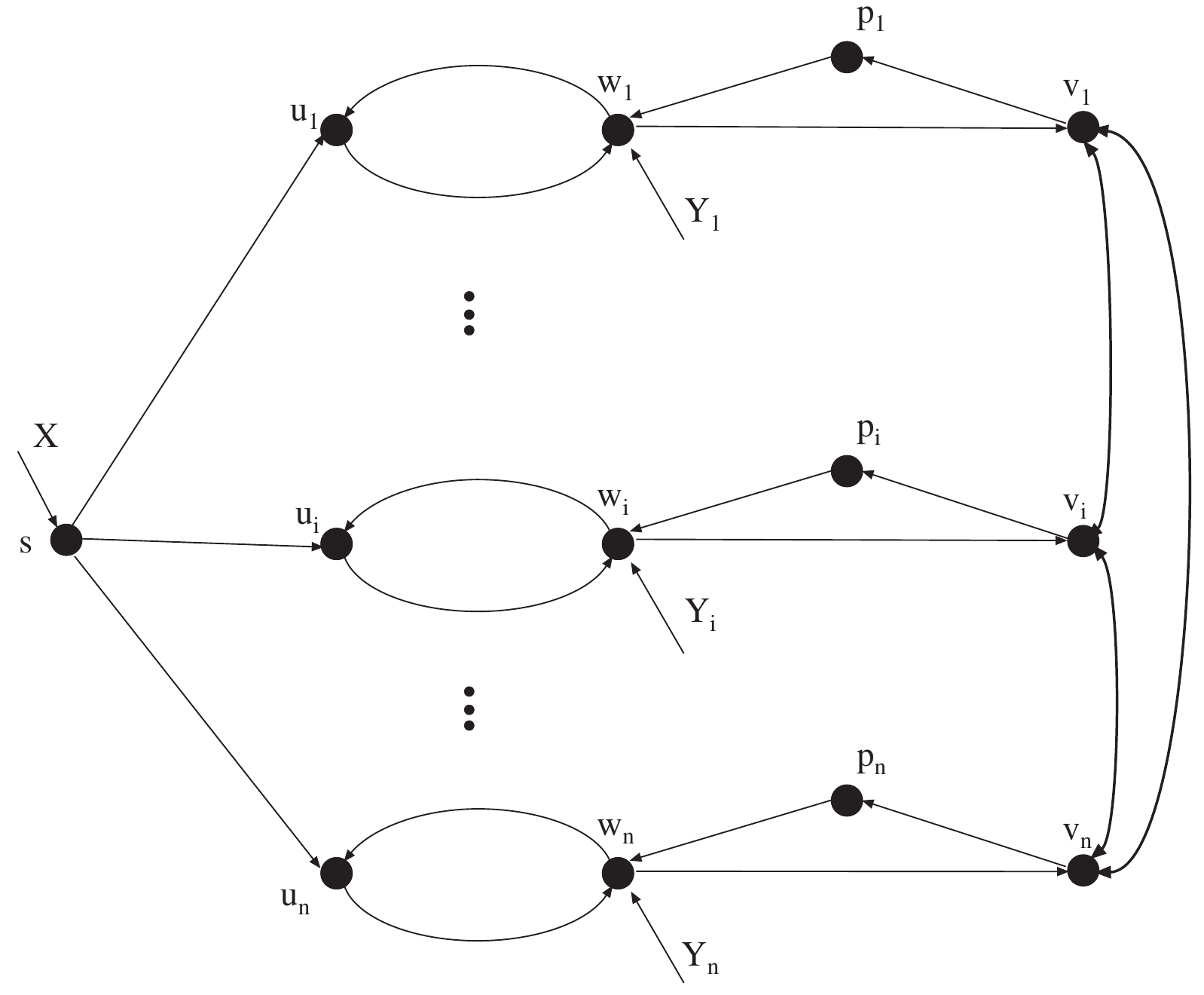}
         \caption{The reduction from Hitting Set}
%     	\caption{An illustration of the vertices, edges, and codes in the network coding instance that are created for one element $i\in N$ of the universe of our hitting set instance.  We have a similar diagram for each $i \in N$.  The interaction between the elements $i \in N$ happens only on the vertices $V$.}
     \label{fig:reduc}
\end{floatingfigure}
The full proof of the Lemma is in Appendix~\ref{ap:linear}.
Here, due to space limitations, we merely sketch the main ideas.
The graph used in the reduction is illustrated pictorially 
in Figure~\ref{fig:reduc}.  Given a hitting set instance
$(N,S)$ we create a network code with one source for each
$i \in N$ (source message denoted by $\vec{Y}_i$) and a super-source
$s$ (source message denoted $(\vec{X}_1,\vec{X}_2,\ldots,\vec{X}_n)$).  
The symbols $\vec{X}_i,\vec{Y}_i$ don't refer to % scalars in $\field$
% (henceforth denoted ``bits'' for ease of terminology, 
% even though $\field$ may not be $\field_2$)
bits, 
but actually to blocks of $n_i$ bits, where $n_i$ is the
number of sets in $S$ containing $i$;  each of the
bits in $\vec{X}_i$ or $\vec{Y}_i$ corresponds to one of the sets that 
$i$ belongs to.  For
each $i \in N$ we use a gadget consisting of a
$2$-cycle on vertices $u_i,w_i$, with $\vec{Y}_i$ feeding into 
$w_i$ and $\vec{X}_i$ feeding from the super-source $s$ into $u_i$.
The edges between $u_i$ and $w_i$ are a copy of the
gadget in Figure~\ref{fig:offset}.  We exploit the fact
that sending one extra bit in this gadget allows the
information vortex in the gadget to ``unravel'', leading
to transmission of all the bits encoded on the edges of
the $2$-cycle.
The $2$-cycle $(u_i,w_i)$ participates in a larger 
$4$-vertex gadget $\{u_i,w_i,p_i,v_i\}$ corresponding
to the element $i$.  The role of $v_i$ is to 
participate in ``set gadgets'', where the gadget
corresponding to a set $A$ consists of a
bidirected clique on all the vertices $\{v_j | j \in A\}.$
The role of $p_i$ is less important; it plays a 
necessary part in disseminating bits to leftover 
parts of the network after the ``important'' parts have
been serialized.  If
there is a hitting set of size $k$ then we send
one bit on each of the $2$-cycles $(u_i,w_i)$
corresponding to elements $i$ in the hitting set.
This ``unlocks'' the bits that were locked up
in those $2$-cycles, which allows a sufficient 
amount of information to flow into the set gadgets
that they become serialized.  The vertices $p_j$
are then used for disseminating the remaining
bits to the unused $2$-cycles $(u_j,w_j)$ where
$j$ did not belong to the hitting set.  

To prove, conversely, that a serializability deficit 
of at most $k$ implies that there is a hitting set of size
$k$, we make use of the fact that the network code
constructed by our reduction has a large number of 
information vortices, one for each pair 
consisting of an element of $N$ and a set $S$ 
that it belongs to.  If $C$ is the set of all $i$
such that an extra bit is transmitted somewhere in
the 4-vertex gadget for $i$, and $C$ fails to
contain an element of some set $A$, then this 
in turn implies that one of the aforementioned 
information vortices remains an
information vortex in the extension of the code.
Thus, $C$ must be a hitting set.

The more difficult step in proving Lemma~\ref{lem:reduc}
lies in showing that fractional solutions of the hitting
set problem can be transformed into efficient
serializable extensions of $\Phi^q$.  For this,
we make use of the fact that the edge alphabets
in $\Phi^q$ can be regarded as $\Sigma_e \otimes \field^q.$
and their duals can be regarded as $\Sigma_e^*
\otimes (\field^*)^q.$  If $|\field|$ is large
enough, then the uniform matroid $U_{q,p}$ is
representable as a set $\{t_1,\ldots,t_p\}$ of $p$ vectors in
$(\field^*)^q$.  For each of the ``fractional elements''
$z_i/q$ in the fractional hitting set, we 
send $z_i$ bits  of the form $x \otimes t$ in the extension
of $\Phi^q,$ where $t$ is one of the elements
of our matroid representation of $U_{q,p}$ 
in $(\field^*)^q$
and $x$ is the bit that we would have sent
for element $i$ in the hitting set reduction
described two paragraphs earlier.  The fact
that $\sum_i z_i=p$ implies that every
element of the matroid representation is used
exactly once in this construction.  The
fact that we have a fractional set cover
implies that in each set gadget, we receive
extra bits corresponding to $q$ distinct
elements of $S$.  Since these elements
are a basis for $(\field^*)^q$, it is then
possible to show that they combine to allow
a serialization of all the ``missing bits'' 
in that gadget, and from there we finish
serializing the entire network code $\Phi^q$
as before.

% A critical ingredient of the reduction is a gadget similar to the example in Figure \ref{fig:offset}.  We exploit the fact that we can transform a code where no bits can be sent to a code where $n$ bits can be sent with the addition of only one bit.  % This helps us limit the space of useful tibs - tibs that ``release'' all the information in the 2-cycle will be much more valuable than any other. 

% \input{fractionalHS}

\section{Asymptotic Serializability}
\label{sec:asymptotic}

The results in the previous section showed both that there are non-serializable codes with large edge alphabets that become serializable by adding only one bit (example in Figure \ref{fig:offset}) and that the serialization deficit can behave sub-additively when we take the $n$-fold cartesian product of $\Phi$ (Theorem \ref{thm:gap}).  This prompts the investigation of whether there exists a code that isn't serializable, but the $n$-fold parallel repetition of the code can be serialized by extending it by only a constant number of bits, and thus it is essentially indistinguishable from serializable.  We formalize this idea with the following definition. 

\begin{definition} \label{asympt}
A network code $\Phi$ is \emph{asymptotically serializable}
if $\lim_{n \rightarrow \infty} \frac1n \sdef(\Phi^n)/\sdef(\Phi) = 0$ where $\Phi^n$ is $n$-fold cartesian product of $\nwc$ with cartesian product define in the obvious way. 
\end{definition}

If one is using a network code to transmit
infinite streams of data by chopping each stream
up into a sequence of finite blocks and applying
the specified coding functions to each block, then
an asymptotically serializable network code is almost
as good as a serializable one, since it can be serialized
by adding a side channel of arbitrarily small bit-rate
to each edge of the network.

Despite indications to the contrary in Section \ref{sec:deficit}, we show that any non-serializable linear code is not asymptotically serializable via the following theorem. 

\begin{thm}
\label{thm:asymSer}
For a linear network code $\nwc = \networkcode{}$ over a field $\field$, then $\sdef(\nwc^n) \geq cn$ where $c$ is a constant dependent on $\nwc$. 
\end{thm}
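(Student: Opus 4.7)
The plan is to combine the information vortex characterization of serializability (Theorem~\ref{thm:TFAE}) with an amortized cascade-accounting argument inside the greedy algorithm \CallSer.

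First, I would set up the vortex machinery for $\Phi$. Let $\{W_e\}$ be the minimal information vortex of $\Phi$, set $d_e = \dim T_e - \dim W_e$, and $d = \sum_{e \in E} d_e$. Since $\Phi$ is non-serializable, Theorem~\ref{thm:TFAE} gives $d>0$. The natural tensor-product decomposition $(\msg^n)^* = \bigoplus_{i=1}^n \msg^*_i$ yields $T_e^{(n)} = \bigoplus_i T_{e,i}$, and $\{W_e^{(n)} = \bigoplus_i W_{e,i}\}$ is the minimal IV of $\Phi^n$, with total codimension $nd$ in $\{T_e^{(n)}\}$.

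Next, let $\Phi'$ be an arbitrary linear serializable extension of $\Phi^n$, and write $T'_e = T_e^{(n)} + U_e$ so that $\hgap{\Phi'} = \sum_e \dim U_e$. Running \CallSer on $\Phi'$ terminates with $A_e = T'_e$. I would classify each iteration as \emph{routine} when the vector added to $A_e$ lies in $W_e^{(n)}$ and \emph{unlocking} otherwise. The total number of unlocking iterations equals $\sum_e \dim(T'_e/W_e^{(n)}) = nd + \hgap{\Phi'}$; of these, $\hgap{\Phi'}$ add a fresh basis vector from $U_e$, while the remaining $nd$ add vectors from $T_e^{(n)} \setminus W_e^{(n)}$ -- i.e., they unlock dimensions of $\Phi^n$ that were inaccessible without the extension.

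The key technical step is a structural lemma: there exists a constant $C = C(\Phi)$, independent of $n$, such that the number of unlocking iterations in any run of \CallSer on any serializable extension of $\Phi^n$ is at most $C \cdot \hgap{\Phi'}$. Granted the lemma, $nd + \hgap{\Phi'} \leq C\,\hgap{\Phi'}$ rearranges to $\hgap{\Phi'} \geq nd/(C-1)$, yielding the theorem with $c = d/(C-1)$.

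The main obstacle is proving this structural lemma. The plan is an amortized argument that charges each unlocking iteration to a fresh extension bit whose addition enabled it, and bounds the total charge per bit by a constant depending only on $\Phi$. Because $\Phi^n$ decomposes into $n$ identical coordinate blocks and $\{W_e^{(n)}\}$ is the direct sum of the per-block vortices, any cascade of unlocks initiated by a bit $u \in U_e$ ought to be governed by the combinatorial structure of $\Phi$ alone (parameters like $|E|$, $\max_e \dim T_e$, and $d$). The subtle point is ruling out the possibility that a single bit with nonzero projections in many coordinate blocks triggers $\Theta(n)$ parallel cascades of unbounded total length; the expected resolution is that once the locked dimensions of a particular block have been unlocked by any earlier cascade, subsequent cascades through that block contribute no additional dimensions, so the per-bit charge remains bounded independently of $n$. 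Converting this heuristic into a rigorous $n$-independent bound is the principal difficulty.
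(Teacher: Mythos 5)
Your tensorization setup is sound, and the identity (up to $\geq$) counting ``unlocking'' iterations as $nd + \hgap{\Phi'}$ follows from $\dim(T'_e/W_e^{(n)}) = nd_e + \dim U_e$. But the remainder of the proposal does not constitute a proof: the ``structural lemma'' you invoke is logically equivalent to the theorem itself. Given your counting identity, the lemma asserts $nd + \hgap{\Phi'} \leq C\,\hgap{\Phi'}$, which rearranges to $\hgap{\Phi'} \geq nd/(C-1)$ --- that is exactly the statement to be proved. Conversely, if the theorem holds with constant $c$, the lemma holds with $C = d/c + 1$. So you have reformulated the problem rather than reduced it, and the ``principal difficulty'' you flag at the end is the entire content of the theorem.

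The heuristic you propose for the lemma is also unlikely to become rigorous along the suggested lines. The cascade-charging picture implicitly assumes the extension bits $U_e$ respect the $n$-fold block decomposition of $\msg^* \otimes \field^n$, but they need not: a single vector in $U_e$ can have nonzero components in all $n$ tensor factors and thus trigger cascades in every coordinate block simultaneously. Bounding the aggregate effect of such a ``spread-out'' bit by an $n$-independent constant is precisely the obstruction, and ``once a block is unlocked it contributes nothing further'' does not control how many blocks a single bit can touch. The paper sidesteps this entirely. Rather than tracking cascades inside \CallSer, it collects \emph{all} the extension bits into one subspace $T \subseteq \msg^* \otimes \field^n$ of dimension $p$, applies a rank bound (Lemma~\ref{lem:extension}) to find a small subspace $Q \subseteq \field^n$ of dimension $\leq pd$ with $T \subseteq \msg^* \otimes Q$, and then over-approximates the serializable part by the deliberately enlarged family $X_e = (W_e \otimes \field^n) \spn (\msg^* \otimes Q)$. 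The distributivity of $\spn$ and $\intsct$ for basis-compatible subspaces (Corollary~\ref{lem:3-subspace}) shows $\{X_e\}$ is still an information vortex of the extended code. If the extension is serializable, $\{X_e\}$ must be trivial; a one-line dimension count on the edge $e$ where $\dim W_e < \dim T_e$ then forces $p \geq (m-k)n/d^2$. This construction avoids any per-bit cascade accounting, which is where your approach stalls.
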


The proof of the theorem considers the alphabets of the $n$-fold product of $\Phi$ as elements of a tensor product space.  Using this machinery, we show that information vortices in the graph are preserved if we don't increase the amount of information we send down some edge by order $n$ bits.  More specifically, if $\{W_e\}$ is a non-trivial information vortex in $\nwc$, and $e$ is an edge such that $\dim(W_e) < \dim(T_e) = m$, then if we add some edge function $f$ to every edge in the graph, the information vortex remains non-trivial as long as the dimension of $f$ is less than $mn$.  A complete proof is provided in Appendix \ref{ap:asymptotic}.

\bibliographystyle{plain}
\bibliography{bib}

\appendix
\section{Proof of Theorem \ref{thm:2cycle}}
\label{ap:2cycle}

\begin{thm}[Theorem \ref{thm:2cycle} restated]
Given a rational-valued entropic vector, $V$, of a 2-cycle on nodes $u,v$, with source $x$ into node $u$, source $y$ into node $v$, and edges $a = (u,v)$ and $b = (v,u)$, there exists a serializable code that realizes $cV$, for some constant $c$, if and only if $V$ satisfies Shannon's inequalities, downstreamness ($H(abx) = H(bx)$, $H(aby) = H(ay)$), the Chicken and Egg inequality ($H(ab) \geq H(abx) - H(x) + H(aby) - H(y)$), and the Greedy inequality ($H(a) + H(b) > H(ax) - H(x) + H(by) - H(y)$ when $H(a) + H(b) \ne 0$).
\end{thm}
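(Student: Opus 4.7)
The necessity direction requires showing each of the four stated inequalities must hold for any serializable code. Shannon's inequalities are universal. Downstreamness follows immediately from Definition~\ref{def:nwc}: since $a$ is determined by node $u$'s incoming edges $\{b, x\}$, we have $H(abx) = H(bx)$, and symmetrically for $b$. The Chicken and Egg inequality is proven in \cite{CE}. For the Greedy inequality, suppose $\Phi$ is serializable with $H(a)+H(b)>0$ and consider the earliest time step $i$ at which some $f_e^{(i)}$ transmits information not already implicit in previously-transmitted data; since no transmission has yet arrived from the opposite edge, this new information must be a function of the local source alone. This forces $I(a;x) > 0$ or $I(b;y) > 0$, which after rewriting $I(a;x) = H(a) - (H(ax) - H(x))$ is exactly the strict Greedy inequality.

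For sufficiency, the plan is to first choose $c$ to clear denominators so that $cV$ is integral, then take $\msg = \{0,1\}^{H(x)+H(y)}$ with $x, y$ being two independent uniform blocks. Downstreamness combined with Shannon's inequalities forces $H(abxy) = H(xy) = H(x)+H(y)$ and thus makes $a, b$ functions of $(x,y)$, reducing the task to constructing coding functions $f_a, f_b$ whose joint distribution matches all prescribed entropies and which admit a serial schedule. I would split into four cases depending on whether the mutual informations $I(a;x)$ and $I(b;y)$ vanish. When both vanish, the Greedy inequality forces $H(a)+H(b)=0$ and the code is trivially serializable. When exactly one vanishes, say $I(b;y)=0$, then $b$ is a function of $y$ alone, so we can send $b$ first and $a$ second, reducing to a non-cyclic instance. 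The principal case is $I(a;x)>0$ and $I(b;y)>0$: here one sends a bit of $x$ on $a$ first, so that node $v$ can XOR it with a bit of $y$ and transmit the XOR on $b$, allowing $u$ to decode a bit of $y$ and continue the alternation. All coding functions in every case are built from uncoded source bits and pairwise XORs.

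The main obstacle, which carries the bulk of the technical work, is showing that the four inequalities constrain $V$ tightly enough for these simple constructions to realize it exactly. My plan is to parameterize each construction by integer counts of the distinct bit types placed on each edge (pure-$x$ bits, pure-$y$ bits, XOR bits of the form $x_i + y_j$, and relayed copies of such bits), and to express each of the sixteen joint subset entropies appearing in $V$ as an explicit affine function of these counts. Inverting the resulting linear system gives closed-form expressions for every parameter in terms of coordinates of $V$; one must then verify, case by case, that the four inequalities together with Shannon's guarantee each parameter is a non-negative integer. Downstreamness together with Chicken and Egg will pin down the amount of cross-information required to bridge the two sides of the cycle, while the strict Greedy inequality supplies the initial nonzero parameter count needed to kick off the serialization. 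The delicate part will be verifying non-negativity uniformly across the four cases, since the mix of source-only bits and XOR bits changes qualitatively between them and each case requires its own bookkeeping.
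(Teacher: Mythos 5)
Your necessity argument is sound and mirrors the paper's: the paper likewise locates the first nonzero serialization function and argues it must be a deterministic function of the local source, forcing $I(a;x)+I(b;y)>0$ when $H(a)+H(b)>0$. For sufficiency there are two concrete gaps. The first is that you take $\msg=\{0,1\}^{H(x)+H(y)}$ with $x,y$ independent and assert $H(xy)=H(x)+H(y)$; but nothing in the hypotheses forces $I(x;y)=0$, and the theorem must realize entropic vectors with arbitrary positive mutual information between the sources. The paper's construction does this by introducing shared bits $Z_1,\ldots,Z_{I(x;y)}$ available to both sources and threading them through the edge codes. Your bit inventory omits them entirely, so it cannot reach any target with $H(xy)<H(x)+H(y)$.

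The second gap is the case analysis. Splitting on whether $I(a;x)$ and $I(b;y)$ vanish is both logically flawed and too coarse. The inference that $I(b;y)=0$ makes $b$ a function of $y$ alone is backwards: $H(b\mid y)=0$ would say that, whereas $I(b;y)=0$ says $b$ is \emph{independent} of $y$ (for instance $b=a+y$ with $a,y$ independent has $I(b;y)=0$ yet requires $y$ to compute). More substantively, the regime $I(a;x)>0$, $I(b;y)>0$ simultaneously contains vectors with $H(a)\geq H(a\mid x)+H(a\mid y)$ and $H(b)\geq H(b\mid x)+H(b\mid y)$, which the paper realizes with uncoded and $Z$-XORed bits and no back-and-forth schedule, and vectors where those inequalities flip, which genuinely require the $X$--$Y$ XOR alternation of the paper's Case~4. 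The paper splits on the signs of $H(a)-H(a\mid x)-H(a\mid y)$ and $H(b)-H(b\mid x)-H(b\mid y)$ precisely because these regimes demand qualitatively different code shapes; a single alternation template cannot cover both. Your plan of parameterizing by integer bit counts and inverting a linear system is a sensible way to organize the non-negativity verification and is in the spirit of the paper's bounding observations, but it needs the $I(x;y)$ shared-randomness parameter and a corrected case split before it can close.
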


Throughout this proof it will often be convenient to refer to the conditional entropy of two sets of edges.

\begin{defn}
For two subsets of edges $F = \{e_1,e_2,\ldots, e_j\}$ and $F'= \{e'_1,e'_2,\ldots, e'_k\}$, the conditional entropy of $F$ given $F'$, denoted $H(F|F') = H(e_1e_2 \ldots e_j|e'_1e'_2 \ldots e'_k) = H(FF') - H(F')$.
\end{defn} 

We first show that all the inequalities are necessary.  Shannon's inequalities must hold for the entropic vector of any set of random variables.  Downstreamness (term coined by \cite{ARL-thesis}) was shown to be necessary even for DAGS by Yeung and Zhang~\cite{YZ99}.  Harvey et al.~\cite{CE} showed that the Chicken and Egg inequality is necessary.  Thus, it remains to show that our greedy inequality is a necessary condition for serializability.   
\begin{lem}
The inequality $H(a) + H(b) > H(ax) - H(x) + H(by) - H(y) = H(a|x) + H(b|y)$ holds for any serializable code on a 2-cycle when $H(a)+H(b) >0$. 
\end{lem}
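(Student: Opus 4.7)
The plan is to rephrase the inequality in terms of mutual information. Using $H(a\mid x) = H(ax) - H(x)$, one has $H(a) - H(a\mid x) = I(a;x)$, and similarly $H(b) - H(b\mid y) = I(b;y)$; the desired inequality is therefore equivalent to $I(a;x) + I(b;y) > 0$. Since mutual information is non-negative, it suffices to show that at least one of these two quantities is strictly positive whenever $H(a) + H(b) > 0$.

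To do this, I would fix a serialization $\{(\Sigma_e^{(i)}, f_e^{(i)})\}_{i=1}^{k}$ of the code witnessing Definition~\ref{def:ser}. The hypothesis $H(a) + H(b) > 0$ implies that at least one of $f_a, f_b$ is non-constant, so by conditions~\ref{def:ser:2} and~\ref{def:ser:3} some $f_e^{(i)}$ with $e \in \{a, b\}$ must be non-constant; let $\tau$ be the smallest such index. Swapping the roles of $a$ and $b$ (and $x$ and $y$) if necessary, I may assume $f_a^{(\tau)}$ is the non-constant function. By the minimality of $\tau$, every $f_b^{(i)}$ with $i < \tau$ is constant. The tail of $a$ is the vertex $u$, whose incoming edges are $\{x, b\}$, so condition~\ref{def:ser:4} of Definition~\ref{def:ser} expresses $f_a^{(\tau)}$ as a function of $x$ together with the constants $f_b^{(1)}, \ldots, f_b^{(\tau-1)}$, and hence as a function of $x$ alone.

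It follows that $H(f_a^{(\tau)} \mid x) = 0$, which rearranges to $I(f_a^{(\tau)}; x) = H(f_a^{(\tau)}) > 0$, because $f_a^{(\tau)}$ is non-constant. Condition~\ref{def:ser:2} guarantees that $f_a^{(\tau)}$ is a function of $f_a$, so the data-processing inequality gives $I(a;x) \geq I(f_a^{(\tau)}; x) > 0$, and consequently $I(a;x) + I(b;y) > 0$, which is the desired conclusion. The only part requiring care is the handling of $\tau = 1$ relative to condition~\ref{def:ser:4}: here the empty list of prior transmissions from $b$ is vacuously constant, leaving only the source $x$ as the information available at $u$, so no genuine obstacle appears. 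The entire argument amounts to the intuitively clear fact that at the first time step carrying nontrivial information on one of the two edges, no information can yet have arrived from the other edge, so that the information sent must be drawn purely from the local source.
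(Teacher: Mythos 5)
Your proof is correct and follows essentially the same route as the paper's: both identify the first time step $\tau$ carrying a non-constant coding function, observe that by minimality it can depend only on the local source, and conclude that the corresponding mutual information with that source is strictly positive. The only cosmetic differences are that you argue directly (showing $I(a;x)+I(b;y)>0$) where the paper argues by contradiction, and you invoke the data-processing inequality explicitly where the paper uses the chain rule for conditional entropy; the underlying idea is identical.
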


\begin{proof}
Suppose there is a serializable code such that $H(a) + H(b) \leq H(a|x) + H(b|y)$ and $H(a)+H(b) > 0$.  Because conditioning reduces entropy, $H(a) \ge H(a|x)$, and likewise $H(b) \ge H(b|y)$.  These three inequalities together imply that $H(a) = H(a|x)$ and $H(b) = H(b|y)$.  It follows from the definition of serializability and $H(a)+H(b) > 0$ that there exists a non-zero $f^{(i)}_a$ or $f^{(i)}_b$.  Let $i^*$ be the smallest such $i$ and let $f^{(i^*)}_a$ be the associated non-zero coding function (the choice of $a$ is without loss of generality).  We can rewrite $H(a|x)$ as $H(f_a^{(i^*)}|x) + H(a|f_a^{(i^*)}x)$.  $H(f_a^{(i^*)}|x) = 0$ because $i^*$ is the smallest such $i$ implies that $f_a^{(i^*)}$ is computed soley from $x$.   But, this gives us that $H(a) = H(a|f_a^{(i^*)}x)$, which is a contradiction to $f^{(i)}_a$ non-zero.
\end{proof}

To prove the other direction of Theorem \ref{thm:2cycle} we will use a case based analysis, but first we make a few observations to bound the cases we need to consider. 

\begin{obs}
The following ten values completely determine the entropic vector of the 2-cycle: $I(x;y), H(x|y), H(y|x), H(a|x), H(b|x), H(a|y), H(b|y), H(a), H(b), H(ab)$.
\label{10vals}
\end{obs}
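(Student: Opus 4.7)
The plan is to express each of the fifteen non-empty subset entropies of $\{x,y,a,b\}$ as a function of the ten listed quantities, using elementary entropy identities together with the structural constraints of a network code.

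Singletons and pairs fall out immediately. From $I(x;y), H(x|y), H(y|x)$ I would recover $H(x) = I(x;y)+H(x|y)$, $H(y) = I(x;y)+H(y|x)$, and $H(xy) = I(x;y)+H(x|y)+H(y|x)$; the remaining singletons $H(a), H(b)$ and the pair entropy $H(ab)$ are in the list. The other four pair entropies $H(xa), H(xb), H(ya), H(yb)$ follow from $H(\cdot,\cdot)=H(\cdot)+H(\cdot|\cdot)$ applied to the four conditional entropies in the list. Two of the four triples are handled by downstreamness: since $a=g_a(x,b)$, we have $H(a|xb)=0$ and hence $H(xab)=H(xb)$, and symmetrically $H(yab)=H(ya)$.

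The remaining entries $H(xya), H(xyb), H(xyab)$ I would handle in two moves. First, collapse them to a common value: since $b=g_b(y,a)$, adding $x$ cannot reduce the conditional entropy of $b$ below zero, so $H(b|xya)=0$ and $H(xyab)=H(xya)$; symmetrically $H(xyab)=H(xyb)$. Then I would identify this common value with $H(xy)$ by invoking the fact that the serialization of $\nwc$ writes $a$ and $b$ as deterministic functions of $(x,y)$ (the protocol computes each edge value step by step from source bits alone), which yields $H(a|xy)=H(b|xy)=0$ and hence $H(xyab)=H(xy)$. This completes the determination of all fifteen entropies.

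The only step that is not a purely formal consequence of Shannon's identities is the last identification $H(xya)=H(xy)$; it relies on serializability forcing every edge function to depend on the message only through the sources. Without this, codes carrying auxiliary randomness in $\msg$ can match all ten listed quantities while disagreeing on $H(xya)$, so the observation genuinely uses the context of Theorem~\ref{thm:2cycle}, whose hypothesis concerns realizability by a serializable code. This is the only subtle point in the argument; the rest is a mechanical unwinding of the chain rule and the downstreamness equalities.
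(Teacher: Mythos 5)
Your derivation follows essentially the same decomposition as the paper's proof: singletons and the $x,y$-pairs from the listed conditional entropies, the pairs $H(xa),H(xb),H(ya),H(yb)$ from the chain rule, the two triples $H(xab),H(yab)$ from downstreamness, and the remaining coordinates $H(xya),H(xyb),H(xyab)$ collapsed to a common value and then identified with $H(xy)$. Where you add something is in cleanly separating the two halves of that last collapse. The chain $H(axy)=H(bxy)=H(abxy)$ really is a formal consequence of Shannon plus the two node-level downstreamness equalities $H(a\,|\,bx)=0$ and $H(b\,|\,ay)=0$, but, as you flag, the final equality $H(abxy)=H(xy)$ is not: one needs $H(ab\,|\,xy)=0$, i.e.\ that the edge functions are determined by the source functions, which is exactly what serializability supplies. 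The paper attributes the whole chain $H(axy)=H(bxy)=H(abxy)=H(xy)$ to ``downstreamness and Shannon's inequalities,'' which elides this point. Your caution is warranted: there exist valid (non-serializable) 2-cycle codes --- e.g.\ $x=(X,Z)$, $y=(Y,Z)$, $a=b=(Z,W)$ with $W$ independent of $X,Y,Z$ and $H(W)\le H(Z)$ --- whose entropic vectors satisfy Shannon, both downstreamness equalities, Chicken-and-Egg, and greedy, yet have $H(a\,|\,xy)=H(W)>0$, so the ten quantities alone genuinely fail to pin down $H(xya)$. Thus your reading --- that the observation is to be understood for entropic vectors arising from serializable codes, where $H(ab\,|\,xy)=0$ is automatic --- is the correct one, and under that reading your argument is complete and sound.
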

\begin{proof}
Due to downstreamness and Shannon's inequalities the following equations hold: $H(xy) = I(x;y) + H(x|y) + H(y|x), H(axy) = H(bxy) = H(abxy) = H(xy), H(y) = I(x;y) + H(y|x), H(x) = I(x;y) + H(x|y), H(ax) = H(a|x) + H(x), H(aby) = H(ay) = H(a|y) + H(y), H(abx) = H(bx)= H(b|x) + H(x), H(by) = H(b|y) + H(y)$.  This implies that the value of all 15 non-zero elements of the entropic vector are determined by the 10.
\end{proof}

%\Anote{do I need to explain downstreamness?}

\begin{obs} 
$H(b|x) \ge H(a|x)$
\label{obs:bx_g_ax}
\end{obs}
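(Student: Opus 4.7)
The plan is to exploit the fact that $a$ is an ordinary edge outgoing from $u$, so its coding function is determined by the incoming edges at $u$, namely $x$ and $b$. In other words, Definition~\ref{def:nwc} gives $H(a \mid xb) = 0$ (equivalently, the downstreamness identity $H(abx) = H(bx)$ that is already listed as a hypothesis of Theorem~\ref{thm:2cycle}). From here the inequality is just chain rule plus monotonicity.

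More concretely, I would first apply the chain rule at vertex $u$ to write
$$
H(ab \mid x) \;=\; H(b \mid x) + H(a \mid bx).
$$
The second term vanishes by the downstreamness observation above, giving $H(ab \mid x) = H(b \mid x)$. Then I would invoke the elementary Shannon inequality $H(a \mid x) \le H(ab \mid x)$ (adding a conditioning variable to the left or equivalently adjoining a variable inside the entropy can only increase it) to conclude
$$
H(a \mid x) \;\le\; H(ab \mid x) \;=\; H(b \mid x),
$$
which is exactly Observation~\ref{obs:bx_g_ax}.

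There is no real obstacle here; the claim is an immediate structural consequence of the coding function at $u$ taking $x$ and $b$ as its only inputs. The only thing worth flagging is that the statement uses the \emph{conditional} entropy notation introduced immediately above, so I would make sure to unpack $H(b \mid x) = H(bx) - H(x)$ and $H(a \mid x) = H(ax) - H(x)$ if the reader prefers to see the inequality in its raw entropic-vector form $H(bx) \ge H(ax)$.
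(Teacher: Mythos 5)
Your proof is correct and takes essentially the same route as the paper: the paper also cites downstreamness ($H(abx) = H(bx)$) followed by monotonicity ($H(abx) \ge H(ax)$), then implicitly subtracts $H(x)$ from both sides. Your presentation via the chain rule and conditional entropies is just a slightly more verbose phrasing of the same two-step argument.
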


\begin{proof}
$H(bx) = H(abx) \geq H(ax)$ by downstreamness and then monotonicity.
\end{proof}

\begin{obs}
$H(a|y) \geq H(b|y)$
\label{obs:ay_g_by}
\end{obs}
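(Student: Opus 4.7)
The plan is to mirror the argument used in Observation \ref{obs:bx_g_ax}, exploiting the symmetry of the 2-cycle under swapping $(u,x,a) \leftrightarrow (v,y,b)$. In that previous observation, downstreamness was applied at vertex $u$, whose incoming edges are $x$ and $b$ and whose outgoing edge is $a$, giving $H(abx) = H(bx)$; monotonicity of entropy then produced $H(bx) \geq H(ax)$, and subtracting $H(x)$ from both sides yielded the claim.

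Here I would carry out the analogous computation at vertex $v$, whose incoming edges are $y$ and $a$ and whose outgoing edge is $b$. Downstreamness at $v$ gives $H(aby) = H(ay)$, which is exactly the second downstreamness identity listed in the statement of Theorem \ref{thm:2cycle}. Monotonicity of Shannon entropy then gives $H(ay) = H(aby) \geq H(by)$. Subtracting $H(y)$ from both sides yields $H(a|y) \geq H(b|y)$, as desired.

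There is no real obstacle: the only fact invoked beyond Shannon's inequalities is downstreamness at $v$, which is assumed to hold for any network code on a 2-cycle in the setup of Theorem \ref{thm:2cycle}. The step is essentially a one-line consequence of the definition of conditional entropy together with the two facts $H(aby) = H(ay)$ and $H(ay) \geq H(by)$.
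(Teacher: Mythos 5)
Your proof is correct and is exactly the argument the paper intends: the paper's own proof simply says ``Parallel to proof of Observation~\ref{obs:bx_g_ax},'' and you have carried out that parallel computation explicitly, applying downstreamness at $v$ to get $H(aby) = H(ay)$, then monotonicity to get $H(ay) \geq H(by)$, and subtracting $H(y)$.
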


\begin{proof}
Parallel to proof of observation \ref{obs:bx_g_ax}
\end{proof} 

\begin{obs}
$\max(H(a|x),H(a|y)) \leq H(a) \leq H(a|x) + H(a|y) + I(x;y)$
\label{obs:boundHa}
\end{obs}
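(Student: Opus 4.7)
The plan is to treat the two inequalities separately. The left inequality, $\max(H(a|x), H(a|y)) \leq H(a)$, is immediate from the standard fact that conditioning does not increase entropy, applied twice and followed by taking the maximum.

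For the right inequality, I would begin from the chain-rule identity $H(a) = H(a|x) + I(a;x)$, which reduces the task to showing $I(a;x) \leq H(a|y) + I(x;y)$. The natural next step is to rewrite $I(a;x) = H(x) - H(x|a)$ and split $H(x) = I(x;y) + H(x|y)$, yielding
\[
I(a;x) = I(x;y) + H(x|y) - H(x|a).
\]
So it remains to establish $H(x|y) - H(x|a) \leq H(a|y)$. I would obtain this from the chain rule $H(x,a|y) = H(a|y) + H(x|a,y)$ together with the monotonicity bounds $H(x|y) \leq H(x,a|y)$ and $H(x|a,y) \leq H(x|a)$, which combine to give $H(x|y) \leq H(a|y) + H(x|a)$, exactly what is needed.

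The argument involves only the Shannon inequalities (non-negativity of conditional entropy, chain rule, and the fact that conditioning reduces entropy), so there is no real obstacle; the only mild subtlety is choosing the right decomposition of $H(a)$ to make the bound appear directly. Substituting back, $H(a) \leq H(a|x) + H(a|y) + I(x;y)$, completing the proof.
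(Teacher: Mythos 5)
Your proof is correct. The left inequality is handled exactly as in the paper (conditioning reduces entropy). For the right inequality you take a genuinely different route: the paper applies submodularity to the edge sets $ax$ and $ay$ in one stroke, obtaining $H(ax)+H(ay)\ge H(axy)+H(a)$, substitutes $H(axy)=H(xy)$ (which is available from downstreamness, though not strictly needed since $H(a|xy)\ge 0$ would suffice), and subtracts $H(x)+H(y)$. You instead decompose $H(a)=H(a|x)+I(a;x)$, rewrite $I(a;x)$ in terms of $H(x|a)$ and $H(x|y)$, and reduce the claim to the triangle-type inequality $H(x|y)\le H(a|y)+H(x|a)$, which you establish by chain rule plus monotonicity. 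The paper's version is shorter once one sees which submodularity instance to invoke; yours unbundles that submodularity into a sequence of more elementary identities and one monotonicity step, and it makes explicit that the bound is a purely Shannon-type inequality holding for arbitrary triples of random variables, independent of the network structure. Both are valid derivations of the same fact.
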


\begin{proof}
$H(a) \leq H(a|x) + H(a|y) + I(x;y)$: Apply submodularity on $ax$ and $ay$ to get $H(ax) + H(ay) \geq H(axy) + H(a) = H(xy) + H(a)$, then subtract $H(x) + H(y)$ from both sides.\\
$H(a) \geq \max(H(a|x),H(a|y)):$ $H(a) \geq H(a|x)$ and $H(a) \geq H(a|y)$ because conditioning reduces entropy.
\end{proof}

\begin{obs}
$\max(H(b|x),H(b|y)) \leq H(b) \leq H(b|x) + H(b|y) + I(x;y)$
\label{obs:boundHb}
\end{obs}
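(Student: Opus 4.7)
The plan is to mirror the proof of Observation~\ref{obs:boundHa} verbatim, exchanging the roles of $a$ and $b$. The argument transfers without modification because downstreamness is symmetric in the two edges of the $2$-cycle: we have $H(bxy) = H(xy)$ by the same reasoning that gave $H(axy) = H(xy)$ in the previous observation, and in fact both equalities are established in the proof of Observation~\ref{10vals}.

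For the upper bound I would apply submodularity to the edge sets $\{b,x\}$ and $\{b,y\}$ to obtain
$$H(bx) + H(by) \;\geq\; H(bxy) + H(b).$$
Invoking downstreamness to replace $H(bxy)$ by $H(xy)$, then subtracting $H(x) + H(y)$ from both sides, yields
$$H(b|x) + H(b|y) \;\geq\; H(b) - I(x;y),$$
using the identity $I(x;y) = H(x) + H(y) - H(xy)$. Rearranging gives the desired upper bound. The lower bound is immediate from ``conditioning reduces entropy'': $H(b) \geq H(b|x)$ and $H(b) \geq H(b|y)$, so $H(b) \geq \max(H(b|x), H(b|y))$.

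There is no real obstacle in this argument; the observation is a cosmetic dual of Observation~\ref{obs:boundHa}, obtained by swapping the names of the two cycle edges. The only point worth flagging is that we must invoke downstreamness in the form $H(bxy) = H(xy)$; without it the submodularity inequality alone would not collapse the right-hand side to $H(xy) + H(b)$, and one would pick up a stray nonnegative term that breaks the clean upper bound. Since the other observations in this section (\ref{obs:bx_g_ax} and \ref{obs:ay_g_by}) already exhibit this symmetry pattern between $a$ and $b$, it is natural to simply note ``parallel to the proof of Observation~\ref{obs:boundHa}'' rather than reproduce the three lines of algebra.
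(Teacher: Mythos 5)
Your proposal matches the paper exactly: the paper's proof of this observation is simply ``Parallel to proof of Observation~\ref{obs:boundHa},'' and you have spelled out that parallel argument (submodularity on $\{b,x\}$ and $\{b,y\}$, downstreamness to get $H(bxy)=H(xy)$, and ``conditioning reduces entropy'' for the lower bound) correctly.
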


\begin{proof}
Parallel to proof of observation \ref{obs:boundHa}
\end{proof} 

\begin{obs}
$H(b|x)+H(a|y) \leq H(ab) \leq H(b|x)+H(a|y)+I(x;y)$
\label{obs:boundHab}
\end{obs}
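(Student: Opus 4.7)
}
The plan is to handle the two inequalities independently, each with a one-line argument that packages together ingredients already assembled in the earlier observations.

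For the upper bound $H(ab) \leq H(b|x) + H(a|y) + I(x;y)$, I would apply Shannon submodularity to the edge-sets $\{a,b,x\}$ and $\{a,b,y\}$. Their union is $\{a,b,x,y\}$ and their intersection is $\{a,b\}$, so submodularity gives
\[
H(abx) + H(aby) \;\geq\; H(abxy) + H(ab).
\]
Downstreamness then reduces $H(abx)$ to $H(bx)=H(b|x)+H(x)$ and $H(aby)$ to $H(ay)=H(a|y)+H(y)$, and the identity $H(abxy)=H(xy)$ (already used in the proof of Observation \ref{10vals}, since every coding function is determined by the sources) eliminates $a,b$ from the right-hand side. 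Rearranging is purely mechanical and yields exactly $H(ab) \leq H(b|x) + H(a|y) + I(x;y)$.

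For the lower bound $H(b|x) + H(a|y) \leq H(ab)$, I would simply invoke the Chicken and Egg inequality recorded in Theorem~\ref{thm:2cycle}, namely $H(ab) \geq H(abx) - H(x) + H(aby) - H(y)$. Applying downstreamness once more to each of the differences on the right converts $H(abx)-H(x)$ into $H(b|x)$ and $H(aby)-H(y)$ into $H(a|y)$, producing the desired bound immediately.

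There is no real obstacle here: both directions reduce to one application of an inequality already on the table (submodularity for the upper bound, Chicken and Egg for the lower bound) followed by substitution via downstreamness. The only step worth flagging explicitly is $H(abxy)=H(xy)$, but this is one of the identities bundled into Observation~\ref{10vals} and follows from the fact that $f_a$ and $f_b$ are functions of the message, and hence functions of $(x,y)$.
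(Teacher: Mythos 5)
Your proposal is correct and matches the paper's own proof: the lower bound is the Chicken and Egg inequality rewritten via downstreamness, and the upper bound is submodularity on the pair $\{a,b,x\},\{a,b,y\}$ combined with downstreamness and $H(abxy)=H(xy)$. The paper merely applies downstreamness before submodularity (writing $H(ay)+H(bx)=H(aby)+H(abx)\geq H(abxy)+H(ab)$) whereas you apply submodularity first and then substitute, but this is the same computation read in a different order.
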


\begin{proof}
$H(ab) \geq H(b|x)+H(a|y)$ by the chicken and egg inequality.\\
%\begin{align*}
%H(ab) &\geq H(abx) - H(x) + H(aby) - H(y)\\
%\Rightarrow H(ab) &\geq H(b|x) + H(a|y)\\
%\end{align*}
$H(ab) \leq H(b|x)+H(a|y)+I(x;y)$: by submodularity on $ay$ and $bx$:\\
$H(ay) + H(bx) = H(aby) + H(abx) \ge H(abxy) + H(ab)$.\\
$\Rightarrow H(ab) \le H(bx) + H(ay) - H(xy)$. 
\end{proof}

Now, we come to our case analysis for proving the forward direction of Theorem \ref{thm:2cycle}.

We first multiply our entropic vector by the least common denominator so that all the elements of the vector are integer.  We show that we can find a code that realizes this integer valued entropic vector.

Let $X_1,...X_{H(x|y)}$,$Z_1,...Z_{I(x;y)}$ be random variables originating at source x, and let $Y_1,...,Y_{H(y|x)}$, $Z_1,...,Z_{I(x;y)}$ be random variables originating at source y, where $X_i, Y_j, Z_k$ are independent for all $i,j,k$.

We split up the proof into 4 cases.  Case 1 corresponds to when $H(a)$ is greater than $H(a|x)+H(a|y)$ and $H(b)$ is greater than $H(b|x)+H(b|y)$.  Case 4 takes care of the instances when both $H(a)$ is less than $H(a|x)+H(a|y)$ and $H(b)$ is less than $H(b|x)+H(b|y)$.  Cases 2 and 3 are symmetric corresponding to when exactly one of $H(a)$ and $H(b)$ is greater than the sum of the conditional entropy on $x$ and $y$.  Cases 1,2 (or 3),4 correspond to distinctly different coding functions on edges $a$ and $b$.  Case 1 has the simplest codes - we send bits uncoded with the exception of possibly XORing $X$ and $Z$ or $Y$ and $Z$.  In cases 2 and 3 we need to XOR bits of $X,Y$ on one edge, and in case 4 we need to XOR bits of $X,Y$ on both edges in a manner similar to the example in Figure \ref{fig:offsetb}. 

\noindent \textbf{Case 1:}\\
$H(a) = H(a|x) + H(a|y) + f$, $f \ge 0$ and note $f \leq I(x;y)$ by Observation \ref{obs:boundHa}.\\
$H(b) = H(b|x) + H(b|y) + g$, $g \ge 0$ and note $g \leq I(x;y)$ by Observation \ref{obs:boundHb}.\\
$H(ab) = H(b|x) + H(a|y) + h$, and note $0 \leq h \leq I(x;y)$ by Observation \ref{obs:boundHab}.\\

\begin{obs}
$h \leq H(a|x) + H(b|y) + f + g$
\end{obs}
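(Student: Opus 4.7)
The plan is to show that this observation is nothing more than a reformulation of the standard Shannon subadditivity inequality $H(ab) \leq H(a) + H(b)$, and then simply substitute the definitions of $f, g, h$ to verify the equivalence.

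First I would rewrite the desired inequality $h \leq H(a|x) + H(b|y) + f + g$ using the three defining equations. From $H(ab) = H(b|x) + H(a|y) + h$, isolating $h$ gives $h = H(ab) - H(b|x) - H(a|y)$. Substituting into the claim and rearranging, the inequality is equivalent to
\[
H(ab) \leq \bigl( H(a|x) + f + H(a|y) \bigr) + \bigl( H(b|y) + g + H(b|x) \bigr).
\]
From $H(a) = H(a|x) + H(a|y) + f$ the first parenthesized expression equals $H(a)$, and from $H(b) = H(b|x) + H(b|y) + g$ the second equals $H(b)$. So the observation reduces precisely to subadditivity $H(ab) \leq H(a) + H(b)$.

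Since the hypothesis of the theorem assumes that $V$ satisfies Shannon's inequalities, subadditivity holds by assumption, and the proof is complete in one line. There is no genuine obstacle here; the only mild care is making sure that each of the three substitutions is performed in the correct direction so that the equality $H(a|x) + H(a|y) + f = H(a)$ (respectively $H(b|x) + H(b|y) + g = H(b)$) matches the grouping induced by the rearranged inequality. Once those substitutions are lined up, the claim is immediate from Shannon.
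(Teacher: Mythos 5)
Your proof is correct and follows the paper's approach exactly: the paper justifies this observation with the one-line remark that it is "implied by submodularity on $a$ and $b$," which is precisely the reduction to $H(ab) \le H(a)+H(b)$ that you carry out explicitly after substituting the definitions of $f$, $g$, $h$.
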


\begin{proof}
Implied by submodularity on $a$ and $b$.
\end{proof}

\begin{obs}
$h \ge \max(f,g)$
\label{obs:boundh}
\end{obs}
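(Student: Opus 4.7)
The plan is to derive the two inequalities $h \geq f$ and $h \geq g$ from the single chain $H(b \mid a) \geq H(b \mid a x)$, exploiting the downstreamness identity $H(abx) = H(bx)$ (and symmetrically $H(aby) = H(ay)$) to evaluate the right-hand side explicitly.

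First I would compute $H(b \mid a)$ in the Case 1 parameters. Using $H(ab) = H(b \mid x) + H(a \mid y) + h$ and $H(a) = H(a \mid x) + H(a \mid y) + f$,
$$
H(b \mid a) \;=\; H(ab) - H(a) \;=\; \bigl(H(b \mid x) - H(a \mid x)\bigr) + (h - f).
$$
Next I would evaluate $H(b \mid a x)$ using downstreamness:
$$
H(b \mid a x) \;=\; H(abx) - H(ax) \;=\; H(bx) - H(ax) \;=\; H(b \mid x) - H(a \mid x).
$$
Now the Shannon inequality $H(b \mid a) \geq H(b \mid a x)$ (conditioning reduces entropy) plugs these two expressions together and immediately yields $h - f \geq 0$, i.e. $h \geq f$.

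For the bound $h \geq g$ I would apply the symmetric argument: compute $H(a \mid b) = H(ab) - H(b) = (H(a \mid y) - H(b \mid y)) + (h - g)$, compute $H(a \mid b y) = H(aby) - H(by) = H(a \mid y) - H(b \mid y)$ using the other half of downstreamness, and combine with $H(a \mid b) \geq H(a \mid by)$ to conclude $h \geq g$.

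I do not expect a real obstacle here; the whole point of the case breakdown is that once one writes the quantities $H(a), H(b), H(ab)$ in terms of the slack parameters $f, g, h$, Observation~\ref{obs:boundh} is just the statement that the two slacks on the ``marginal'' sides are dominated by the slack on the joint side, and this follows purely from one application of ``conditioning reduces entropy'' together with the downstreamness identities $H(abx) = H(bx)$ and $H(aby) = H(ay)$. The only thing to double-check is that the algebra lines up so that the terms $H(b \mid x) - H(a \mid x)$ (respectively $H(a \mid y) - H(b \mid y)$) cancel cleanly, leaving $h \geq f$ (respectively $h \geq g$) as a genuine inequality rather than a trivial one; the cancellation does occur, and note also that these differences are nonnegative by Observations~\ref{obs:bx_g_ax} and~\ref{obs:ay_g_by}, consistent with $h \geq \max(f,g) \geq 0$.
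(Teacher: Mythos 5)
Your proof is correct and is essentially the same argument as the paper's: the paper starts from $H(x\mid a) \geq H(x\mid ab)$ while you start from $H(b\mid a) \geq H(b\mid ax)$, but both are just the nonnegativity of $I(b;x\mid a)$, combined with the same downstreamness identity $H(abx)=H(bx)$ and the same algebraic simplification to $h \geq f$. The symmetric step for $h \geq g$ likewise matches the paper's (unstated) ``similar'' argument.
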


\begin{proof}
\begin{align*}
H(x|a) & \ge H(x|ab) \text{           Conditioning reduces entropy}\\
H(ax) - H(a) & \ge H(abx) - H(ab)\\
H(ab) - H(bx) - H(a|y) + H(x) & \ge H(a) - H(ax) - H(a|y) + H(x)\\
H(ab) - H(b|x) - H(a|y) & \ge H(a) - H(a|x) - H(a|y)\\
h & \ge f\\
\end{align*}

The proof that $h \ge g$ is similar.

\end{proof}

We claim that the following code realizes the entropic vector and is serializable:\\
For notational convenience let $Z'_1 = Z_{f+1}, Z'_2 = Z_{f+2}, ..., Z'_{h - f - g} = Z_{h-g}$.  Any $Z'_i$ with $i> h-f-g$ we will take to be 0.\\
$$f_a = X_1,\ldots, X_{H(a|y)}, Y_1+ Z'_{1}, \ldots, Y_{H(a|x)} + Z'_{H(a|x)}, Z_1,\ldots Z_f$$
$$f_b = X_1+Z'_{H(a|x)+1}, \ldots, X_{H(b|y)}+Z'_{H(a|x)+H(b|y)}, Y_1, \ldots, Y_{H(b|x)}, Z_{h-g-1}, \ldots, Z_h$$
This is a valid code because $H(x|y) \ge H(a|y) \ge H(b|y)$, $H(y|x) \ge H(b|x) \ge H(a|x)$, $h\leq I(x;y)$, $h \leq H(a|x) + H(b|y) + f + g$, and $h \ge \max(f,g)$. It is easy to check that this code realizes the entropic vector.  It is serializable because $H(a|y)\geq H(b|y)$, $H(b|x)\geq H(a|x)$ and both sources know $Z$.\\  
%\Anote{Is it clear what I mean by ``valid code'' or do I need to be more specific?}

\noindent \textbf{Case 2:}\\
$H(a) = H(a|x) + H(a|y) - f$, $f \ge 0$ and note $f \le \min(H(a|x),H(a|y))$ by Observation \ref{obs:boundHa}.\\
$H(b) = H(b|x) + H(b|y) + g$, $g \ge 0$ and note $g \leq I(x;y)$ by Observation \ref{obs:boundHb}.\\
$H(ab) = H(b|x) + H(a|y) + h$, and note $0 \leq h \leq I(x;y)$ by Observation \ref{obs:boundHab}.\\

\begin{obs}
$h \leq (H(a|x) - f) + H(b|y) + g$
\end{obs}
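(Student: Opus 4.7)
The plan is to show that this upper bound on $h$ is nothing more than Shannon's subadditivity $H(ab) \le H(a) + H(b)$ rewritten in terms of the Case~2 parameters $f, g, h$. Recall the Case~2 definitions: $f = H(a|x) + H(a|y) - H(a)$, $g = H(b) - H(b|x) - H(b|y)$, and $h = H(ab) - H(b|x) - H(a|y)$. First I would substitute these into the desired inequality $h \le (H(a|x) - f) + H(b|y) + g$. Expanding $H(a|x) - f = H(a) - H(a|y)$ and $g = H(b) - H(b|x) - H(b|y)$, the right-hand side simplifies to $H(a) - H(a|y) + H(b) - H(b|x)$. Adding $H(b|x) + H(a|y)$ to both sides cancels those terms on the right and turns the left-hand side into $H(ab)$, so the inequality is equivalent to $H(ab) \le H(a) + H(b)$.

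The final step is to invoke subadditivity of Shannon entropy, $H(ab) \le H(a) + H(b)$, which holds for any jointly distributed pair of random variables and is part of the standard Shannon inequalities that $V$ is assumed to satisfy. This completes the argument. There is no real obstacle here: the algebraic manipulation is short and the only inequality used is a basic Shannon inequality. Structurally, this observation plays exactly the same role as the analogous bound $h \le H(a|x) + H(b|y) + f + g$ in Case~1 (both are subadditivity on $a,b$), with only a sign change in the term involving $f$ reflecting the fact that in Case~2 we have $H(a) = H(a|x) + H(a|y) - f$ instead of $H(a) = H(a|x) + H(a|y) + f$. One should simply be careful to track this sign when performing the substitution, but otherwise the derivation is a routine rearrangement.
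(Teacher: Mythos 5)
Your derivation is correct, and it matches the paper's approach: the paper proves the analogous bound in Case~1 by invoking ``submodularity on $a$ and $b$'' (i.e.\ $H(ab) \le H(a) + H(b)$), and leaves the Case~2 version unproved precisely because the same algebraic rearrangement yields the same subadditivity inequality. You correctly tracked the sign change in the definition of $f$ between the two cases, which is the only thing that differs.
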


We claim that the following code realizes the entropic vector and is serializable:\\
Any $Z_i$ with $i> h$ we will take to be 0.\\
$$f_a = X_1+Y_1, X_2+Y_2, \ldots, X_f+Y_f, X_{f+1}, \dots, X_{H(a|y)}, Y_{f+1}+ Z_{g+1}, \ldots, Y_{H(a|x)} + Z_{g+H(a|x)-f}$$
$$f_b = X_1+Z_{g+H(a|x)-f+1}, \ldots, X_{H(b|y)}+Z_{g+H(a|x)-f+H(b|y)}, Y_1, \ldots, Y_{H(b|x)}, Z_{1}, \ldots, Z_g$$
This is a valid code for the same reasons as Case 1, and also because $h\leq (H(a|x) - f) + H(b|y) + g$, and $f \le H(a|x)$ and $f \le H(a|y)$. It is easy to check that this code realizes the entropic vector; here it is important that $g \le h$ which is true by the argument from Observation \ref{obs:boundh}.  It is serializable because we can send $Y_1,..., Y_{H(b|x)}$ along edge $b$, then because $H(b|x)\geq H(b|y)$ we can send everything along edge $a$, and then because $H(a|y) \ge H(a|x)$ we can send all the $X$s and $Z$s on edge $b$.\\  

\noindent \textbf{Case 3:}\\
$H(a) = H(a|x) + H(a|y) + f$, $f \ge 0$ and note $f \leq I(x;y)$ by Observation \ref{obs:boundHa}.\\
$H(b) = H(b|x) + H(b|y) - g$, $g \ge 0$ and note $g \le \min(H(b|x),H(b|y))$ by Observation \ref{obs:boundHb}.\\
$H(ab) = H(b|x) + H(a|y) + h$, and note $0 \leq h \leq I(x;y)$ by Observation \ref{obs:boundHab}.\\

Symmetric to Case 2.\\

\noindent \textbf{Case 4:}\\
$H(a) = H(a|x) + H(a|y) - f$, $f \ge 0$ and note $f \le \min(H(a|x),H(a|y))$ by Observation \ref{obs:boundHa}.\\
$H(b) = H(b|x) + H(b|y) - g$, $g \ge 0$ and note $g \le \min(H(b|x),H(b|y))$ by Observation \ref{obs:boundHb}.\\
$H(ab) = H(b|x) + H(a|y) + h$, and note $0 \leq h \leq I(x;y)$ by Observation \ref{obs:boundHab}.\\

Applying the inequality $H(a) + H(b) > H(a|x) + H(b|y)$, together with the fact that $H(a) \ge H(a|x)$ and $H(b) \ge H(b|x)$ implies that at least one of $H(a) > H(a|x)$, $H(b) > H(b|y)$ holds.  Or, written in terms of $f,g$ this means that at least one of $f < H(a|y)$, $g<H(b|x)$ holds.

\begin{obs}
$h \leq (H(a|x) - f) + (H(b|y) - g)$
\end{obs}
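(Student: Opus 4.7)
The plan is to observe that this bound is an immediate consequence of the subadditivity of Shannon entropy, together with the parameter substitutions that define Case 4. Specifically, subadditivity gives $H(ab) \le H(a) + H(b)$, and this is the only ingredient needed.

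First I would write out the three defining equations of Case 4, namely
\begin{align*}
H(a) &= H(a|x) + H(a|y) - f, \\
H(b) &= H(b|x) + H(b|y) - g, \\
H(ab) &= H(b|x) + H(a|y) + h,
\end{align*}
and then apply the Shannon inequality $H(ab) \le H(a) + H(b)$. Substituting the three expressions above into this inequality yields
\[
H(b|x) + H(a|y) + h \;\le\; H(a|x) + H(a|y) - f + H(b|x) + H(b|y) - g.
\]
Cancelling the common terms $H(b|x)$ and $H(a|y)$ from both sides gives
\[
h \;\le\; \bigl(H(a|x) - f\bigr) + \bigl(H(b|y) - g\bigr),
\]
which is exactly the claim.

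There is really no obstacle here: once the three parametrizations are substituted into subadditivity, the inequality drops out by pure algebra. The only thing worth double-checking is that subadditivity is among the Shannon inequalities assumed on $V$; since submodularity on the empty edge set reduces to $H(a) + H(b) \ge H(ab)$, this is indeed implied by the hypothesis that $V$ satisfies Shannon's inequalities. Hence the observation requires no more than one line of calculation.
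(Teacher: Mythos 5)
Your proof is correct and matches the paper's intent: the paper proves the analogous Case~1 observation by invoking ``submodularity on $a$ and $b$,'' which is precisely the inequality $H(a)+H(b)\ge H(ab)$ you use, and leaves the Case~4 version unproved as an evident analogue. Your substitution-and-cancel derivation is exactly the intended argument.
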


\noindent \textbf{Case 4a:} $f< H(a|y)$\\

We claim that the following code realizes the entropic vector and is serializable:\\
Any $Z_i$ with $i> h$ we will take to be 0.\\
$$f_a = X_2+Y_1, X_3+Y_2, \ldots, X_{f+1}+Y_f, X_1, X_{f+2}, \dots, X_{H(a|y)}, Y_{f+1}+ Z_{1}, \ldots, Y_{H(a|x)} + Z_{H(a|x)-f}$$
$$f_b = X_1+Y_1, X_2+Y_2, \ldots, X_g+Y_g, X_{g+1}+Z_{H(a|x)-f+1}, \ldots, X_{H(b|y)}+Z_{H(a|x)-f+H(b|y)-g}, Y_{g+1}, \ldots, Y_{H(b|x)}$$  
This is a valid code because $f+1 \le H(a|y)$, $h \leq (H(a|x) - f) + (H(b|y) - g)$, $f \le \min(H(a|x),H(a|y))$ and $g \le \min(H(b|x),H(b|y))$. It is easy to check that this code realizes the entropic vector.  To show it is serializable, we first consider the case when $f \leq g$: we can send $X_1$ along edge $a$; then $X_1+Y_1$ along edge $b$; then $X_2+Y_1$ along edge $a$; \ldots; then $X_{f+1}+Y_f, X_{f+2}, \dots, X_{H(a|y)}$ along edge $a$; then because $H(a|y)\geq H(b|y)$, we can send then everything along edge $b$; and then since $H(b|x) \ge H(a|x)$ we can complete the transmission for edge $a$.  The case for $f>g$ is very similar.\\  

\textbf{Case 4b:} $g< H(b|x)$\\
This case is similar, but we switch the roles of edge $a$ and edge $b$.
\section{Proofs omitted from Section~\ref{sec:serialize}}

\subsection{Linear codes}
\label{ap:linear}
\renewcommand{\algorithmiccomment}[1]{    $/^*$\textit{ #1 }$^*/$}
% \nc{\RETURN}{\textbf{return}}
\begin{algorithm}[h!]
  \caption{Greedy Algorithm for Linear Codes} \label{alg:greedy-linear}
  \CallSer
  \begin{algorithmic}[1]
  \STATE \COMMENT{$\nwc = \networkcode{}, G = (V,E,S)$ is a network code with coding functions over field $\field$.  We construct $\Sigma_e^{(1..k)}$ and $f_e^{(1..k)}$.}  
  \STATE $A_e \gets 0$ for all $e \in E$. \COMMENT{$A_e \subseteq T_e$ represents the information we have sent over edge $e$}
  \STATE $A_s \gets T_s$ for all $s \in \Src$.
	\STATE $B_e \gets T_e \intsct \left(\Spn_{s \in \In(e)} A_s\right)$ for all $e \in E$. \COMMENT{$B_e \subseteq T_e$ represents the information that the tail of $e$ knows about $T_e$}
	\STATE $i = 1$
  \WHILE {$\exists \; e=(u,v)$ in $G$ such that $A_e \neq B_e$} 
\STATE Let $x_e$ be any vector in $B_e$ that doesn't lie in $A_e$
\STATE $\Sigma_e^{(i)} \gets \field$, $f_e^{(i)} \gets x_e$
\STATE $A_e \gets A_e \spn \balg{x_e}$
\STATE $\forall \; e' \in E, e' \ne e, \Sigma_{e'}^{(i)} \gets 0$, $f_{e'}^{(i)} \gets 0$
\STATE $\forall \, e' = (v, \cdot) \in E$, $B_{e'} \gets T_{e'} \intsct (B_{e'} \spn \{x_e\})$ \COMMENT{Node $v$ ``learns'' $x_e$}
\STATE $i++$
  \ENDWHILE
  %\RETURN $\{A_e\}$, $\nwc'$
  \end{algorithmic}
\end{algorithm}

\begin{thm}[Theorem \ref{thm:TFAE} restated]
For a network code $\nwc = \networkcode{}$, the following are equivalent:
\begin{enumerate}
\item \label{item:ser} $\nwc$ is not serializable
\item \label{item:greedy} \CallSer returns $\{A_e\}$ \st $\exists \, e, A_e \ne T_e$
\item \label{item:IV} $\nwc$ has a non-trivial information vortex
\end{enumerate}
\end{thm}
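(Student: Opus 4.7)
My plan is to establish the three-way equivalence via the cycle $(2) \Rightarrow (3) \Rightarrow (1) \Rightarrow (2)$, with the last implication handled contrapositively. Information vortices play a dual role throughout: they arise naturally as the output of \CallSer, and they serve as obstructions to any serialization.

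The key technical ingredient is an invariant of the greedy algorithm, which I will establish by induction on the iteration count: at every moment during execution, $A_e \subseteq B_e \subseteq T_e \cap \bigl(\sum_{\hat e \in \In(e)} A_{\hat e}\bigr)$, with both inclusions tight upon termination. Granted this, $(2) \Rightarrow (3)$ is immediate. If some $A_e \neq T_e$ at termination, set $W_e := A_e$: the initialization gives $W_s = T_s$ for sources, while the termination condition $A_e = B_e$ combined with the invariant gives $W_e = T_e \cap \sum_{\hat e \in \In(e)} W_{\hat e}$ for ordinary edges, so $\{W_e\}$ is a non-trivial information vortex.

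For $(3) \Rightarrow (1)$, I will argue by contradiction: assume $\{W_e\}$ is a non-trivial information vortex and that $\nwc$ admits a serialization $\{f_e^{(1..k)}\}$. Define $\tilde W_e^{(i)} \subseteq T_e$ to be the linear subspace of $\msg^*$ consisting of linear functionals determined by $(f_e^{(j)})_{j \leq i}$. An induction on $i$ shows $\tilde W_e^{(i)} \subseteq W_e$ for every $e$: condition~\ref{def:ser:4} forces $f_e^{(i)}$ to be determined by transmissions into $\tl e$ from strictly earlier time steps, so $\tilde W_e^{(i)} \subseteq T_e \cap \sum_{\hat e \in \In(e)} \tilde W_{\hat e}^{(i-1)} \subseteq W_e$ by the inductive hypothesis and the vortex axiom. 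Condition~\ref{def:ser:3} then forces $\tilde W_e^{(k)} = T_e$, yielding $W_e = T_e$ for every $e$ and contradicting non-triviality.

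For $(1) \Rightarrow (2)$, contrapositively, I assume $A_e = T_e$ for every $e$ at termination and verify that the tuple $\{\Sigma_e^{(1..k)}, f_e^{(1..k)}\}$ output by \CallSer\ is a bona fide serialization. Items~\ref{def:ser:1} and \ref{def:ser:2} hold because each $f_e^{(i)}$ is a linear functional in $T_e = f_e^*(\Sigma_e^*)$ and hence factors through $f_e$; item~\ref{def:ser:3} holds because $A_e = T_e$ means the $f_e^{(i)}$'s collectively span $T_e$; and item~\ref{def:ser:4} follows from the invariant, which guarantees each chosen $x_e \in B_e$ lies in the linear span of messages previously transmitted into $\tl e$, hence is computable from them. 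The principal obstacle is making the invariant watertight: the forward inclusions are easy to maintain, but verifying that the incremental $B_e$-update tracks $T_e \cap \sum_{\hat e \in \In(e)} A_{\hat e}$ faithfully, so that the algorithm never terminates with sendable information missed, is what binds the local greedy decisions to the global vortex structure.
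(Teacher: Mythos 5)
Your overall plan matches the paper's: both establish the cycle through the same three implications, both read off an information vortex from the terminal $\{A_e\}$ of the greedy algorithm via the invariant $B_e = T_e \cap \sum_{\hat e \in \In(e)} A_{\hat e}$, and both verify the four clauses of Definition~\ref{def:ser} directly for the output of \CallSer. The invariant obstacle you flag at the end is real but is no worse here than in the paper, which also disposes of it with a one-line "by induction on our algorithm."

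The step that actually needs repair is your $(3) \Rightarrow (1)$. You define $\tilde W_e^{(i)}$ to be the linear functionals determined by $(f_e^{(j)})_{j\le i}$ and claim that condition~\ref{def:ser:4} yields $\tilde W_e^{(i)} \subseteq \sum_{\hat e \in \In(e)} \tilde W_{\hat e}^{(i-1)}$. Condition~\ref{def:ser:4} says $f_e^{(i)}$ is \emph{some function} of the earlier transmissions, and nothing in Definition~\ref{def:ser} forces those transmissions to be linear maps. For non-linear blocks $g_1,\ldots,g_r$ it is false in general that every linear functional determined by the joint tuple $(g_1,\ldots,g_r)$ lies in the span of the linear functionals determined by the $g_j$ individually: the latter space is $\bigl(\sum_j K(g_j)\bigr)^{\perp}$ where $K(g_j)$ is the span of $\{m-m' : g_j(m)=g_j(m')\}$, while the former is $K(g_1,\ldots,g_r)^{\perp}$, and $K(g_1,\ldots,g_r)$ can be a proper subspace of $\bigcap_j K(g_j)$. (Take $\msg=\field_2^3$ and two non-linear quotient maps with $K(g_1)=\langle e_1,e_2\rangle$, $K(g_2)=\langle e_2,e_3\rangle$ but common refinement the trivial partition; then $e_2^*$ is determined jointly but lies in neither individual span nor their sum.) So your inductive containment does not follow from the cited condition. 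The paper's version of this implication sidesteps the issue entirely by working in $\msg$ rather than $\msg^*$: it picks a minimal time $i^*$ at which some $f_{e^*}^{(i^*)}$ separates a pair $m_1,m_2 \in W_{e^*}^\perp$, and then uses the vortex identity in the dualized form $W_{e^*}^\perp = T_{e^*}^\perp + \bigcap_{e'\in\In(e^*)} W_{e'}^\perp$ to shift the pair into $\bigcap_{e'} W_{e'}^\perp$ without changing the value of $f_{e^*}^{(i^*)}$ (since $T_{e^*}^\perp = \ker f_{e^*}$ and every $f_{e^*}^{(j)}$ is constant on $\ker f_{e^*}$-cosets by condition~\ref{def:ser:2}); then minimality is violated by whichever incoming $f_{e'}^{(j')}$, $j' < i^*$, separates the shifted pair. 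That argument needs no linearity of the $f_e^{(j)}$, only of $f_e$ itself. To rescue your version you would have to either restrict to linear serializations (which is circular until the theorem is proved) or redefine $\tilde W_e^{(i)}$ recursively via the right-hand side $T_e \cap \sum_{\hat e} \tilde W_{\hat e}^{(i-1)}$ and then re-derive $\tilde W_e^{(k)} \supseteq T_e$ from scratch, which amounts to redoing the paper's primal argument.
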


\begin{proof}
\underline{$\neg$\textbf{\ref{item:greedy} $\Rightarrow$ $\neg$\ref{item:ser}}} If \CallSer returns $\{A_e\}$ \st $\forall e, \,A_e = T_e$ then $\nwc$ is serializable:\\
We show that the $f_e^{(1..k)}, \Sigma_e^{(1..k)}$ created by \CallSer satisfy the conditions in Definition \ref{def:ser}:
\begin{enumerate}
\item $f_e^{(i)}: \msg \rightarrow \Sigma_e^{(i)}$ by construction.
\item The non-zero functions $f_e^{(i)}$ form a basis for $T_e$.  Because linear maps are indifferent to the choice of basis, if $f_e(m_1) = f_e(m_2)$ then in any basis, each coordinate of $f_e(m_1)$ equals the corresponding coordinate of $f_e(m_2)$, and thus $f_e^{(i)}(m_1) = f_e^{(i)}(m_2)$ for all $i$. 
\item If $f_e(m_1) \ne f_e(m_2)$ then for any basis we choose to represent $f_e$, the values $f_e(m_1), f_e(m_2)$ will differ in at least one coordinate, and thus $\exists i, \; f_e^{(i)}(m_2) \ne f_e^{(i)}(m_2)$.
\item When we assign a function $f_e^{(i)} = x_e$ we have that $x_e$ is in $B_e$ which guarantees it is computable from information already sent to the tail of $e$.
\end{enumerate}

\noindent \underline{\textbf{\ref{item:greedy} $\Rightarrow$ \ref{item:IV}}} If \CallSer returns $\{A_e\}$ \st $\exists \, e \,A_e \ne T_e$ then $\nwc$ has a non-trivial IV.\\
We claim the the vector spaces $\{A_e\}$ returned by \CallSer form a non-trivial IV.  $\{A_e\}$ is non-trivial by hypothesis, so it remains to show it is an $IV$.
$\{A_e\}$ satisfies property (1):  For each $S \in \Src$, $A_S = T_S$ by construction (Line 3 of \CallSer).\\
$\{A_e\}$ satisfies property (2):  By induction on our algorithm, $B_e$ is exactly $T_e \intsct \left( \Spn_{e' \in \In(e)} A_{e'} \right)$.  At termination, $B_e = A_e$ for all $e \in E$.  So, we have that $A_e = T_e \intsct \left( \Spn_{e' \in \In(e)} A_{e'} \right)$.\\
 
\noindent \underline{\textbf{\ref{item:IV} $\Rightarrow$ \ref{item:ser}}} If $\nwc$ has a non-trivial IV then it isn't serializable.\\
Suppose for contradiction that $\nwc = \networkcode{}, G = (V,E,\Src)$ is serializable.  %The general idea of the proof is simple. If we have an IV in $\nwc$ and the graph is serializable then we must have an IV in the serialization DAG.  But we cannot have an IV in a DAG because there is no first edge that cannot distinguish between every message in its alphabet.  
Let $f_e^{(1..k)}$ and $\Sigma_e^{(1..k)}$ satisfy the conditions of definition \ref{def:ser}.  Let $\{W_e\}$ be a non-trivial IV for $\nwc$.

We say that a function $f_e^{(j)}$ has property $P$ if there $\exists \; m_1, m_2 \in \msg$ such that $f_e^{(j)}(m_1) \ne f_e^{(j)}(m_2)$ and $m_1, m_2 \in W_e^{\perp}$.  There must be such a function since our IV is non-trivial and $\Sigma_e^{(1..k)}, f_e^{(1..k)}$ is a serialization of $\nwc$.  Let $i^*$ be the smallest $i$ such that any function satisfies property $P$ and suppose $f_{e^*}^{i^*}$ satisfies $P$ with messages $m_1^*, m_2^*$.  

By definition, $W_{e^*} = T_{e^*} \intsct \left( \Spn_{e' \in \In(e^*)} W_{e'} \right)$,  so $m_1^*, m_2^* \in W_{e^*}^{\perp}$ implies that for all $e' \in \In(e^*)$, $m_1^*, m_2^* \in W_{e'}^{\perp}$.  But, $f_{e^*}^{i^*}$ can distinguish between $m_1^*, m_2^*$ so at least one of $e' \in \In(e^*)$ must also be able to distinguish between $m_1^*, m_2^*$ at a time \emph{before} $i^*$.  Therefore, there exists some $f_{e'}^{i'}$, $i'<i^*$ that satisfies property $P$, a contradiction to the fact that $i^*$ was the smallest such $i$.
\end{proof}

\section{Proofs omitted from Section~\ref{sec:general}}
\label{ap:general}

The following lemma is standard; for completeness,
we provide a proof here.

\begin{lem} \label{lem:balg}
Suppose $f_1:S \rightarrow T_1$
and $f_2:S \rightarrow T_2$ are two functions
on a set $S$.
\begin{enumerate}
\item \label{lem:balg-1}
$\balg{f_2} \subseteq \balg{f_1}$ if and only if 
there exists a function $g:T_1 \rightarrow T_2$
such that $f_2 = g \circ f_1.$
\item \label{lem:balg-2}
Suppose $S$ is finite.  
If $f_1 \times f_2$ denotes the function 
$S \rightarrow T_1 \times T_2$ defined by
$x \mapsto (f_1(x),f_2(x))$, then
$\balg{f_1} \spn \balg{f_2} = \balg{f_1 \times f_2}.$
\end{enumerate}
\end{lem}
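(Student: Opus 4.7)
Proof proposal. The plan is to handle the two parts separately, using the easy direction of (\ref{lem:balg-1}) as a tool for (\ref{lem:balg-2}), and to exploit the atom description from Definition~\ref{def:balg} throughout.

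For part (\ref{lem:balg-1}), the reverse direction is immediate: if $f_2 = g \circ f_1$, then any $b \circ f_2 = (b \circ g) \circ f_1 \in \balg{f_1}$, so $\balg{f_2} \subseteq \balg{f_1}$. For the forward direction, the idea is that knowing all binary functions of $f_2$ is enough to pin down $f_2$ itself, and if each such function factors through $f_1$ then so must $f_2$. Concretely, for each $t \in T_2$, the indicator $\mathbf{1}_{\{f_2 = t\}}$ lies in $\balg{f_2}$ and hence in $\balg{f_1}$, so there exists $b_t : T_1 \to \{0,1\}$ with $\mathbf{1}_{\{f_2(x)=t\}} = b_t(f_1(x))$ for all $x \in S$. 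The key observation is that for each $y \in f_1(S)$ exactly one $t$ satisfies $b_t(y) = 1$ (namely $t = f_2(x)$ for any $x$ with $f_1(x) = y$, which is well-defined because different preimages would otherwise yield different $b_t$-values). Defining $g(y)$ to be that unique $t$ on $f_1(S)$ and extending arbitrarily off the image gives a function with $f_2 = g \circ f_1$.

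For part (\ref{lem:balg-2}), the containment $\balg{f_1} \spn \balg{f_2} \subseteq \balg{f_1 \times f_2}$ follows from part (\ref{lem:balg-1}): the projections $\pi_i : T_1 \times T_2 \to T_i$ satisfy $f_i = \pi_i \circ (f_1 \times f_2)$, so $\balg{f_i} \subseteq \balg{f_1 \times f_2}$ for $i = 1,2$, and $\balg{f_1 \times f_2}$ is a subalgebra containing both. For the reverse containment, my preferred route is through atoms, using finiteness of $S$. The atoms of $\balg{f_i}$ are the nonempty fibers $f_i^{-1}(y_i)$, and the atoms of $\balg{f_1} \spn \balg{f_2}$ are precisely the nonempty intersections $f_1^{-1}(y_1) \cap f_2^{-1}(y_2)$. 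But these are exactly the fibers $(f_1 \times f_2)^{-1}(y_1,y_2)$, which are the atoms of $\balg{f_1 \times f_2}$. Since two Boolean subalgebras of $2^S$ with the same atoms coincide (each is generated by the indicators of its atoms), equality follows.

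The main obstacle is really just the forward direction of part (\ref{lem:balg-1}): one has to verify that the candidate $g$ is well-defined, which amounts to checking that $f_1(x) = f_1(x')$ forces $f_2(x) = f_2(x')$. This is immediate once one observes that if $f_2(x) \neq f_2(x')$ then the indicator function of $\{f_2 = f_2(x)\}$ separates $x$ from $x'$, while every function in $\balg{f_1}$ assigns them the same value --- a contradiction to $\balg{f_2} \subseteq \balg{f_1}$. The argument for part (\ref{lem:balg-2}) is then essentially bookkeeping with atoms, and finiteness of $S$ is used only to ensure the atom description applies.
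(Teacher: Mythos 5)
Your proof is correct and follows essentially the same route as the paper's. For part (\ref{lem:balg-1}) your construction of $g$ is the paper's construction in disguise: the sets $b_t^{-1}(1)$ you produce are exactly the paper's $V_u$, and your well-definedness check (that $f_1(x)=f_1(x')$ forces $f_2(x)=f_2(x')$, via the indicator of $f_2^{-1}(f_2(x))$) is the same disjointness observation the paper makes. For part (\ref{lem:balg-2}) both proofs reduce to the identity $(f_1\times f_2)^{-1}(t_1,t_2)=f_1^{-1}(t_1)\cap f_2^{-1}(t_2)$; the paper observes that every member of $\balg{f_1\times f_2}$ is a finite union of such fibers, whereas you package the same fact as ``the two algebras have the same atoms.'' Your phrasing quietly uses the standard fact that the atoms of a join $\balg{f_1}\spn\balg{f_2}$ are the nonempty intersections of atoms of the two summands; this is true and easy (the set of binary functions constant on a given pair $\{x,x'\}$ is a subalgebra, so if it contains $\balg{f_1}$ and $\balg{f_2}$ it contains their join), but it is worth spelling out if you want the argument to be fully self-contained. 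Either presentation is fine and neither is more general than the other.
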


\begin{proof}
A Boolean subalgebra of $2^S$ can be equivalently
described as a collection of subsets of $S$, closed
under union, intersection, and complementation,
by equating a $\{0,1\}$-valued function $b$ with
the set $b^{-1}(1).$  In this proof we adopt the 
``collection of subsets'' definition of a Boolean
subalgebra of $2^S$, since it is more convenient.
Note that under this interpretation, if $f:S
\rightarrow T$ is any function then $\balg{f}$ 
consists of all subsets of the form $f^{-1}(U), 
\; U \subseteq T$.

If $f_2 = g \circ f_1$ for some $g$, then every set
of the form $f_2^{-1}(U)$ can be expressed as 
$f_1^{-1}(g^{-1}(U))$ which shows that $\balg{f_2}
\subseteq \balg{f_1}.$  Conversely, if 
$\balg{f_2} \subseteq \balg{f_1}$ then 
for every $u \in T_2$ the set $f_2^{-1}(\{u\}) \in \balg{f_2}$
belongs to $\balg{f_1}$, i.e. it can be expressed as $f_1^{-1}(V_u)$ 
for some set $V_u \subseteq T_1.$  The sets $f_1^{-1}(V_u)$
are disjoint as $u$ ranges over the elements of $T_2$
so the sets $V_u$ themselves must be disjoint.  Define
$g(v) = u$ if $v \in V_u$ for some $u \in T_2$, and
define $g(v)$ to be an arbitrary element of $T_2$ otherwise.
For any $x \in S,$ if $u=f_2(x)$ then $x \in f_2^{-1}(u) = 
f_1^{-1}(V_u)$, which implies that $g(f_1(x)) = u.$
Hence $f_2 = g \circ f_1$ as desired.

To prove (\ref{lem:balg-2}) we argue as follows.
Clearly $\balg{f_1}, \balg{f_2} \subseteq \balg{f_1 \times f_2}$,
so $\balg{f_1} \spn \balg{f_2} \subseteq \balg{f_1 \times f_2}$ as
well.  For the reverse inclusion, note that every element
of $\balg{f_1 \times f_2}$ can be expressed as a finite
union of sets of the form $(f_1 \times f_2)^{-1}(t_1,t_2)$.
Every such set can be expressed as $f_1^{-1}(t_1) \cap
f_2^{-1}(t_2)$, which proves that it belongs to
$\balg{f_1} \spn \balg{f_2}$.
\end{proof}

\begin{algorithm}[h!]
  \caption{Greedy algorithm for general network codes} \label{alg:greedy-linear}
  \GenSer
  \begin{algorithmic}[1]
  \STATE \COMMENT{$\nwc = \networkcode{}, G = (V,E,S)$ is a network code.} 
  \STATE \COMMENT{We construct $\Sigma_e^{(1..k)}$ and $f_e^{(1..k)}$.}  
  \STATE $\bal_e \gets 0$ for all $e \in E$. \COMMENT{$\bal_e \subseteq \balg{f_e}$ represents the information we have sent over edge $e$}
  \STATE $\bal_s \gets \balg{f_s}$ for all $s \in \Src$.
	\STATE $B_e \gets \balg{f_e} \intsct \left(\Spn_{s \in \In(e)} \bal_s\right)$ for all $e \in E$. 
        \STATE \COMMENT{$B_e \subseteq \balg{f_e}$ represents the information that the tail of $e$ knows about $f_e$}
	\STATE $i = 1$
  \WHILE {$\exists \; e=(u,v)$ in $G$ such that $\bal_e \neq B_e$} 
\STATE Let $x_e$ be any binary-valued function in $B_e \setminus \bal_e$.
\label{line:xe}
\STATE $\Sigma_e^{(i)} \gets \{0,1\}$, $f_e^{(i)} \gets x_e$
\STATE $A_e \gets A_e \spn \balg{x_e}$
\STATE $\forall \; e' \in E, e' \ne e, \Sigma_{e'}^{(i)} \gets \{0\}$, $f_{e'}^{(i)} \gets 0$
\STATE $\forall \, e' = (v, \cdot) \in E$, $B_{e'} \gets \balg{f_{e'}} \intsct (B_{e'} \spn \balg{x_e})$ \COMMENT{Node $v$ ``learns'' $x_e$}
\STATE $i \gets i+1$
  \ENDWHILE
  %\RETURN $\{A_e\}$, $\nwc'$
  \end{algorithmic}
\end{algorithm}

\begin{lem} \label{lem:restriction}
For a given network code $\Phi$, restrictions 
$\Phi'$ of $\Phi$ are in one-to-one correspondence with
semi-vortices $\gdiv$.  The correspondence
maps $\gdiv$ to the network code $\Phi'[\gdiv]$
whose alphabets are $\Sigma'_e = \atoms(\bal_e)$
and whose coding functions are the functions
$f'_e = f_{\bal_e}$ defined in Definition~\ref{def:balg}.
The inverse correspondence maps $\Phi'$ to the
unique semi-vortex $\gdiv[\Phi']$ satisfying
$\bal_r = \balg{f'_r}$ for all $r \in E \cup S.$
\end{lem}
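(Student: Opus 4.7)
The plan is to unpack both directions of the correspondence using Lemma~\ref{lem:balg} as the main tool, and then verify that the two assignments invert one another up to the canonical identification of a coding function with the partition of $\msg$ into its fibres. For the forward direction, I start from a semi-vortex $\gdiv=(\bal_e)_{e\in E\cup S}$ and define $\Phi'[\gdiv]$ by $\Sigma'_e=\atoms(\bal_e)$ and $f'_e=f_{\bal_e}$. Two things need to be checked. First, $\Phi'[\gdiv]$ is a restriction of $\Phi$: the semi-vortex inclusion $\bal_e\subseteq\balg{f_e}$ together with Lemma~\ref{lem:balg}(\ref{lem:balg-1}) produces a factorization $f'_e=g_e\circ f_e$. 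Second, $\Phi'[\gdiv]$ satisfies Definition~\ref{def:nwc}: iterating Lemma~\ref{lem:balg}(\ref{lem:balg-2}) over the incoming edges of $u$ rewrites $\spn_{\hat{e}\in\In(u)}\bal_{\hat{e}}$ as $\balg{\prod_{\hat{e}\in\In(u)}f'_{\hat{e}}}$, so the other semi-vortex inclusion $\bal_e\subseteq\spn_{\hat{e}}\bal_{\hat{e}}$ combined with Lemma~\ref{lem:balg}(\ref{lem:balg-1}) exhibits $f'_e$ as a function of the tuple of incoming coding functions. On source edges, $\bal_s=\balg{f_s}$ forces $f'_s$ to generate the same subalgebra as $f_s$, so source information is preserved.

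For the reverse direction, I take a restriction $\Phi'$ and set $\bal_r=\balg{f'_r}$. For sources, the equality $\balg{f'_s}=\balg{f_s}$ gives condition~(\ref{gdiv-s}) of Definition~\ref{def:gdiv}. For an ordinary edge $e=(u,v)$, the restriction property $f'_e=g_e\circ f_e$ combined with Lemma~\ref{lem:balg}(\ref{lem:balg-1}) yields $\bal_e\subseteq\balg{f_e}$; meanwhile, because $\Phi'$ is itself a valid network code, $f'_e$ is a function of $\prod_{\hat{e}\in\In(u)}f'_{\hat{e}}$, so $\bal_e\subseteq\balg{\prod_{\hat{e}}f'_{\hat{e}}}=\spn_{\hat{e}}\bal_{\hat{e}}$ by Lemma~\ref{lem:balg}. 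Intersecting these two containments gives exactly condition~(\ref{semi-vortex}) of Definition~\ref{def:gdiv}.

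Finally, I would verify that the two assignments are mutual inverses. Starting from a semi-vortex and going round sends $(\bal_e)\mapsto(\balg{f_{\bal_e}})=(\bal_e)$, using the immediate fact that $\balg{f_{\bal}}=\bal$ for any subalgebra. Starting from a restriction $\Phi'$ and going round returns the code whose coding function on $r$ is $f_{\balg{f'_r}}$, which has the same fibres as $f'_r$ and hence agrees with $f'_r$ up to a bijective relabeling of alphabet symbols. The main point worth watching is precisely this identification: the ``one-to-one correspondence'' in the statement should be read up to the canonical renaming of each alphabet by the atoms of the subalgebra its coding function generates, and the source equation $\bal_s=\balg{f_s}$ in Definition~\ref{def:gdiv} is what pins down the correspondence on source edges. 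Handling this bookkeeping is the only real subtlety; everything substantive reduces to two direct applications of Lemma~\ref{lem:balg}.
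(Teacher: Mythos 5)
Your proposal is correct and uses the same engine as the paper—namely repeated application of Lemma~\ref{lem:balg} to translate between function factorizations and subalgebra containments. The only substantive difference is coverage: the paper's written proof verifies only that $\Phi'[\gdiv]$ is a well-defined network code and a restriction of $\Phi$ (the forward map), and leaves implicit both that $\gdiv[\Phi']$ is a semi-vortex (the inverse map) and that the two assignments are mutual inverses. You supply all three steps, and you explicitly flag the one point that makes the word ``one-to-one'' meaningful: the round trip $\Phi'\mapsto\gdiv[\Phi']\mapsto\Phi'[\gdiv[\Phi']]$ returns $\Phi'$ only up to a bijective relabeling of each alphabet by the atoms of $\balg{f'_e}$, a normalization the paper takes for granted. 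That makes your argument slightly more complete than the one in the paper while being conceptually the same.

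One small gap, which the paper shares: in the reverse direction you assert $\balg{f'_s}=\balg{f_s}$ on source edges, but the definition of a restriction in the paper constrains only ordinary edges $e\in E$, so this equality is not literally forced by the hypotheses. For the stated bijection to hold one must (as both you and the paper implicitly do) read ``restriction'' as fixing the source coding functions up to relabeling; it would be worth a one-line remark making that convention explicit.
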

\begin{proof}
Suppose $\gdiv=(\bal_r)$ is a semi-vortex and $\Phi'[\gdiv]$
is defined as stated, with coding functions $f'_e = 
f_{\bal_e}$.  For all $e =(u,v)\in E$,
let $\In(e) = \{e_1,\ldots,e_k\}$ and let $f_i = f'_{e_i}.$ 
The relation
\[
\bal_e = \left( \spn_{i=1}^k \bal_{e_i}
\right) \intsct \balg{f_e}
\]
implies that 
\[
\bal_e \subseteq \spn_{i=1}^k \bal_{e_i} =
\spn_{i=1}^k \balg{f_i} = \balg{(f_1,\ldots,f_k)},
\]
where the last equation follows from Lemma~\ref{lem:balg}.
Since $\balg{f_{\bal_e}} = \bal_e \subseteq \balg{(f_1,\ldots,f_k)},$
we can apply Lemma~\ref{lem:balg} again to conclude that
$f_{\bal_e} = g \circ (f_1,\ldots,f_k)$ for some $g$.  Thus
$\Phi'[\gdiv]$ is a network code.  To prove that it is 
a restriction of $\Phi$, we use the containment
$\bal_e \subseteq \balg{f_e}$ for every edge $e \in E \cup S$,
together with Lemma~\ref{lem:balg},
to construct the functions $g_e : \Sigma_e \rightarrow \Sigma'_e$
required by the definition of a restriction of $\Phi.$
\end{proof}

\begin{lem} \label{lem:comparison}
If $\Phi'$ is a restriction of $\Phi$,
and $\Phi'$ is serializable, then $\gdiv[\Phi']$
is contained in every GIV of $\Phi$.
\end{lem}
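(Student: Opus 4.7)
The plan is to fix an arbitrary GIV $\ggdiv = (\bal'_e)_{e \in E \cup S}$ of $\Phi$ and show that $\balg{f'_e} \subseteq \bal'_e$ for every $e \in E \cup S$; by Lemma~\ref{lem:restriction} this is exactly the statement that $\gdiv[\Phi']$ is contained in $\ggdiv$. For source edges the containment is immediate: since $\Phi'$ is a restriction of $\Phi$, part~(\ref{lem:balg-1}) of Lemma~\ref{lem:balg} gives $\balg{f'_s} \subseteq \balg{f_s}$, and the source condition of Definition~\ref{def:gdiv} gives $\balg{f_s} = \bal'_s$.

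The substance of the proof is the case of ordinary edges, and here I would use the hypothesis that $\Phi'$ is serializable by picking any serialization $\{f'^{(j)}_e, \Sigma'^{(j)}_e\}$ of $\Phi'$. By clauses~(\ref{def:ser:2}) and~(\ref{def:ser:3}) of Definition~\ref{def:ser}, together with Lemma~\ref{lem:balg}, the Boolean subalgebra $\balg{f'_e}$ equals the join $\spn_j \balg{f'^{(j)}_e}$. So it suffices to establish, for every ordinary edge $e$ and every time step $j$, the containment $\balg{f'^{(j)}_e} \subseteq \bal'_e$; joining over $j$ then finishes the argument. I would prove this by induction on $j$.

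For the inductive step at an edge $e = (u,v)$ at time $j$, clause~(\ref{def:ser:4}) of Definition~\ref{def:ser} says $f'^{(j)}_e$ factors through the tuple $\prod_{\hat{e} \in \In(e)} f'^{(1..j-1)}_{\hat{e}}$. Applying Lemma~\ref{lem:balg} yields
\[
\balg{f'^{(j)}_e} \;\subseteq\; \spn_{\hat{e} \in \In(e)} \; \spn_{i<j} \balg{f'^{(i)}_{\hat{e}}} \;\subseteq\; \spn_{\hat{e} \in \In(e)} \bal'_{\hat{e}},
\]
where the second containment uses the inductive hypothesis for ordinary incoming edges together with the source base case for incoming sources (whose full information may be treated as available from the start). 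Simultaneously, $\balg{f'^{(j)}_e} \subseteq \balg{f'_e} \subseteq \balg{f_e}$ since $\Phi'$ is a restriction of $\Phi$. Intersecting the two containments and invoking the GIV condition $\bal'_e = \left(\spn_{\hat{e} \in \In(e)} \bal'_{\hat{e}}\right) \intsct \balg{f_e}$ from Definition~\ref{def:gdiv} delivers $\balg{f'^{(j)}_e} \subseteq \bal'_e$.

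The step demanding the most care is aligning the two subalgebras that sandwich $\balg{f'^{(j)}_e}$ with the two factors of the intersection defining $\bal'_e$; in particular, the GIV condition intersects with $\balg{f_e}$ (the \emph{original} code's function) rather than $\balg{f'_e}$, and it is precisely at this point that the restriction hypothesis of the lemma is consumed. Beyond that, the proof is routine bookkeeping with Lemma~\ref{lem:balg}.
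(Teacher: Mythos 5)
Your proof is correct and takes essentially the same approach as the paper: induct on the time step $j$ to show $\balg{f'^{(j)}_e} \subseteq \bal'_e$, using clause~(\ref{def:ser:4}) with Lemma~\ref{lem:balg} for the span bound and the GIV condition for the intersection, then aggregate over $j$ via clauses~(\ref{def:ser:2})--(\ref{def:ser:3}). You are in fact slightly more careful than the paper at the intersection step, where the GIV definition uses $\balg{f_e}$ rather than $\balg{f'_e}$, and you correctly route through the restriction containment $\balg{f'_e}\subseteq\balg{f_e}$ to land inside $\bal'_e$.
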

\begin{proof}
Suppose $\Phi'$ is  a serializable restriction of $\Phi$,
with serialization consisting of alphabets $\Sigma_e^{(i)}$
and coding functions $f_e^{(i)}$.

Suppose now that $\gdiv=\{\bal_e\}_{e \in E \cup S}$ is any 
GIV of $\Phi$.
First, we claim $\balg{f_e^{(i)}} \subseteq \bal_{e}$ for 
every edge $e$.  To prove the claim we use 
induction on $i$.
The claim is clearly true when $i=0$.
Otherwise, let $e_1,\ldots,e_r$ be the edges in 
$\In(e)$.  By Lemma~\ref{lem:balg},
the existence of a function $h_e^{(i)}$ such that
$f_e^{(i)}(m) = h_e^{(i)} \left( \prod_{j=1}^r
f_{e_j}^{1..i-1} \right)$ implies
the first of the following containments:
\begin{equation} \label{eq:tfae-general-1}
\balg{f_e^{(i)}} \subseteq 
 \spn_{j=1}^r \spn_{\ell=1}^{i-1} \balg{f_{e_j}^{(\ell)}} 
 \subseteq \spn_{j=1}^r \bal_{e_j}.
% \subseteq  \bal_{h(u)}.
\end{equation}
The second containment in \eqref{eq:tfae-general-1} 
follows from our induction hypothesis.  
% The third
% follows from Property~\ref{gdiv-v} of a GDIV, applied
% to $v = h(u).$
Now, property~\ref{def:ser:2} of a serialization 
implies that
$\balg{f_e^{(i)}} \subseteq \balg{f'_{e}}.$ 
Combining this with \eqref{eq:tfae-general-1}
% and using $\tl(h(e)) = h(\tl(e))$, 
we obtain
\begin{equation} \label{eq:tfae-general-2}
\balg{f_e^{(i)}} \subseteq \left( \spn_{j=1}^r \bal_{e_j} \right) 
\intsct \balg{f'_{e}} 
= \bal_{e},
\end{equation}
as desired.  

% Now to show that $h$ is not an epimorphism,
% we argue by contradiction.
If $\gdiv[\Phi']$ is not contained in $\gdiv$, then
there exists an edge $e$ of $G$ 
such that 
% $\balg{f_{e'}} \ne \bal_{e'}$.   
% Property~\ref{gdiv-e} of a GDIV guarantees 
% that $\bal_{e'} \subseteq \balg{f_{e'}}$, so 
\begin{equation} \label{eq:tfae-general-3}
\balg{f'_{e}} \not\subseteq \bal_{e}.
\end{equation}
Property~\ref{def:ser:2} of a serialization implies
the existence of a function
$H : \Sigma'_e \rightarrow \prod_{i=1}^k \Sigma_e^{(i)}$
such that $H(f'_e(m)) = (f_e^{(1)}(m),\ldots,f_e^{(k)}(m))$
for all $m \in \msg.$  Property~\ref{def:ser:3} 
implies that $H$ is one-to-one, hence it has
a left inverse: a function $G :
\prod_{i=1}^k \Sigma_e^{(i)} \rightarrow \Sigma'_{e}$
such that $G \circ H$ is the identity.
Letting $F=\prod_{i=1}^k f_e^{(i)}$, the 
definition of $H$ implies that
$F = H \circ f'_{e}$, whence $f'_{e} = G \circ F$.
Applying Lemma~\ref{lem:balg} once more,
\[
\balg{f'_{e}} \subseteq \balg{F} = \spn_{i=1}^k \balg{f_e^{(i)}},
\]
and the right side is contained in $\bal_{e}$ by
\eqref{eq:tfae-general-2}.  This contradicts
\eqref{eq:tfae-general-3}, which completes the
argument.
\end{proof}

\begin{lem} \label{lem:semi-vortex}
At the start of any iteration of the main loop of 
\GenSer\, the following invariants hold.
\begin{enumerate}
\item $A_e = \balg{f_e^{(1)},\ldots,f_e^{(i-1)}}$ for all $e \in E$.
\item $B_e = \balg{f_e} \intsct
\left( \spn_{\hat{e} \in \In(u)} \bal_{\hat{e}} \right)$
for all $e \in E$.
\item The collection of subalgebras
$\gdiv=\{\bal_e\}_{e \in  E \cup S}$ constitutes a 
semi-vortex.
\item $\Phi'[\gdiv]$ 
is a serializable restriction of $\Phi$. 
\end{enumerate}
\end{lem}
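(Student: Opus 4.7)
The plan is to prove all four invariants simultaneously by induction on the iteration counter $i$, with the $i$-th claim asserting that the invariants hold at the moment control reaches the \textbf{while} loop header at the beginning of iteration~$i$.

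For the base case I would read each invariant directly off the initialization lines of \GenSer. After initialization we have $\bal_e = 0$ for $e \in E$ (the trivial subalgebra, which matches $\langle\emptyset\rangle$ in invariant~1), $\bal_s = \balg{f_s}$ for every source $s$, and $B_e$ has been set exactly to $\balg{f_e} \intsct \bigl( \spn_{\hat e \in \In(u)} \bal_{\hat e} \bigr)$, giving invariant~2. The tuple $\gdiv = (\bal_e)$ trivially satisfies the semi-vortex inclusion since every ordinary $\bal_e$ is the trivial subalgebra (invariant~3), and the induced restriction $\Phi'[\gdiv]$ sends a constant on every ordinary edge, which is vacuously serializable with an empty list of time steps (invariant~4).

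For the inductive step I would trace through one execution of the loop body. The body picks $e^\ast = (u^\ast, v^\ast)$ with $\bal_{e^\ast} \neq B_{e^\ast}$, selects $x_{e^\ast} \in B_{e^\ast} \setminus \bal_{e^\ast}$, records coding functions $f_{e^\ast}^{(i)} = x_{e^\ast}$ and $f_e^{(i)} \equiv 0$ for $e \neq e^\ast$, enlarges $\bal_{e^\ast}$ to $\bal_{e^\ast} \spn \balg{x_{e^\ast}}$, and refreshes $B_{e'}$ for every edge $e'$ leaving $v^\ast$. Invariant~1 is immediate bookkeeping: only $\bal_{e^\ast}$ changes, and it grows by exactly $\balg{f_{e^\ast}^{(i)}}$. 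For invariant~3 I would check $\bal_{e^\ast}^{\mathrm{new}} \subseteq \balg{f_{e^\ast}} \intsct \bigl( \spn_{\hat e \in \In(u^\ast)} \bal_{\hat e} \bigr)$, which holds because $x_{e^\ast} \in B_{e^\ast}$ and the previous invariant~2 places $B_{e^\ast}$ inside that intersection, while $\bal_{e^\ast}^{\mathrm{old}}$ was already there by induction. For invariant~4 I would extend the serialization of $\Phi'[\gdiv^{\mathrm{old}}]$ supplied by the inductive hypothesis by appending one time step in which $e^\ast$ transmits $x_{e^\ast}$; Lemma~\ref{lem:balg} together with $x_{e^\ast} \in \spn_{\hat e \in \In(u^\ast)} \bal_{\hat e}^{\mathrm{old}}$ produces the function $h_{e^\ast}^{(i)}$ demanded by clause~\ref{def:ser:4} of Definition~\ref{def:ser}, and clauses \ref{def:ser:2}--\ref{def:ser:3} hold because $\bal_{e^\ast}^{\mathrm{new}} = \balg{f_{e^\ast}^{(1)}, \ldots, f_{e^\ast}^{(i)}}$ coincides with $\balg{f'_{e^\ast}}$ by construction of the restriction associated with $\gdiv^{\mathrm{new}}$.

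The main obstacle I expect is invariant~2. Line~13 spans $B_{e'}$ with the single new generator $\balg{x_{e^\ast}}$, whereas the invariant asserts an equality with $\balg{f_{e'}} \intsct \bigl( \spn_{\hat e \in \In(v^\ast)} \bal_{\hat e} \bigr)$, a join over every incoming edge of $v^\ast$. To verify that the update produces the correct algebra, I would pass to the partition representation of Boolean subalgebras, use Lemma~\ref{lem:balg}(\ref{lem:balg-2}) to rewrite joins as algebras generated by product functions, and then exploit both $\bal_{e^\ast}^{\mathrm{new}} = \bal_{e^\ast}^{\mathrm{old}} \spn \balg{x_{e^\ast}}$ and the containment $\bal_{e^\ast}^{\mathrm{old}} \subseteq \spn_{\hat e \in \In(v^\ast)} \bal_{\hat e}^{\mathrm{old}}$ supplied by the semi-vortex property, so that the incremental span in line~13 matches the recomputed intersection from scratch. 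Once invariant~2 is re-established at the end of the iteration, all four invariants propagate into iteration $i+1$ and the induction is complete.
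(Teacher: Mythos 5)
Your inductive strategy mirrors the paper's own proof: the paper declares invariants (1)--(3) to follow ``by a trivial induction'' and spends its effort on invariant (4), verifying the four clauses of Definition~\ref{def:ser} directly for the functions $\{f_e^{(j)}\}_{j<i}$, using Lemma~\ref{lem:balg} at the same two points you do. Your framing of (4) as appending one time step to the inductively-supplied serialization of $\Phi'[\gdiv^{\mathrm{old}}]$ is the same argument in slightly different clothing, and your handling of (1) and (3) matches what the paper implicitly leaves to the reader.

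Where your proposal would get stuck is exactly where you anticipate trouble: invariant (2). Writing $F=\balg{f_{e'}}$, $C=\spn_{\hat e \in \In(v^\ast)} \bal_{\hat e}^{\mathrm{old}}$, $X=\balg{x_{e^\ast}}$, and using the pre-update invariant $B_{e'}^{\mathrm{old}} = F \intsct C$ together with $\bal_{e^\ast}^{\mathrm{old}} \subseteq C$, the claim that line~13's incremental update matches the from-scratch recomputation is the identity $F \intsct \bigl((F \intsct C) \spn X\bigr) = F \intsct (C \spn X)$. Only ``$\subseteq$'' is automatic; the reverse inclusion is a modular-law-type statement, and the lattice of Boolean subalgebras of $2^{\msg}$ is not modular (its dual, the partition lattice, already contains a pentagon once $|\msg|\geq 4$). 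Lemma~\ref{lem:balg}(\ref{lem:balg-2}) concerns joins of generated algebras, not how intersection distributes over a join, so it does not supply the missing inclusion; and the containment you credit to the semi-vortex property --- $\bal_{e^\ast}^{\mathrm{old}} \subseteq \spn_{\hat e \in \In(v^\ast)} \bal_{\hat e}^{\mathrm{old}}$ --- is just the trivial one from $e^\ast \in \In(v^\ast)$ (the semi-vortex condition on $\bal_{e^\ast}$ refers to $\In(u^\ast)$, the \emph{tail} of $e^\ast$), and does not help. To close this you would have to argue that the particular $F,C,X$ produced by the algorithm satisfy the identity for a structural reason, or interpret the update as recomputing $B_{e'} \gets \balg{f_{e'}} \intsct \bigl(\spn_{\hat e \in \In(v^\ast)} \bal_{\hat e}\bigr)$ outright, after which (2) holds by fiat. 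The paper never supplies this step either, so you are not behind it, but the ``main obstacle'' you flagged is genuine and is not discharged by the plan you sketch.
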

\begin{proof}
% The facts that $B_e = \balg{f_e} \intsct
% \left( \spn_{\hat{e} \in \In(u)} \bal_{\hat{e}} \right)$
% for all $e$ and that $\gdiv$ is a semi-vortex
The first three invariants
can be verified by a trivial induction
on the number of loop iterations.  We claim
that $\Phi'[\gdiv]$ is serializable, and in
fact that the coding functions $\{f_e^{(j)}\}$
constructed in the preceding iterations of the
main loop constitute a serialization of $\Phi'[\gdiv].$
For property~\ref{def:ser:1} of a serialization,
there is nothing to check.  To prove property~\ref{def:ser:2},
observe that $f_e^{(j)} \in \bal_e = \balg{f'_e}$,
which implies by Lemma~\ref{lem:balg}
that $f_e^{(j)} = b \circ f'_e$ for 
some binary-valued function $b$ on $\Sigma'_e$.
If $f'_e(m_1)=f'_e(m_2)$ then $b(f'_e(m_1))=b(f'_e(m_2))$,
which establishes property~\ref{def:ser:2}.
To prove property~\ref{def:ser:3}, observe
that $\bal_e=\balg{f'_e}$ is generated by the
functions $f_e^{(1..i-1)}$, so if $f'_e(m_1) \neq f'_e(m_2)$
then there is some $j \leq i-1$ such that $f_e^{(j)}(m_1) \neq
f_e^{(j)}(m_2).$  Finally, property~\ref{def:ser:4}
follows from the structure of the algorithm itself.  Either
$f_e^{(j)}$ is the constant function $0$, in which
case there is nothing to prove, or $f_e^{(j)}$ is equal
to the function $x_e$ chosen in line~\ref{line:xe}
of the $j^{\mathrm{th}}$ loop iteration of \GenSer.  
In that case $x_e$ belonged to the 
Boolean algebra $B_e$ at the start of that
loop iteration, which means 
$$
x_e \; \in  \; \balg{f_e} \intsct \left( \spn_{\hat{e} \in \In(u)} A_{\hat{e}}
\right) \; \; \subseteq \; \; 
\spn_{\hat{e} \in \In(u)} A_{\hat{e}} \; \; = \; \; 
\spn_{\hat{e} \in \In(u)} 
\left( \spn_{1 \leq \ell < j} \balg{f_{\hat{e}}^{(\ell)}} \right)
$$
and another application of Lemma~\ref{lem:balg} implies
the existence of the function $h_e^{(j)}$ required by the
definition of serialization.
\end{proof}

\begin{lem} \label{lem:termination}
When \GenSer\ terminates,
$\gdiv = \{\bal_e\}_{e \in E \cup S}$ is a
GIV.
\end{lem}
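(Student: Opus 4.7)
The plan is to derive the GIV property directly from the invariants already established in Lemma~\ref{lem:semi-vortex} together with the loop's termination condition. Lemma~\ref{lem:semi-vortex} guarantees that at every iteration (and hence at termination) the tuple $\gdiv = \{A_e\}_{e \in E \cup S}$ is already a semi-vortex. This gives us property~(\ref{gdiv-s}) of Definition~\ref{def:gdiv} for free, namely $A_s = \balg{f_s}$ for every source $s \in S$, and it gives us the containment $A_e \subseteq \left(\spn_{\hat{e} \in \In(u)} A_{\hat{e}}\right) \intsct \balg{f_e}$ for every ordinary edge $e = (u,v)$. So the only thing left to prove is the reverse containment for ordinary edges.

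For the reverse containment, I would invoke invariant~(2) of Lemma~\ref{lem:semi-vortex}, which says that throughout the execution of \GenSer\ we maintain
\[
B_e \;=\; \balg{f_e} \intsct \left(\spn_{\hat{e} \in \In(u)} A_{\hat{e}}\right)
\]
for every ordinary edge $e = (u,v)$. The main \textbf{while} loop of \GenSer\ exits precisely when $A_e = B_e$ for every $e \in E$. Substituting the formula for $B_e$ yields
\[
A_e \;=\; \balg{f_e} \intsct \left(\spn_{\hat{e} \in \In(u)} A_{\hat{e}}\right),
\]
which is exactly the required equality in property~(\ref{gdiv-e}) of Definition~\ref{def:gdiv}. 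Combining this with the semi-vortex properties carried over from Lemma~\ref{lem:semi-vortex} gives both clauses of the GIV definition, completing the argument.

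I do not expect any genuine obstacle here; the entire content of the lemma is essentially bookkeeping, with the real work having been done in establishing the invariants in Lemma~\ref{lem:semi-vortex}. The only subtlety worth flagging is to verify that \GenSer\ actually reaches termination in finite time for a code with finite alphabets, which was already argued in the text preceding Definition~\ref{def:gdiv} (the total number of atoms across the $A_e$'s strictly increases each iteration, bounded above by $\sum_{e} |\Sigma_e|$). Once termination is in hand, the equality $A_e = B_e$ produced at exit is precisely the missing direction in the GIV identity, so the proof reduces to a short combination of Lemma~\ref{lem:semi-vortex}(2) with the loop's stopping condition.
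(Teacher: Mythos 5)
Your argument is correct and follows exactly the same route as the paper's proof: invoke Lemma~\ref{lem:semi-vortex} for the semi-vortex property and the formula for $B_e$, then combine with the termination condition $A_e = B_e$ to upgrade the one-sided containment to the required equality. The only difference is that you spell out the bookkeeping (and flag termination) more explicitly than the paper does, which is fine.
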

\begin{proof}
Lemma~\ref{lem:semi-vortex} ensures that $\gdiv$ is 
a semi-vortex, and the algorithm's termination 
condition ensures that there is no edge $e$ such that
$\bal_e \neq B_e$.  In light of the fact that
$B_e = \balg{f_e} \intsct
\left( \spn_{\hat{e} \in \In(u)} \bal_{\hat{e}} \right),$
this implies that $\gdiv = \{\bal_e\}_{e \in E \cup S}$
is a GIV.
\end{proof}

\begin{thm}[Restatement of Theorem~\ref{thm:minmax}]
In the ordering of semi-vortices by containment, the
ones corresponding to partial serializations
have a maximal element and the GIV's have a minimal
element.  These maximal and minimal elements coincide, 
and they are both equal to the
semi-vortex $\gdiv=\{A_e\}_{e \in E \cup S}$ computed by \GenSer.
\end{thm}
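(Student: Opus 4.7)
The plan is to show that the semi-vortex $\gdiv = (A_e)_{e \in E \cup S}$ produced by \GenSer\ at termination serves as the common witness for all the claims in the theorem, using Lemmas~\ref{lem:restriction}--\ref{lem:termination} as black boxes; essentially no new arguments are required beyond bookkeeping with those correspondences.

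First I would observe that $\gdiv$ lies simultaneously in both families of semi-vortices being compared. By Lemma~\ref{lem:termination}, $\gdiv$ is a GIV of $\Phi$. By the fourth invariant of Lemma~\ref{lem:semi-vortex}, $\Phi'[\gdiv]$ is a serializable restriction of $\Phi$, so under the bijection of Lemma~\ref{lem:restriction} the semi-vortex $\gdiv = \gdiv[\Phi'[\gdiv]]$ corresponds to a partial serialization. Thus $\gdiv$ is a candidate for both the maximum of the partial-serialization semi-vortices and the minimum of the GIVs.

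For maximality among partial serializations, let $\ggdiv$ be any semi-vortex such that $\Phi'[\ggdiv]$ is a serializable restriction of $\Phi$. Since $\gdiv$ is a GIV, Lemma~\ref{lem:comparison} applied with $\Phi' := \Phi'[\ggdiv]$ and the GIV $\gdiv$ yields $\gdiv[\Phi'[\ggdiv]] \subseteq \gdiv$. The inverse correspondence from Lemma~\ref{lem:restriction} gives $\gdiv[\Phi'[\ggdiv]] = \ggdiv$, hence $\ggdiv \subseteq \gdiv$. For minimality among GIVs, let $\ggdiv$ be any GIV of $\Phi$. Since $\Phi'[\gdiv]$ is a serializable restriction of $\Phi$, Lemma~\ref{lem:comparison} applied with $\Phi' := \Phi'[\gdiv]$ and the GIV $\ggdiv$ gives $\gdiv = \gdiv[\Phi'[\gdiv]] \subseteq \ggdiv$. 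Combining the two bounds, $\gdiv$ is the unique maximum among partial-serialization semi-vortices and the unique minimum among GIVs, and these two extremal elements coincide.

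Since every ingredient is delivered by the preceding lemmas, there is no substantive obstacle. The only point demanding care is to invoke the bijection in Lemma~\ref{lem:restriction} in both directions --- mapping a semi-vortex to the restriction $\Phi'[\cdot]$ and back via $\gdiv[\cdot]$ --- so that the hypotheses of Lemma~\ref{lem:comparison} are verified cleanly and without circularity when it is applied in each of the two extremality arguments.
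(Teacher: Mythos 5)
Your proof is correct and follows essentially the same route as the paper: establish via Lemma~\ref{lem:termination} and the fourth invariant of Lemma~\ref{lem:semi-vortex} that the output $\gdiv$ of \GenSer\ is both a GIV and a partial serialization, then invoke Lemma~\ref{lem:comparison} twice (once in each direction) to obtain the two containments. The only cosmetic difference is that you spell out the round-trips through the bijection of Lemma~\ref{lem:restriction} more explicitly than the paper does.
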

\begin{proof}
By Lemmas~\ref{lem:semi-vortex} and~\ref{lem:termination}, 
$\gdiv$ is a GIV and $\Phi' = \Phi'[\gdiv]$ is a
serializable restriction of $\Phi.$
If $\Phi''$ is any other serializable
restriction of $\Phi$, then Lemma~\ref{lem:comparison} implies that
$\gdiv[\Phi''] \subseteq \gdiv$ because $\gdiv$ is
a GIV.  If $\ggdiv$ is any GIV,
then Lemma~\ref{lem:comparison} implies that 
$\ggdiv \supseteq \gdiv[\Phi'] = \gdiv$ becase
$\Phi'$ is a serializable restriction of $\Phi$.
\end{proof}

\begin{thm}[Restatement of Theorem~\ref{thm:tfae-general}]
For a network code $\Phi$ with finite alphabets, 
the following are equivalent.
\begin{enumerate}
\item \label{tfae-general-ser} $\Phi$ is serializable.
\item \label{tfae-general-greedy}
\GenSer\ outputs $\{\bal_e\}_{e \in E}$ \st $\forall \, e, A_e = \balg{f_e}.$
\item \label{tfae-general-gdiv} $\Phi$ has no nontrivial
GIV.
\end{enumerate}
\end{thm}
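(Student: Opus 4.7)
The plan is to derive all three equivalences as corollaries of the min-max theorem (Theorem~\ref{thm:minmax}) already in hand. Let $\gdiv = \{A_e\}_{e \in E \cup S}$ denote the semi-vortex computed by \GenSer. The min-max theorem tells us three things simultaneously: $\gdiv$ is a GIV, $\gdiv$ is the minimal GIV (under containment), and $\Phi'[\gdiv]$ is the maximal serializable restriction of $\Phi$. The key observation driving the proof is that triviality of $\gdiv$ (i.e., $A_e = \balg{f_e}$ for every $e$) is the common pivot connecting all three conditions, so I will show that each of (\ref{tfae-general-ser}), (\ref{tfae-general-greedy}), (\ref{tfae-general-gdiv}) is equivalent to ``$\gdiv$ is trivial.''

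First I would show (\ref{tfae-general-greedy}) $\Leftrightarrow$ (\ref{tfae-general-gdiv}). The forward direction is immediate: if $A_e = \balg{f_e}$ for all $e$, then any other GIV $\ggdiv = \{A'_e\}$ satisfies $A'_e \supseteq A_e = \balg{f_e}$ by minimality of $\gdiv$; since the definition of a GIV also forces $A'_e \subseteq \balg{f_e}$, we get $A'_e = \balg{f_e}$, so no nontrivial GIV exists. For the reverse direction, since $\gdiv$ is itself a GIV by Lemma~\ref{lem:termination}, the absence of nontrivial GIVs forces $A_e = \balg{f_e}$ for all $e \in E$, which is (\ref{tfae-general-greedy}).

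Next I would show (\ref{tfae-general-greedy}) $\Leftrightarrow$ (\ref{tfae-general-ser}). If (\ref{tfae-general-greedy}) holds, then in $\Phi'[\gdiv]$ the alphabets $\atoms(A_e) = \atoms(\balg{f_e})$ are in canonical bijection with the images of $f_e$, so $\Phi'[\gdiv]$ is isomorphic to $\Phi$ itself; since $\Phi'[\gdiv]$ is serializable by Theorem~\ref{thm:minmax} (via Lemma~\ref{lem:semi-vortex}), $\Phi$ is serializable. Conversely, if $\Phi$ is serializable, then $\Phi$ is a serializable restriction of itself, so by Lemma~\ref{lem:comparison} its associated semi-vortex $\gdiv[\Phi]$, which has $A_e = \balg{f_e}$ for every $e$, is contained in the GIV $\gdiv$; combining with the GIV containment $A_e \subseteq \balg{f_e}$ gives equality, i.e.~(\ref{tfae-general-greedy}).

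I do not expect a hard step: the entire theorem is really a bookkeeping consequence of the lemmas in Appendix~\ref{ap:general}. The one place to be careful is the identification $\Phi'[\gdiv] \cong \Phi$ when $\gdiv$ is trivial, which requires recalling that the canonical map $f_{\balg{f_e}} : \msg \to \atoms(\balg{f_e})$ factors through $f_e$ via a bijection onto the image, so the coding functions of $\Phi'[\gdiv]$ carry exactly the same information as those of $\Phi$ (Lemma~\ref{lem:balg} supplies the needed factorization on both sides). After this, the three-way equivalence closes up cleanly.
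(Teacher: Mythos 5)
Your proposal is correct and takes essentially the same approach as the paper's proof: both derive the equivalence from Theorem~\ref{thm:minmax} together with Lemmas~\ref{lem:comparison}, \ref{lem:semi-vortex}, and \ref{lem:termination}, hinging on whether the output semi-vortex $\gdiv$ of \GenSer\ is trivial. The paper presents this as a two-case dichotomy ($\Phi'[\gdiv]$ isomorphic to $\Phi$ or not) while you present it as a chain of direct biconditionals, but the underlying facts invoked are the same.
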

\begin{proof}
In the proof of Theorem~\ref{thm:minmax},
we saw that the subalgebras $\{\bal_e\}$
at the time \GenSer\ terminates constitute a GIV 
$\gdiv$ such that:
\begin{itemize}
\item $\gdiv$ is contained in every other GIV;
\item $\Phi'[\gdiv]$ is a serializable restriction
of $\Phi$;
\item $\gdiv$ contains $\gdiv[\Phi'']$ for every 
serializable restriction $\Phi''$ of $\Phi.$
\end{itemize}
Let $\Phi' = \Phi'[\gdiv].$ 
We now distinguish two cases.

% By Lemmas~\ref{lem:semi-vortex} and~\ref{lem:termination},
% at the time the algorithm terminates, the subalgebras
% $(\bal_r)$ define a GDIV $\gdiv$ such that 
% $\Phi'[\gdiv]$ is a serializable restriction
% of $\Phi.$  Let $\Phi' = \Phi'[\gdiv]$ and let
% $f'_e$ be the coding function of edge $e$ in 
% $\Phi'$.  We now consider two cases.

{\bf Case 1: $\Phi'$ is isomorphic to $\Phi$.}
In this case, we show that
all three equivalent conditions hold.  First, $\Phi$ is 
serializable because $\Phi'$ is.  Second, the fact
that $\Phi'$ is isomorphic to $\Phi$
means that $\balg{f_e} = \balg{f'_e} = \bal_e$ 
for every edge $e$.
Finally, 
we know that every GIV in $\Phi$ contains
$\gdiv$.  But 
$\gdiv = \gdiv[\Phi'] = \gdiv[\Phi]$,
which is the trivial GIV.  
By Definition~\ref{def:gdiv},
any GIV containing the trivial GIV is trivial.
Hence $\Phi$ contains no nontrivial GIV.

{\bf Case 2: $\Phi'$ is not isomorphic to $\Phi$.}
In this case, $\Phi'$ is a proper restriction of
$\Phi$, hence the semi-vortex 
$\gdiv = \gdiv[\Phi']$ 
constitutes a nontrivial GIV.  
Any serializable restriction $\Phi''$ of $\Phi$ 
satisfies $\gdiv[\Phi''] \subseteq \gdiv \subsetneq \gdiv[\Phi]$.
In particular this means that
$\Phi$ is not a serializable 
restriction of itself, i.e.~$\Phi$ is not serializable.
Finally, the statement that $\Phi'$ is not isomorphic to $\Phi$
means that for some
$e \in E$, $\balg{f'_e} \neq \balg{f_e}$. 
Recalling that $\balg{f'_e} = \bal_e$, this means
that \GenSer\ does not output $\{A_e\}_{e \in E}$ such that
$\forall \, e, A_e = \balg{f_e}.$
\end{proof}

\section{Analysis of the hitting set reduction}

\begin{reduc}[Hitting Set to Minimum $\sdef(\nwc)$]
Given a hitting set instance $(N,S)$ with universe $N = \{1,...,n\}$ and a family $S \subseteq 2^N$ of subsets of $N$, we let $S(i) = \{A_{i(1)}, A_{i(2)}, ... , A_{i(n_i)}\} \subseteq S$ where $A \in S(i)$ iff $i \in A$, and the ordering of $A$s is arbitrary.  We make the network coding instance $\nwc = \networkcode{}$ where $G$ is a directed, sourced graph with vertex set: $\{s\} \cup V \cup U \cup W \cup P$, where $V=\{v_1, v_2,...,v_n\}$, and similarly for $U, W$ and $P$; and source edges: $\{(\bullet, s)\} \cup \{(\bullet,w_i) | i \in N\}$ with messages $f_{(\bullet, s)} = \prod_{i \in N} \prod_{A \in S(i)} X_i^A$ and $f_{(\bullet, w_i)} = \prod_{A \in S(i)} Y_i^A$.  Where all $X,Y$ are uniform random variables over $\field_{2^\ell}$ for some sufficiently large $\ell>0$ to be chosen later.  Rather than enumerate our edge set $E$ and the coding functions on each edge, we will just specify the coding functions for each edge in $E$.  If a function $f_e$ is not specified, then $e$ is not in $E$.  We also show the network coding instance pictorially in Figure \ref{fig:reduc}. We use $\prod$ denote the $n$-fold cartesien product, so $\prod_{i=1}^{n} X_i$ is synonomous with the ordered $n$-tuple $(X_1,X_2,...X_n)$.  The coding functions are as follows:  \\
\begin{align*}
f_{(s,u_i)} &= \prod_{k = 1}^{n_i} X_i^{A_{i(k)}}, \; \forall i \in N  %\\ 
&f_{(s,v_i)} &= \prod_{j \in N: j \neq i} \prod_{k = 1}^{n_j} X_j^{A_{j(k)}} \; \forall i \in N \\ 
f_{(s,p_i)} &= \sum_{k = 2}^{n_i} X_i^{A_{i(k)}}, \; \forall i \in N %\\   
&f_{(v_i,v_j)} &= \prod_{A \in S(i) \cap S(j)} \sum_{k \in A} X_k^A, \; \forall i,j \in N \\      
f_{(w_i,v_i)} &= \prod_{k = 1}^{n_i} X_i^{A_{i(k)}}, \; \forall i \in N %\\   
&f_{(v_i,p_i)} &= \sum_{k = 1}^{n_i} X_i^{A_{i(k)}}, \; \forall i \in N \\       
f_{(p_i,w_i)} &= X_i^{A_{i(1)}}, \; \forall i \in N %\\ 
&f_{(w_i,u_i)} &= \prod_{k = 1}^{n_i} X_i^{A_{i(k)}}+Y_i^{A_{i(k)}}, \; \forall i \in N \\      
f_{(u_i,w_i)} &= \prod_{k = 1}^{n_i} X_i^{A_{i(k+1\mod n_i)}}+Y_i^{A_{i(k)}}, \; \forall i \in N \\  
\end{align*}
\label{reduc}
\end{reduc}

%The big idea is that for each set $A \in S$, we have a bit $\sum_{i \in A} X_i^A$ on all edges $(v_i,v_j)$ where $i,j \in A$, forming a clique.  Sending all the bits in this clique will correspond to covering set $A$.  The vertices $u_i,w_i$ make a ``switch''.  The codes between them can be serialized, and subsequently  $(w_i,v_i)$ when we add one bit to the code.  The bits we choose to add will correspond to a hitting set.  The messages on path $v_ip_iw_i$ can be sent only after all edges between $v_i$ and other $v \in V$ have been serialized.  It will pass the bit needed to serialize the switch for those elements not in the hitting set.\\

\begin{proof}[Proof of Part 1 of Lemma \ref{lem:reduc}]
Given a hitting set instance $(N,S)$ we create the network code $\nwc$ using the Reduction \ref{reduc}.

We show that $(N,S)$ has a hitting set of size $k$ if and only if $\sdef(\nwc) \le k$.\\  
%Adding a bit corresponds to expanding an edge function $f_e$ to $f_e \times \{0,1\}$.\\

\noindent $(\Rightarrow)$ Suppose $(N,S)$ has a hitting set of size $k$.  We show that $\sdef(\nwc) \le k$.\\

Let $C$ be a hitting set of size $k$.  Consider adding bit $X_c^{A_{c(1)}}$ for $c \in C$ to edge $(u_c,w_c)$.  This allows us to serialize all bits in the following stages, implicitly we are defining $f^{(1..k)}_e$ and $\Sigma_e^{(1..k)}$ for all $e \in E$:
\begin{enumerate}
	\item  For all $c \in C$, we can serialize all bits on edges $(w_c,u_c)$ and $(u_c,w_c)$: $w_c$ learns $X_c^{A_{c(1)}}$, so it can send bit $X_c^{A_{c(1)}}+Y_c^{A_{c(1)}}$ to $u_c$.  Now, this allows $X_c^{A_{c(2)}}+Y_c^{A_{c(1)}}$ to be sent on $(u_c,w_c)$, and we continue in this way until all bits serialized on these two edges.
	\item For all $c \in C$, send $f_{(w_c,v_c)} = \prod_{k = 1}^{n_c} X_c^{A_{c(k)}}$ on $(w_c,v_c)$. 
	\item For all $i \in N$, send $f_{(s,v_i)}\prod_{j \in N: j \neq i} \prod_{k = 1}^{n_j} X_j^{A_{j(k)}}$ on $(s, v_i)$ 
	\item For every set $A \in S$ there is an element $c \in A \cap C$ because $C$ is a hitting set.  Thus, there is a vertex in $V$, $v_c$ that knows $X_c^{A}$.  $v_c$ can therefore send bit $t(A) = \sum_{a \in A} X_a^A$ to all $v_a, a \in A, a\ne c$.  Now every $v_a$ knows $t(A)$ and can send it along $(v_a,v_{a'})$ for all $a' \in A, a' \ne a$.  This serializes all bits on edges between vertices in $V$.
	\item Now every vertex $v_i$ knows every bit $X$: it received all but $\prod_{k = 1}^{n_i} X_i^{A_{i(k)}}$ in step 3, and determined $\prod_{k = 1}^{n_i} X_i^{A_{i(k)}}$ in step 4. So, we can send $f_{(v_i,p_i)} = \sum_{k = 1}^{n_i} X_i^{A_{i(k)}}$ on edge $(v_{i}, p_{i})$.
	\item At $p_{i}$ we can add the code $\sum_{k = 2}^{n_i} X_{i}^{A_{i(k)}}$ from $(s, p_i)$ to $f_{(v_i,p_i)}$ to obtain $X_i^{A_{i(1)}}$ and send it on $(p_i, w_i)$.
	\item Now, for all $i \in N-C$, we can serialize $(w_i,u_i)$ and $(u_i,w_i)$ as we did in step 1.  
\end{enumerate}

\noindent $(\Leftarrow)$ Suppose that $\sdef(\nwc) \le k$.  We show that $(N,S)$ has a hitting set of size $k$.\\

Consider the partition of $E$ into sets $E(i) \forall i \in N$ such that $E(i) = \{(\cdot\,, \cdot_i)| i \in N\}$, that is $e \in E(i)$ if and only if the tail of $e$ is indexed by $i$.

\begin{lem}
For a set $A \in S$ suppose no bits are added to any edge in $\bigcup_{i \in A} E(i)$, then the bit $t(A) = \sum_{a \in A} X_a^A$ on $(v_{a}, v_{a'})$, $\forall a,a' \in A$ cannot be serialized.
\label{lem:helper}
\end{lem}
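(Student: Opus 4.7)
The plan is to exhibit, inside the extension $\Phi'$, a nontrivial information vortex $\{W_e\}$ with the property that $t(A) \notin W_{(v_a,v_{a'})}$ for every pair $a,a' \in A$. The greedy algorithm \CallSer\ applied to $\Phi'$ simultaneously computes the minimum IV and the maximum serializable restriction (the linear analog of Theorem~\ref{thm:minmax}, established by the same argument as in the proof of Theorem~\ref{thm:TFAE}), so the minimum IV is contained in $\{W_e\}$; consequently $t(A)$ cannot lie in the maximum serializable restriction's subspace on $(v_a,v_{a'})$, which is exactly the statement that $t(A)$ cannot be serialized on that edge.

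For the construction, for each $i \in A$ declare the five intra-gadget edges locked by setting $W_e = 0$ for $e \in \{(u_i,w_i),(w_i,u_i),(w_i,v_i),(p_i,w_i),(v_i,p_i)\}$. On each set-gadget edge $(v_i,v_j)$ with $i \in A$, set $W_{(v_i,v_j)} = T'_{(v_i,v_j)} \cap I_{v_i}$, where $I_{v_i}$ denotes the linear span of $\{W_{\hat{e}} : \hat{e} \in \In(v_i)\}$. On every other edge of $\Phi'$, set $W_e = T'_e$. Verifying the IV equation on the five locked edges is a routine linear-algebra check; for instance every basis vector $X_i^{A_{i(k+1)}} + Y_i^{A_{i(k)}}$ of $T_{(u_i,w_i)}$ carries a $Y_i$ term absent from the $Y$-free span $W_{(s,u_i)} + W_{(w_i,u_i)}$, forcing the intersection to collapse to $0$, and analogous checks dispose of $(w_i,u_i)$, $(w_i,v_i)$, $(p_i,w_i)$, and $(v_i,p_i)$.

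The crux of the argument is the invariant ``$X_i^A \notin I_{v_i}$ for every $i \in A$,'' from which the lockedness of the five edges above and the fact $t(A) \notin W_{(v_i,v_{i'})}$ for $i,i' \in A$ both follow. The contributions to $I_{v_i}$ are $W_{(s,v_i)}$, which by inspection of $f_{(s,v_i)}$ carries no $X_i^{\cdot}$ coordinate at all; $W_{(w_i,v_i)} = 0$; and the sums $W_{(v_j,v_i)}$ from set-gadget neighbors. The only term in $T_{(v_j,v_i)}$ that contains $X_i^A$ is $t(A)$, so $X_i^A$ can reach $I_{v_i}$ only through a $W_{(v_j,v_i)}$ that contains $t(A)$; but that requires $t(A) \in I_{v_j}$, hence $X_j^A \in I_{v_j}$ (since $(s,v_j)$ already delivers every other summand of $t(A)$ to $v_j$). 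This produces a closed cyclic dependency on $\{v_j : j \in A\}$, and a simultaneous fixed-point argument, or an inductive invariant maintained through the iterations of \CallSer\ applied to $\Phi'$, shows that the only solution is the trivial one.

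The main obstacle is precisely this fixed-point step: one must rigorously rule out any back-door through which bits added \emph{outside} $\bigcup_{i \in A} E(i)$ might supply $X_j^A$ to some $v_j$ with $j \in A$. The hypothesis of the lemma is used exactly here: every edge whose head is indexed by an element of $A$---in particular $(s,v_i)$, $(w_i,v_i)$, and every $(v_{j'},v_i)$---lies in $\bigcup_{i \in A} E(i)$, so the extension cannot place any new $X_i^A$-bearing function on an edge feeding $v_i$. Combined with the routine verification that $W_e = T'_e$ satisfies the IV equation on every non-locked edge (immediate because $\Phi'$ is a valid network code), this closes the argument and yields $t(A) \notin W_{(v_a,v_{a'})}$ for all $a,a' \in A$.
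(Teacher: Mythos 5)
Your approach differs from the paper's in an interesting way, but it has a real gap at precisely the point you flag as the ``crux.''

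The paper's proof is a temporal argument: assume the serialization does send $t(A)$ on some $(v_a, v_{a'})$, take the \emph{first} vertex $v_i$ ($i \in A$) that sends $t(A)$ before receiving it from any $v_{a''}$ with $a'' \in A$, exhibit a small information vortex on the subgraph induced by the $i$-indexed vertices, and observe that (since no bits were added on $\bigcup_{i\in A}E(i)$) the only way to destroy that local IV is for $v_i$ to receive $t(A)$ from outside the subgraph — contradicting the choice of $i$. Your proposal instead tries to construct a \emph{global} IV $\{W_e\}$ for the extension $\Phi'$ with $t(A) \notin W_{(v_a,v_{a'})}$ and then quote the min-max theorem. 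That is a legitimate alternative strategy, and the min-max step is set up correctly.

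The gap is that your $\{W_e\}$ is not actually defined: you set $W_{(v_i,v_j)} = T'_{(v_i,v_j)} \cap I_{v_i}$, but $I_{v_i}$ is the span of $W_{\hat e}$ over $\hat e \in \In(v_i)$, which includes $W_{(v_j,v_i)}$ — so the assignment is circular. You acknowledge this and assert that ``a simultaneous fixed-point argument, or an inductive invariant maintained through the iterations of \CallSer\ applied to $\Phi'$, shows that the only solution is the trivial one,'' but this sentence is not a proof: it is exactly the mathematical content of the lemma. One must actually rule out the possibility that a nonzero fixed point sneaks $X_i^A$ into $I_{v_i}$ via some chain of $(v_j,v_i)$ edges. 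Once you try to carry out the ``inductive invariant through \CallSer'' you are led back to the paper's argument — the first iteration in which the greedy would admit $t(A)$ into some $A_{(v_a,v_{a'})}$ is the first time $v_a$ could compute $t(A)$, and the local IV on the $a$-gadget shows that can only happen if $v_a$ has already received $t(A)$ from some $v_{a''}$, which contradicts firstness. In short: your outline is salvageable, but the step you leave unproved is the step that actually needs proving, and completing it rigorously essentially reconstructs the paper's temporal argument.
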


Lemma \ref{lem:helper} implies that for every set $A \in S$, at least one bit must be added on an edge in $\bigcup_{i \in A} E(i)$ to serialize $\nwc$.  In particular, if $\Phi'$ is a minimal serializable extension of $\Phi$ and let $C = \{i | \text{$\Phi'$ sends at least one additional than $\Phi$ on some edge in $E(i)$}\}$ then $C$ is a hitting set for $(N,S)$.  And $|C| \le \sdef(\Phi) \le k$.
\begin{proof}[Proof of Lemma \ref{lem:helper}]
Let $V(A) = \bigcup_{a \in A} v_a$.  Any edge $e$ going into any node in $V(A)$ must send $f_e$ because no bits are added on any of these edges.  In any serialization, it must be that for some $v_a \in V(A)$, bit $t(A)$ is sent on $(v_a, v_{a'})$ for some $a'$ before $v_a$ receives $t(A)$ from any $v_{a''} \in V(A)$.  Without loss of generality, suppose this vertex is $v_{i}, i \in A$.  Consider the subgraph induced by vertices indexed by $i$.  There is an information vortex on this subgraph with $W_{(s,v_i)} = T_{(s,v_i)}, W_{(s,u_i)} = T_{(s,u_i)}$, and $W_e = 0$ for all other edges in the subgraph.  To check this one simply has to verify that for edges $e$ out of $v_i$ $T_e \cap T_{(s,v_i)} = 0$, and similarly for $u_i$.  This implies that this subgraph is not serializable.  We don't add any bits to the subgraph by hypothesis, so to ``destroy'' this IV, and serialize the subgraph we need $W_{(v_j,v_i)} \spn W_{(s,v_i)}$ to have a non-zero intersection with $T_{(v_i,\cdot)}$.  But this is a contradiction to our choice of $i$.
%Vertex $v_{i}$ knows $X_a^A$ for all $a \in A, a \ne i$ from the coding function sent on edge $(s, v_{i})$.  To learn $X_{i}^A$ it must be sent via edge $(w_{i'},v_{i'})$.  Tracing back edges from $w_{i'}$ brings us to $s$ and back to $v_{i'}$.  All of the edges on these paths are in $E(i')$, and none of the functions $f_e$ on the edges in these paths allow us to send $X_{i'}^A$ on $(w_{i'},v_{i'})$.[This takes a bit of thought, may want to write another sentence of explanation here esp. in regards to not being able to serialize the ``switch'']      
\end{proof}
\end{proof}

\newcommand{\Tvec}[1]{
\left( \begin{smallmatrix}
1\\
x_{#1}^1 \\
\vdots \\
x_{#1}^{q-1} \end{smallmatrix}\right)} 

\begin{proof}[Proof of Part 2 of Lemma \ref{lem:reduc}]
Given a hitting set instance $(N,S)$ we create the network code $\nwc$ using the Reduction \ref{reduc}.  If $|\field|$ is large enough then one can show using facts from linear algebra (or matroid theory) that the subset $T = \left\{\Tvec{1}, \Tvec{2}, \ldots, \Tvec{p}\right\} \subset \field^q$ 
for $\{x_1, \ldots, x_p\}$ distinct elements in $\field$ has the property that any $q$ element subset forms a basis for $\field^q$; in other words, $T$ is a realization of the uniform matroid $U_{p,q}$ over $\field$.  Partition the $p$ vectors of $T$ into $|N|$ subsets $T_1, \ldots, T_{|N|}$ such that $T_i$ contains $z_i$ vectors, note that $\sum_{i \in N} z_i = p$ makes this valid.

%We now give two proofs, the first in terms of tensor products and the second using more elementary notation. 

We now consider $\Phi^q$.  Here, we will regard the edge alphabet for edge $e$ as a vector in $\Sigma_e \otimes \field^q$.  The tensor product space allows us to consider $q$ copies of $\Sigma_e$ on each edge $e$ without fixing a basis.  We claim that the extenstion of $\nwc^q$ in which we transmit the extra bits $\prod_{t \in T_i} X_i^{A_{i(1)}} \otimes t$ on edge $(u_i,w_i)$ for all $i \in N$ is serializable.

\begin{obs}
Transmitting $\prod_{t \in T_i} X_i^{A_{i(1)}} \otimes t$ along edge $(u_i,w_i)$ allows node $w_i$ to learn $\prod_{t \in T_i} X_i^{A_{i(k)}} \otimes t$ for all $k \in \{1 \ldots n_i\}$.
\label{obs:switch} 
\end{obs}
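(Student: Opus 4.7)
The plan is to mimic, inside $\Phi^q$, the two-cycle unlocking argument from the hitting-set direction of Lemma~\ref{lem:reduc}, with every intermediate message tensored against a fixed vector $t \in T_i$.

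First I would set up notation. In $\Phi^q$, the edge function on $(u_i,w_i)$ linearly spans the set of vectors $(X_i^{A_{i(k+1 \bmod n_i)}} + Y_i^{A_{i(k)}}) \otimes t$ as $k$ ranges over $\{1,\ldots,n_i\}$ and $t$ over $\field^q$, and similarly $(w_i,u_i)$ spans all $(X_i^{A_{i(k)}} + Y_i^{A_{i(k)}}) \otimes t$. The source $(s,u_i)$ delivers $X_i^{A_{i(k)}} \otimes t$ for every $k$ and every $t \in \field^q$, and $(\bullet,w_i)$ delivers $Y_i^{A_{i(k)}} \otimes t$ for every $k$ and every $t$. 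Hence, for any $t \in T_i$, every transmission of the form $(X+Y)\otimes t$ that I introduce below lies in the span of the underlying edge function and is therefore a legal partial transmission in the sense of Definition~\ref{def:ser}.

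Next I would describe the unlocking loop. After the extra transmission of the tuple $\prod_{t \in T_i} X_i^{A_{i(1)}} \otimes t$ on $(u_i,w_i)$, node $w_i$ knows $X_i^{A_{i(1)}} \otimes t$ for every $t \in T_i$. I then induct on $k$ from $1$ to $n_i-1$ using the following two-step transmission: given that $w_i$ currently knows $X_i^{A_{i(k)}} \otimes t$ for every $t \in T_i$, it combines this with its source-given $Y_i^{A_{i(k)}} \otimes t$ and transmits $(X_i^{A_{i(k)}} + Y_i^{A_{i(k)}}) \otimes t$ on $(w_i,u_i)$; the node $u_i$ subtracts its source-given $X_i^{A_{i(k)}} \otimes t$ to recover $Y_i^{A_{i(k)}} \otimes t$, then transmits $(X_i^{A_{i(k+1)}} + Y_i^{A_{i(k)}}) \otimes t$ on $(u_i,w_i)$, from which $w_i$ isolates $X_i^{A_{i(k+1)}} \otimes t$. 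Iterating through $k = 1, \ldots, n_i - 1$ produces $X_i^{A_{i(k)}} \otimes t$ at $w_i$ for every $k \in \{1,\ldots,n_i\}$ and every $t \in T_i$, yielding the claimed tuples.

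The only things to verify are that each message listed above is (i) computable from what its sender has already received, and (ii) lies in the linear span of the underlying edge function of $\Phi^q$ on the edge being used. Both are immediate from the setup paragraph, so there is no real obstacle; the content of the observation is essentially that tensoring the one-shot unlocking trick by a fixed vector $t \in T_i \subseteq \field^q$ simply carries the whole argument out in the $t$-slice of the tensor product. This is precisely why a single extra scalar transmission $X_i^{A_{i(1)}} \otimes t$ per $t \in T_i$ suffices to unlock the entire $n_i$-length chain of $X$-values in that slice, exactly as in the proof of Part~1 of Lemma~\ref{lem:reduc}.
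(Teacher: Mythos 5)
Your proposal is correct and takes essentially the same approach as the paper: the paper's proof simply cites the 2-cycle unlocking argument from the forward direction of Part~1 of Lemma~\ref{lem:reduc} and notes that it carries over to the $q$-fold repetition slice-by-slice after tensoring with $t$, while you unpack that same induction explicitly. The extra care you take to check that each intermediate message lies in the span of the underlying edge function of $\Phi^q$ is a detail the paper leaves implicit, but it matches and does not change the argument.
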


\begin{proof}
We saw in the proof of the forward direction of the part 1 of Lemma \ref{lem:reduc} that transmitting $X_i^{A_{i(1)}}$ along edge $(u_i,w_i)$ in $\nwc$ implies node $w_i$ can learn $X_i^{A_{i(k)}}$ for all $k \in \{1 \ldots n_i\}$.  This implies that in the $q$-fold repetition, transmitting $X_i^{A_{i(1)}} \otimes t$ along edge $(u_i,w_i)$ allows node $w_i$ to learn $X_i^{A_{i(k)}} \otimes t$ for all $k \in \{1 \ldots n_i\}$.
\end{proof}

\begin{obs}
If $\prod_{t \in T_a} X_a^{A} \otimes t$ can be transmitted along edge $(w_a, v_a)$ for all $a \in A$ then we can transmit $\sigma(A) = \sum_{a \in A} X_a^A \otimes \field^q$ on all edges $(v_a,v_{a'})$, $a, a' \in A$.   
\label{obs:basis}
\end{obs}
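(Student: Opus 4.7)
The plan is to exhibit a two-round schedule inside the set gadget for $A$: in the first round every $v_a$, $a\in A$, pushes the slice of $\sigma(A)$ it can already compute, and in the second round every $v_{a'}$, having pooled these slices with the help of the uniform-matroid structure of $T$, forwards the full block $\sigma(A)\otimes\field^q$. First I would check that each $v_a$ can already form $\sigma(A)\otimes t$ for every $t\in T_a$: the source edge $(s,v_a)$ delivers $X_b^A\otimes\field^q$ for all $b\in A\setminus\{a\}$, and by Observation~\ref{obs:switch} combined with the hypothesis, $v_a$ holds $X_a^A\otimes T_a$, so
$$\sigma(A)\otimes t \;=\; \sum_{b\in A\setminus\{a\}} X_b^A\otimes t \;+\; X_a^A\otimes t$$
is a legal linear combination of symbols $v_a$ possesses for every $t\in T_a$. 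In the first round, have each $v_a$ transmit the $z_a=|T_a|$ symbols $\sigma(A)\otimes T_a$ on every outgoing clique edge $(v_a,v_{a'})$; after this round each $v_{a'}$ holds $\sigma(A)\otimes T_a$ for every $a\in A$ (including its own contribution), hence by linearity holds $\sigma(A)\otimes t$ for every $t$ in the linear span of $\bigcup_{a\in A} T_a$.

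The crux is the rank estimate $\mathrm{span}\bigl(\bigcup_{a\in A}T_a\bigr)=\field^q$. Because the blocks $T_1,\ldots,T_{|N|}$ partition $T$, we have $\bigl|\bigcup_{a\in A}T_a\bigr|=\sum_{a\in A}z_a$, and the fractional hitting-set inequality $\sum_{a\in A}z_a/q\ge 1$ gives $\sum_{a\in A}z_a\ge q$. Since $T$ realizes the uniform matroid $U_{p,q}$ over $\field$, any $q$ elements of $T$ form a basis of $\field^q$, so $\bigcup_{a\in A}T_a$ contains a basis and hence spans $\field^q$. Consequently, after the first round every $v_{a'}$ in fact knows $\sigma(A)\otimes\field^q$, and in a second round it can transmit this entire block on each outgoing edge $(v_{a'},v_{a''})$ inside the clique, establishing the conclusion.

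The hard part here is not the linear algebra, which is essentially immediate from the uniform-matroid property, but rather the bookkeeping that guarantees this two-round clique protocol slots cleanly into the global serialization of $\Phi^q$: one has to argue that the symbols $v_a$ transmits in round one depend only on data already available before the clique is activated (source data and the bits pushed in from $w_a$), and that the round-two symbols depend only on round-one transmissions, so the causality requirement from Definition~\ref{def:ser} is respected. Once this observation is in hand, it will combine with the earlier observations (unlocking the $2$-cycle gadgets and the dissemination through the $p_i$'s) in the same way as in the integral case to complete the proof that $\sdef(\nwc^q)\le p$.
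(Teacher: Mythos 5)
Your proof is correct and uses essentially the same approach as the paper: both hinge on each $v_a$ forming $\sigma(A)\otimes t$ for $t\in T_a$ from its source edge and the $(w_a,v_a)$ edge, then invoking the fractional hitting-set inequality $\sum_{a\in A} z_a \ge q$ together with the uniform-matroid property of $T$ to recover a basis of $\field^q$ and hence all of $\sigma(A)\otimes\field^q$. The only cosmetic difference is routing: the paper first funnels all $\alpha(a)$ tuples to a single pivot vertex $v_i$, has $v_i$ decode and broadcast, then relays, whereas you use a symmetric all-to-all two-round schedule; the underlying argument is identical.
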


\begin{proof}
$\prod_{j \in N:j \ne a} \prod_{k=1}^{n_j} X_j^{A_{j(k)}} \otimes \field^q$ can be transmitted along $(s_a,v_a)$ for all $a \in A$, and that, together with $\prod_{t \in T_a} X_a^{A} \otimes t$ transmitted along $(w_a, v_a)$,  allows node $v_a$, for all $a \in A$, to compute $\alpha(a) = \prod_{t \in T_a} \sum_{a \in A} X_a^A \otimes t$.

Now, fix $i \in A$.  Send $\alpha(a)$ along $(v_a, v_i)$ for all $a \in A, a \ne i$.  Each message in the tuple $\alpha(a)$ is a linear combination of $\sum_{a \in A} X_a^A \otimes \field^q$ and is hence a legal message on all edges $(v_a, v_i)$.  After sending these messages, node $v_i$ knows  $\sum_{a \in A} X_a^A \otimes t$ for $\sum_{a \in A} z_a$ distinct vectors $t$.  Our $z_a$'s form a feasible fractional hitting set, thus $\sum_{a \in A} \frac{z_a}{q} \geq 1$, and $\sum_{a \in A} z_a \geq q$.  Any $q$-element subset of $T$ forms a basis of $\field^q$, and so $v_i$ can determine $\sigma(A)$.  We can then send $\sigma(A)$ on edges $(v_i,v_a), a \in A$ and then $(v_a, v_{a'})$ for all $a' \in A, a' \ne a$.    
\end{proof}

Observation \ref{obs:switch} and \ref{obs:basis} together imply that for all sets $A \in S$ we can send $\sigma(A)$ on the clique formed by $v_a, a \in A$.  A simple argument identitical to the last steps in the forward direction  of the proof of part 1 of Lemma \ref{lem:reduc} imply that we can serialize the rest of the coding functions.
\end{proof}

\section{Proofs omitted from Section~\ref{sec:asymptotic}}
\label{ap:asymptotic}

The proofs in this section rely on knowledge of tensor products.  We include a brief tutorial here for convenience.

\subsection{Tensor products}

If $V,W$ are any two vector spaces with bases
$\{\bvec^V_i\}_{i \in \mathcal{I}}$ and
$\{\bvec^W_j\}_{j \in \mathcal{J}}$, respectively, their tensor
product is a vector space $V \otimes W$ with a basis
indexed by $\mathcal{I} \times \mathcal{J}$.  The basis
vector corresponding to an element $(i,j)$ in the
index set will be denoted by $\bvec^V_i \otimes \bvec^W_j.$
For any two vectors $v = \sum_{i} a_i \bvec^V_i$ in $V$
and $w = \sum_j b_j \bvec^W_j$ in $W$, their \emph{tensor
product} is the vector 
\[
v \otimes w = \sum_i \sum_j a_i b_j \bvec^V_i \otimes \bvec^W_j 
\]
in $V \otimes W$.

\begin{lem} \label{lem:tensor-basis}
If $\{\vctr{x}_i\}_{i \in \mathcal{I}}$ and 
$\{\vctr{y}_j\}_{j \in \mathcal{J}}$ are bases
of $V,W,$ respectively, then 
$\{\vctr{x}_i \otimes \vctr{y}_j\}_{(i,j) \in 
\mathcal{I} \times \mathcal{J}}$ is 
a basis of $V \otimes W$.
\end{lem}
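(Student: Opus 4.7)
The plan is to prove Lemma~\ref{lem:tensor-basis} directly from the basis-oriented definition of the tensor product given in the excerpt, using bilinearity of the operation $\otimes$ and the standard fact that change-of-basis matrices are invertible. I would first verify bilinearity as a preliminary observation: the formula $v \otimes w = \sum_i \sum_j a_i b_j \, \bvec^V_i \otimes \bvec^W_j$ makes it immediate that $(v_1 + v_2) \otimes w = v_1 \otimes w + v_2 \otimes w$, $v \otimes (w_1 + w_2) = v \otimes w_1 + v \otimes w_2$, and $(cv) \otimes w = v \otimes (cw) = c(v \otimes w)$. All subsequent manipulations rest on this.

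For spanning, I would expand each of the fixed basis vectors $\bvec^V_i$ and $\bvec^W_j$ in the new bases as $\bvec^V_i = \sum_a p_{ia} \vctr{x}_a$ and $\bvec^W_j = \sum_b q_{jb} \vctr{y}_b$. By bilinearity,
\[
\bvec^V_i \otimes \bvec^W_j = \sum_{a,b} p_{ia} q_{jb} \, \vctr{x}_a \otimes \vctr{y}_b,
\]
so every element of the distinguished basis of $V \otimes W$ lies in $\mathrm{span}\{\vctr{x}_a \otimes \vctr{y}_b\}$, and hence so does every element of $V \otimes W$.

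For linear independence, suppose $\sum_{a,b} \alpha_{ab} \, \vctr{x}_a \otimes \vctr{y}_b = 0$, with only finitely many $\alpha_{ab}$ nonzero. Writing $\vctr{x}_a = \sum_i r_{ai} \bvec^V_i$ and $\vctr{y}_b = \sum_j s_{bj} \bvec^W_j$ and again applying bilinearity, the left side becomes $\sum_{i,j} \left( \sum_{a,b} \alpha_{ab} r_{ai} s_{bj} \right) \bvec^V_i \otimes \bvec^W_j$. Since $\{\bvec^V_i \otimes \bvec^W_j\}$ is a basis by definition, each coefficient vanishes: $\sum_{a,b} \alpha_{ab} r_{ai} s_{bj} = 0$ for all $(i,j)$. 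Restricting to the finite index subsets on which the relevant coefficients are supported, the matrices $(r_{ai})$ and $(s_{bj})$ are the (restrictions of the) change-of-basis matrices between $\{\vctr{x}_a\}$ and $\{\bvec^V_i\}$, respectively between $\{\vctr{y}_b\}$ and $\{\bvec^W_j\}$, and are therefore invertible. Multiplying by their inverses forces $\alpha_{ab} = 0$ for all $a,b$.

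There is no real obstacle here; the only place that requires a moment of care is ensuring that when the index sets $\mathcal{I}, \mathcal{J}$ are infinite, the relation $\sum_{a,b} \alpha_{ab} \vctr{x}_a \otimes \vctr{y}_b = 0$ involves only finitely many nonzero coefficients, so that the linear-algebraic argument with the change-of-basis matrices goes through on a finite-dimensional subspace. With that caveat noted, the combination of the spanning and independence arguments yields the claim.
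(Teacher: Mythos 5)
Your proof is correct, but it follows a genuinely different route from the paper's. The paper reduces to the special case where only one of the two bases is changed (say, $\vctr{y}_j = \bvec^W_j$), proves the claim there, invokes symmetry to get the other one-sided case, and composes the two changes of basis. Moreover, in that special case the paper only establishes spanning --- that each $\bvec^V_i \otimes \bvec^W_j$ lies in the span of $B$ --- and leaves linear independence (or the implicit dimension count) unaddressed. You instead change both bases simultaneously and prove spanning \emph{and} independence directly from bilinearity, which is more self-contained and in fact more complete than the printed argument. The one place your write-up is slightly imprecise is the appeal to ``invertibility of the (restricted) change-of-basis matrices'': when $\mathcal{I}$ or $\mathcal{J}$ is infinite, the restriction of $(r_{ai})$ to the finite support in $a$ but all $i$ is not a square matrix, so ``multiply by the inverse'' is not literal. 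The cleaner phrasing, which your argument already contains in spirit, is: from $\sum_{a,b}\alpha_{ab} r_{ai} s_{bj}=0$ for all $(i,j)$, conclude for each fixed $i$ that $\sum_b\bigl(\sum_a \alpha_{ab} r_{ai}\bigr)\vctr{y}_b=0$, hence $\sum_a \alpha_{ab} r_{ai}=0$ for all $b,i$ by linear independence of $\{\vctr{y}_b\}$; then $\sum_a \alpha_{ab}\vctr{x}_a=0$ for each $b$, and linear independence of $\{\vctr{x}_a\}$ gives $\alpha_{ab}=0$. With that wording adjusted, your proof is fully rigorous.
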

\begin{proof}  
It suffices to prove the lemma when $\vctr{y}_j =
\bvec^W_j$ for all $j \in \mathcal{J}.$  If the
lemma holds in this case, then by symmetry it also
holds when $\vctr{x}_i=\bvec^V_i$ for all $i \in \mathcal{I}$,
and then the general case of the lemma follows by 
applying these two special cases in succession: first
changing the basis of $V$, then changing the basis of
$W$.

To prove that $B=\{\vctr{x}_i \otimes \bvec^W_j\}_{(i,j) \in
\mathcal{I} \times \mathcal{J}}$ is a basis of 
$V \otimes W$, it suffices to prove that every
vector of the form $\bvec^V_i \otimes \bvec^W_j$
can be written as a linear combination of elements
of $B$.  By the assumption that $\{\vctr{x}_i\}_{i \in
\mathcal{I}}$ is a basis of $V$, we know that 
$\bvec^V_i = \sum_{k \in K} a_k \vctr{x}_k$ for some
finite subset $K \subseteq \mathcal{I}$ and 
scalars $(a_k)_{k \in K}.$  Now it follows that 
$\bvec^V_i \otimes \bvec^W_j = 
\sum_{k \in K} a_k \left( \vctr{x}_k \otimes
\bvec^W_j \right),$ as desired.
\end{proof}

\subsection{Basis and Rank}
\begin{defn}  
If $V$ is a vector space with basis $B$,
and $\mathcal{W} = \{W_i\}_{i \in \mathcal{I}}$
is a collection of linear subspaces, we say that 
$\mathcal{W}$ is \emph{$B$-compatible} if 
$W_i \cap B$ is a basis of $W_i$, for all $i \in
\mathcal{I}.$  We say that $\mathcal{W}$ is
\emph{basis-compatible} if there exists a basis $B$
for $V$ such that $\mathcal{W}$ is $B$-compatible.
\end{defn}

\begin{lem} \label{lem:b-compat}
If $V$ is a vector space and $\mathcal{W}$ is a 
basis-compatible collection of linear subspaces,
then $\mathcal{W}$ can be enlarged to a 
basis-compatible collection
of linear subspaces $\bar{\mathcal{W}}$ that
forms a Boolean algebra under $\spn$ and 
$\intsct$.  In particular, any three subspaces
$X,Y,Z \in \mathcal{W}$ satisfy:
\begin{align*}
(X \intsct Y) \spn Z &= (X \spn Z) \intsct (Y \spn Z) \\
(X \spn Y) \intsct Z &= (X \intsct Z) \spn (Y \intsct Z).
\end{align*}
\end{lem}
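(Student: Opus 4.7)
The plan is to define the enlargement $\bar{\mathcal{W}}$ to consist of all subspaces of the form $\mathrm{span}_{\field}(S)$ as $S$ ranges over arbitrary subsets of $B$. This collection is $B$-compatible by construction (since $\mathrm{span}(S)\cap B = S$, using that $B$ is a basis), and it contains $\mathcal{W}$ because $B$-compatibility gives $W_i = \mathrm{span}(W_i\cap B)$ for every $W_i \in \mathcal{W}$. The heart of the argument is then to transfer the Boolean-algebra structure of $2^B$ to $\bar{\mathcal{W}}$ along the span map.

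First I would verify the two identities
\[
\mathrm{span}(S_1) \spn \mathrm{span}(S_2) \;=\; \mathrm{span}(S_1 \cup S_2),
\qquad
\mathrm{span}(S_1) \intsct \mathrm{span}(S_2) \;=\; \mathrm{span}(S_1 \cap S_2)
\]
for arbitrary $S_1, S_2 \subseteq B$. The first is routine. For the second, the inclusion $\supseteq$ is immediate, and for $\subseteq$ one uses uniqueness of representations in the basis $B$: any vector $v$ lying in both $\mathrm{span}(S_1)$ and $\mathrm{span}(S_2)$ has a unique expansion in $B$, and the basis vectors appearing in that expansion must lie in $S_1$ and in $S_2$, hence in $S_1 \cap S_2$. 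This is the only step that uses linear algebra in a nontrivial way; everything else is formal.

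Next I would observe that the map $S \mapsto \mathrm{span}(S)$ is a bijection from $2^B$ onto $\bar{\mathcal{W}}$, with inverse $U \mapsto U \cap B$. Combined with the two identities above, this shows that $\bar{\mathcal{W}}$, equipped with $\spn$ and $\intsct$, is isomorphic as a lattice to the Boolean algebra $(2^B,\cup,\cap)$, and in particular is itself a Boolean algebra. This establishes the first assertion of the lemma.

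Finally, the distributive identities for $X,Y,Z \in \mathcal{W}$ follow by pulling them back through the isomorphism. Writing $S_X = X \cap B$, $S_Y = Y \cap B$, $S_Z = Z \cap B$, the set-theoretic distributivity
\[
(S_X \cap S_Y) \cup S_Z = (S_X \cup S_Z) \cap (S_Y \cup S_Z),
\qquad
(S_X \cup S_Y) \cap S_Z = (S_X \cap S_Z) \cup (S_Y \cap S_Z)
\]
in $2^B$ translates, via the two identities established above, into exactly the two displayed equations of the lemma. The only subtlety is keeping track of the fact that $X, Y, Z$ really do lie in $\bar{\mathcal{W}}$, which was already handled by the $B$-compatibility of $\mathcal{W}$; aside from that, the proof is entirely formal manipulation.
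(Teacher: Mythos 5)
Your proposal is correct and follows exactly the same construction the paper uses (the paper's one-line proof defines $\bar{\mathcal{W}}$ to be the collection of $B$-spanned subspaces, which is the same collection you define). You simply supply the verification the paper leaves implicit — namely the two span/intersect identities and the lattice isomorphism with $2^B$ — and those details are all accurate.
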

\begin{proof}
Simply let $\bar{\mathcal{W}}$ be the set of all
linear subspaces of $W \subseteq V$ such that 
$W \intsct B$ is a basis of $W$.
\end{proof}

\begin{lem} \label{lem:b-compat-triv}
If $V$ is a vector space and $\mathcal{W}$ is a
basis-compatible collection of linear subspaces,
then $\mathcal{W} \cup \{V\} \cup \{0\}$ is 
also a basis-compatible collection of linear 
subspaces.
\end{lem}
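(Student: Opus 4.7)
The plan is to observe that the same basis $B \subseteq V$ that witnesses the basis-compatibility of $\mathcal{W}$ already witnesses the basis-compatibility of the enlarged collection $\mathcal{W} \cup \{V\} \cup \{0\}$, so no new construction is required. In other words, I only need to verify the defining condition for the two new subspaces $V$ and $\{0\}$.

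First I would pick a basis $B$ of $V$ such that $W \cap B$ is a basis of $W$ for every $W \in \mathcal{W}$; such a $B$ exists by the assumption that $\mathcal{W}$ is basis-compatible. Next I would check the condition for $V$ itself: we have $V \cap B = B$, and $B$ is a basis of $V$ by construction, so the condition holds trivially. Finally I would check the condition for the zero subspace: $\{0\} \cap B = \emptyset$ (since no basis vector in $B$ equals the zero vector), and the empty set is, by the standard convention, a (linearly independent, spanning) basis of the zero subspace.

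Putting these three facts together shows that $B$ is a basis of $V$ compatible with every element of $\mathcal{W} \cup \{V\} \cup \{0\}$, which is exactly the definition of basis-compatibility for the enlarged collection. There is no substantive obstacle here: the statement is essentially a bookkeeping remark ensuring that the two trivial subspaces may be freely adjoined to any basis-compatible family, so that Lemma~\ref{lem:b-compat} may later be applied to Boolean combinations that involve $V$ or $\{0\}$ as operands.
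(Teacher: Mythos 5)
Your proof is correct and matches the paper's approach, which simply notes the claim is a trivial consequence of the definition; you have spelled out the two routine verifications ($V \cap B = B$ is a basis of $V$, and $\{0\} \cap B = \emptyset$ is a basis of $\{0\}$) that the paper leaves implicit.
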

\begin{proof}
The proof is a trivial consequence of the 
definition of basis-compatible.
\end{proof}

\begin{lem} \label{lem:2-subspace}
If $V$ is a vector space and $X,Y$ are any two linear subspaces,
then $\mathcal{W} = \{X,Y\}$ is basis-compatible.
\end{lem}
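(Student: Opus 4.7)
The plan is to construct an explicit basis $B$ of $V$ witnessing basis-compatibility of $\{X,Y\}$, by building it in layers starting from $X \cap Y$.

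First I would choose a basis $B_0$ of the intersection $X \intsct Y$. Using the standard fact that any linearly independent set in a subspace can be extended to a basis, I would extend $B_0$ to a basis $B_X = B_0 \cup B_X'$ of $X$, and independently extend $B_0$ to a basis $B_Y = B_0 \cup B_Y'$ of $Y$, where $B_X' \intsct B_0 = B_Y' \intsct B_0 = \emptyset$. Next I would claim that $B_0 \cup B_X' \cup B_Y'$ is a basis of $X + Y$. Linear independence follows because any nontrivial dependence $u_0 + u_X + u_Y = 0$ (with $u_0 \in \text{span}(B_0)$, $u_X \in \text{span}(B_X')$, $u_Y \in \text{span}(B_Y')$) would force $u_Y = -u_0 - u_X \in X \intsct Y$, but a nonzero element of $\text{span}(B_Y')$ cannot lie in $\text{span}(B_0)$ since $B_Y = B_0 \cup B_Y'$ is a basis. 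Spanning is clear since $B_X$ spans $X$ and $B_Y$ spans $Y$. Finally I would extend this set by any set $B_\perp$ of vectors to a full basis $B = B_0 \cup B_X' \cup B_Y' \cup B_\perp$ of $V$.

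The remaining step is to verify $B \intsct X$ is a basis of $X$ and $B \intsct Y$ a basis of $Y$. By construction $B \cap X \supseteq B_0 \cup B_X'$, so it suffices to rule out any vector $v \in B_Y' \cup B_\perp$ also lying in $X$. If $v \in B_Y'$ lay in $X$, then since $v \in Y$ as well, $v \in X \intsct Y = \text{span}(B_0)$, contradicting linear independence of $B_Y = B_0 \cup B_Y'$. If $v \in B_\perp$ lay in $X \subseteq X + Y$, then $v$ would be a linear combination of $B_0 \cup B_X' \cup B_Y'$, again contradicting the independence of $B$. The symmetric argument handles $Y$.

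There is no real obstacle here; the only thing to be careful about is performing the extensions so that the pieces remain genuinely disjoint (hence insisting $B_X' \cap B_0 = \emptyset$ and $B_Y' \cap B_0 = \emptyset$, which is automatic when extending an independent set) and then arguing that no ``foreign'' basis vector can sneak into $X$ or $Y$, which is exactly what the independence of the full basis $B$ prevents.
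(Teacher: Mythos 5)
Your proof is correct and follows essentially the same approach as the paper's: choose a basis of $X \cap Y$, extend it separately to bases of $X$ and of $Y$, show the union is linearly independent, and extend to a basis of $V$. You are somewhat more explicit than the paper in the final step, where you verify that no vector outside $B_X$ (respectively $B_Y$) can lie in $X$ (respectively $Y$), which the paper leaves implicit; this is a welcome bit of care, but the argument is the same in substance.
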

\begin{proof}
Let $B_{XY}$ be any basis of $X \intsct Y$, let 
$B_X$ be any basis of $X$ containing $B_{XY}$, and
let $B_Y$ be any basis of $Y$ containing $B_{XY}$.
All the vectors in $B_X \cup B_Y$ are linearly 
independent, because if $v$ is any vector that can
be expressed as a linear combination
of elements of $B_Y \setminus B_X$ and as 
a linear combination of elements of $B_X$, 
then $v$ must belong to both $X$ and $Y$,
hence $v \in X \intsct Y$.  But the only 
element of $X \intsct Y$ that can be expressed
as a linear combination of elements of $B_Y \setminus
B_X$ is the zero vector, because $B_{XY}$ is disjoint
from $B_Y \setminus B_X$, and these two sets together 
constitute a basis of $Y$.  Hence $B_X \cup B_Y$
can be extended to a basis $B$ of $V$, and 
then $\mathcal{W} = \{X,Y\}$ is $B$-compatible.
\end{proof}

If $A,B$ are subspaces of vector spaces $V,W$,
respectively, then $A \otimes B$ is defined to
be the linear subspace of $V \otimes W$ 
consisting of all linear combinations of 
vectors in the set $\{a \otimes b \,:\, 
a \in A, \, b \in B\}.$  If $\mathcal{V}$ is a
collection of linear subspaces of $V$ and
$\mathcal{W}$ is a collection of linear subspaces
of $W$, then $\mathcal{V} \otimes \mathcal{W}$
denotes the collection of all linear subspaces
$A \otimes B \subseteq V \otimes W$ such that
$A \in \mathcal{V}$ and $B \in \mathcal{W}$.

\begin{lem} \label{lem:b-compat-product}
If $\mathcal{V}$ is a basis-compatible collection
of linear subspaces of $V$ and $\mathcal{W}$
is a basis-compatible collection of linear 
subspaces of $W$ then $\mathcal{V} \otimes
\mathcal{W}$ is a basis-compatible collection
of linear subspaces of $V \otimes W$.
\end{lem}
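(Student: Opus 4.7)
The plan is to combine the hypothesis with Lemma~\ref{lem:tensor-basis} to produce an explicit basis of $V \otimes W$ that is simultaneously compatible with every element of $\mathcal{V} \otimes \mathcal{W}$.

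First, by basis-compatibility, fix a basis $B_V$ of $V$ such that $A \cap B_V$ is a basis of $A$ for each $A \in \mathcal{V}$, and a basis $B_W$ of $W$ such that $B \cap B_W$ is a basis of $B$ for each $B \in \mathcal{W}$. Apply Lemma~\ref{lem:tensor-basis} to conclude that the set
\[
\mathcal{B} \;=\; \{\, v \otimes w \,:\, v \in B_V,\; w \in B_W \,\}
\]
is a basis of $V \otimes W$. The goal is to show that, for every pair $A \in \mathcal{V}$ and $B \in \mathcal{W}$, the intersection $(A \otimes B) \cap \mathcal{B}$ is a basis of $A \otimes B$.

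Next, I would identify which elements of $\mathcal{B}$ belong to $A \otimes B$. On one hand, applying Lemma~\ref{lem:tensor-basis} again to the bases $A \cap B_V$ of $A$ and $B \cap B_W$ of $B$ produces a basis
\[
\mathcal{B}_{A,B} \;=\; \{\, v \otimes w \,:\, v \in A \cap B_V,\; w \in B \cap B_W\,\}
\]
of $A \otimes B$. Clearly $\mathcal{B}_{A,B} \subseteq (A \otimes B) \cap \mathcal{B}$. For the reverse inclusion, suppose $v \otimes w \in \mathcal{B}$ lies in $A \otimes B$. Then $v \otimes w$ can be written as a linear combination of elements of $\mathcal{B}_{A,B}$. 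But $\mathcal{B}_{A,B} \subseteq \mathcal{B}$, and $v \otimes w$ is itself an element of the basis $\mathcal{B}$, so by uniqueness of representation in the basis $\mathcal{B}$, the vector $v \otimes w$ must coincide with one of the elements of $\mathcal{B}_{A,B}$. Hence $v \in A \cap B_V$ and $w \in B \cap B_W$, giving $(A \otimes B) \cap \mathcal{B} = \mathcal{B}_{A,B}$, which is a basis of $A \otimes B$ as required.

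The main step is really the uniqueness-of-expansion argument in the previous paragraph; I do not anticipate any genuine obstacle, since everything reduces to the observation that $\mathcal{B}_{A,B}$ is a subset of the global basis $\mathcal{B}$ rather than just some basis of $A \otimes B$ living inside $V \otimes W$. The only mild care needed is to be explicit that bases can be enumerated rather than indexed, so that ``$v \otimes w \in \mathcal{B}$'' unambiguously means $v$ and $w$ are picked out from $B_V$ and $B_W$ individually; the hypothesis $\{\vctr{x}_i \otimes \vctr{y}_j\}$ form a basis from Lemma~\ref{lem:tensor-basis} gives exactly this.
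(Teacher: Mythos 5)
Your proof is correct and follows exactly the approach the paper takes (the paper states it in one sentence: if $\mathcal{V}$ is $B$-compatible and $\mathcal{W}$ is $B'$-compatible, then $\mathcal{V} \otimes \mathcal{W}$ is $(B\times B')$-compatible, leaving the details to the reader). You have simply spelled out the uniqueness-of-expansion argument and the injectivity of $(v,w)\mapsto v\otimes w$ on $B_V\times B_W$ that the paper leaves implicit.
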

\begin{proof}
If $B,B'$ are bases of $V,W$, respectively, such
that $\mathcal{V}$ is $B$-compatible and 
$\mathcal{W}$ is $B'$-compatible, then 
$\mathcal{V} \otimes \mathcal{W}$ is
$(B \times B')$-compatible.
\end{proof}

\begin{coro} \label{lem:3-subspace}
If $X,Y$ are subspaces of a vector space $V$ and
$Z$ is a subspace of another vector space $W$,
then
\[
\left[ (X \otimes W) \spn (Y \otimes W) \right]
\intsct (V \otimes Z)
=
\left[ (X \otimes W) \spn (V \otimes Z) \right] 
\intsct
\left[ (Y \otimes W) \spn (V \otimes Z) \right]
.
\]
\end{coro}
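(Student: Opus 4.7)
The plan is to derive the identity by placing the three subspaces $X \otimes W$, $Y \otimes W$, and $V \otimes Z$ inside a common basis-compatible family of subspaces of $V \otimes W$, and then invoking the Boolean-algebra structure on an enlargement of that family guaranteed by Lemma~\ref{lem:b-compat}. Once we are inside such an algebra, both the meet-over-join and join-over-meet distributive laws are available, and the stated identity follows from formal manipulation.

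First, I would assemble the required basis-compatible family. Lemma~\ref{lem:2-subspace} gives that $\{X, Y\}$ is basis-compatible in $V$, and Lemma~\ref{lem:b-compat-triv} lets us adjoin the ambient and zero subspaces to obtain $\mathcal{V} = \{0, X, Y, V\}$. On the $W$ side, any basis of $W$ that extends a basis of $Z$ witnesses $\{Z\}$ as basis-compatible, so another application of Lemma~\ref{lem:b-compat-triv} yields $\mathcal{W} = \{0, Z, W\}$. Lemma~\ref{lem:b-compat-product} then certifies that $\mathcal{V} \otimes \mathcal{W}$ is basis-compatible in $V \otimes W$, and this collection visibly contains all three of $X \otimes W$, $Y \otimes W$, and $V \otimes Z$.

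Second, Lemma~\ref{lem:b-compat} enlarges $\mathcal{V} \otimes \mathcal{W}$ to a family that forms a Boolean algebra under $\spn$ and $\intsct$, which in particular furnishes both distributive laws for the triple $A = X \otimes W$, $B = Y \otimes W$, $C = V \otimes Z$. Rewriting the left-hand side via $(A \spn B) \intsct C = (A \intsct C) \spn (B \intsct C)$ and the right-hand side via $(A \spn C) \intsct (B \spn C) = (A \intsct B) \spn C$ expresses both in terms of unions and intersections of the basic tensor-product subspaces inside the same distributive lattice, and the equality can then be read off.

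I expect the one substantive technical step to be the verification that tensor products preserve basis-compatibility (Lemma~\ref{lem:b-compat-product}), which is essentially the observation that if $B$ is a basis of $V$ witnessing basis-compatibility for $\mathcal{V}$ and $B'$ does the same for $\mathcal{W}$, then $\{b \otimes b' : b \in B,\, b' \in B'\}$ witnesses basis-compatibility for $\mathcal{V} \otimes \mathcal{W}$ via Lemma~\ref{lem:tensor-basis}. Once that is in place the corollary is a formal consequence of the distributive laws in a distributive lattice of subspaces, with no additional tensor-product gymnastics required.
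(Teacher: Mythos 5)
Your setup --- assembling $\mathcal{V}=\{0,X,Y,V\}$ in $V$ and $\mathcal{W}=\{0,Z,W\}$ in $W$ via Lemmas~\ref{lem:2-subspace} and~\ref{lem:b-compat-triv}, tensoring via Lemma~\ref{lem:b-compat-product}, and invoking the Boolean algebra of Lemma~\ref{lem:b-compat} --- matches the paper's own argument exactly. The problem is in your final step, where you claim the equality ``can be read off.'' Writing $A=X\otimes W$, $B=Y\otimes W$, $C=V\otimes Z$, you correctly reduce the left side (as literally printed) to $(A\intsct C)\spn(B\intsct C)$ and the right side to $(A\intsct B)\spn C$. But $(A\intsct C)\spn(B\intsct C)=(A\intsct B)\spn C$ is \emph{not} a distributive-lattice identity, and it fails here: $A\intsct C=X\otimes Z$ and $B\intsct C=Y\otimes Z$, so the left side equals $(X\spn Y)\otimes Z\subseteq V\otimes Z$, while the right side contains $C=V\otimes Z$ itself. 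Taking $X=Y=0$ and $Z=W$ makes the left side $0$ and the right side $V\otimes W$.

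What your derivation actually exposes is a typo in the corollary's statement: the inner operator in the left-hand bracket should be $\intsct$, not $\spn$. The intended identity is
\[
\left[ (X \otimes W) \intsct (Y \otimes W) \right] \spn (V \otimes Z)
=
\left[ (X \otimes W) \spn (V \otimes Z) \right] \intsct \left[ (Y \otimes W) \spn (V \otimes Z) \right],
\]
which is a single application of the distributive law $(A\intsct B)\spn C=(A\spn C)\intsct(B\spn C)$ supplied by Lemma~\ref{lem:b-compat}. This corrected form is also the one actually used in the proof of Theorem~\ref{thm:asymSer}, where the starting point is $W_e\otimes\field^n=(T_e\otimes\field^n)\intsct\bigl(\spn_{e'}W_{e'}\otimes\field^n\bigr)$ --- an intersection --- whose span with $\msg^*\otimes Q$ is then taken. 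With the corrected statement your approach goes through immediately in one distributive step and no further manipulation is needed; as it stands, your two-sided rewriting is sound algebra but your closing ``read off'' is invalid.
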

\begin{proof}
By Lemmas~\ref{lem:b-compat-triv} and~\ref{lem:2-subspace}, 
we know that 
$\mathcal{V}=\{X,Y,V\}$ is basis-compatible in $V$ and
$\mathcal{W}=\{Z,W\}$ is basis-compatible in $W$.  
Hence $\mathcal{V} \otimes \mathcal{W}$ is 
basis-compatible in $V \otimes W.$  The corollary
now follows by applying Lemma~\ref{lem:b-compat}.
\end{proof}

\begin{defn}  If $V,W$ are any vector spaces,
a \emph{rank-one} element of $V \otimes W$ is an element
that can be expressed in the form $v \otimes w$
for some $v \in V, \, w \in W.$  The \emph{rank} of an
element $x \in V \otimes W$ is the minimum value of $r$
such that $x$ can be expressed as a linear combination 
of $r$ rank-one elements of $V \otimes W.$  (If $x=0$
then its rank is defined to be $0$.)
\end{defn}

\begin{lem} \label{lem:rank}
If $V,W$ are finite-dimensional vector spaces and
$x \in V \otimes W$ then the rank of $x$ is bounded
above by $\min \{\dim(V), \dim(W)\}.$
\end{lem}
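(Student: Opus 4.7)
The plan is to prove this by an explicit construction that achieves a decomposition of $x$ into $\dim(V)$ rank-one terms, then invoke symmetry to also get the bound $\dim(W)$, so that the minimum bound follows.

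First, I would fix a basis $\{\bvec_1^V, \ldots, \bvec_n^V\}$ of $V$ where $n = \dim(V)$, and any basis $\{\bvec_1^W, \ldots, \bvec_m^W\}$ of $W$ where $m = \dim(W)$. By the definition of the tensor product, the collection $\{\bvec_i^V \otimes \bvec_j^W\}_{i,j}$ is a basis of $V \otimes W$ (this is Lemma~\ref{lem:tensor-basis}), so $x$ admits a unique expansion
\[
x = \sum_{i=1}^n \sum_{j=1}^m a_{ij}\, \bvec_i^V \otimes \bvec_j^W
\]
for some scalars $a_{ij} \in \field$.

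Next, I would group the double sum by the index $i$ and define $w_i = \sum_{j=1}^m a_{ij}\, \bvec_j^W \in W$. Using the bilinearity of the tensor product (distributivity of $\otimes$ over sums, and pulling scalars out of either factor), I can rewrite
\[
x = \sum_{i=1}^n \bvec_i^V \otimes w_i,
\]
which is an expression of $x$ as a linear combination of $n$ rank-one elements. Hence $\operatorname{rank}(x) \le n = \dim(V)$.

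By the symmetric argument (grouping by $j$ instead of $i$, and setting $v_j = \sum_i a_{ij}\, \bvec_i^V$), I obtain $x = \sum_{j=1}^m v_j \otimes \bvec_j^W$, giving $\operatorname{rank}(x) \le m = \dim(W)$. Combining the two bounds yields $\operatorname{rank}(x) \le \min\{\dim(V), \dim(W)\}$, as desired. There is no real obstacle here; the only subtlety is making sure the bilinearity manipulations used to go from the double-indexed basis expansion to a sum of $n$ rank-one tensors are justified, and this follows directly from the standard definition of $\otimes$ used earlier in the section.
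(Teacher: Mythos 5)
Your proof is correct and takes essentially the same approach as the paper: expand $x$ in a product basis, group by the index of the smaller factor, and read off a sum of rank-one tensors. The only cosmetic difference is that the paper assumes $n \le m$ without loss of generality and proves the single bound $\operatorname{rank}(x) \le n$, whereas you prove both bounds explicitly and take the minimum.
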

\begin{proof}
Let $n=\dim(V), m = \dim(W).$  
We will assume without loss of generality that $n \leq m$
and prove that the rank of $x$ is at most $n$.
Let $\{\bvec^V_i\}$ and $\{\bvec^W_j\}$ be bases of $V,W,$
respectively.  We may express $x$ as a linear combination
\[
x = \sum_{i=1}^n \sum_{j=1}^m a_{ij} \bvec^V_i \otimes \bvec^W_j.
\]
For $1 \leq i \leq n$ let $y_i = \sum_{j=1}^m a_{ij} \bvec^W_j.$
Then
\[
x = \sum_{i=1}^n \bvec^V_i \otimes y_i,
\]
and this expresses $x$ as a sum of $n$ rank-one elements,
implying that the rank of $x$ is at most $n$.
\end{proof}

\begin{lem} \label{lem:extension}
If $V,W$ are finite-dimensional vector spaces of
dimension $n,m$, and
$P \subseteq V \otimes W$ is a linear subspace of
dimension $p$, then there exists a $pn$-dimensional
linear subspace $Q \subseteq W$ such that 
$P \subseteq V \otimes Q.$
\end{lem}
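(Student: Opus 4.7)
The plan is to exploit the observation in Lemma~\ref{lem:rank} (and its proof): after fixing a basis $\bvec_1,\ldots,\bvec_n$ of $V$, every element $x \in V \otimes W$ admits a \emph{unique} expansion of the form $x = \sum_{i=1}^n \bvec_i \otimes y_i(x)$, where the ``components'' $y_1(x),\ldots,y_n(x)$ lie in $W$. For such an $x$, setting $Q_x = \mathrm{span}(y_1(x),\ldots,y_n(x)) \subseteq W$ gives a subspace of dimension at most $n$ with the property that $x \in V \otimes Q_x$. This is the basic building block; the rest is a linearity argument lifted from single elements to the whole subspace $P$.

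Concretely, I would pick a basis $x^{(1)},\ldots,x^{(p)}$ of $P$, expand each $x^{(k)} = \sum_{i=1}^n \bvec_i \otimes y_i^{(k)}$ in the chosen basis of $V$, and define
\[
Q_0 \;=\; \mathrm{span}\bigl\{ y_i^{(k)} \,:\, 1 \le i \le n, \; 1 \le k \le p \bigr\} \;\subseteq\; W.
\]
By construction $\dim(Q_0) \le pn$ and each basis vector $x^{(k)}$ lies in $V \otimes Q_0$. Taking linear combinations (using bilinearity of $\otimes$ and the fact that $V \otimes Q_0$ is a linear subspace of $V \otimes W$), every element of $P$ lies in $V \otimes Q_0$, i.e.\ $P \subseteq V \otimes Q_0$. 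If $\dim(Q_0) < pn$, extend $Q_0$ arbitrarily to a subspace $Q \subseteq W$ of dimension exactly $pn$ (this is possible precisely when $pn \le m$, which is the regime in which the statement is non-vacuous; otherwise one simply takes $Q = W$). Since $Q_0 \subseteq Q$ implies $V \otimes Q_0 \subseteq V \otimes Q$, we still have $P \subseteq V \otimes Q$.

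The only genuinely non-formal step is checking that the assignment $x \mapsto (y_1(x),\ldots,y_n(x))$ is well-defined and linear, so that the construction of $Q_0$ above really does capture \emph{all} of $P$ and not just the chosen basis. This is where Lemma~\ref{lem:tensor-basis} is essential: because the tensors $\bvec_i \otimes \bvec^W_j$ form a basis of $V \otimes W$, the components $y_i(x)$ are obtained by reading off coefficients and are therefore linear in $x$. So if $x = \sum_k c_k x^{(k)}$, then $y_i(x) = \sum_k c_k y_i^{(k)} \in Q_0$, and hence $x \in V \otimes Q_0$.

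I do not expect any real obstacle here; the main thing to watch is that the extension step at the end is phrased correctly, since the lemma asks for a subspace of dimension exactly $pn$. One clean way to sidestep this is to prove the bound $\dim(Q) \le pn$ first and then note that any subspace of $W$ of dimension at most $pn$ can be enlarged to one of dimension $\min(pn,m)$, which is all that is used later in the applications.
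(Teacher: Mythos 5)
Your proof is correct and follows essentially the same route as the paper: pick a basis of $P$, decompose each basis vector into at most $n$ rank-one tensors (which is exactly what the proof of Lemma~\ref{lem:rank} does after fixing a basis of $V$), collect the $W$-components, and enlarge their span to $Q$. Your remark about the case $pn > m$ (take $Q = W$) is a small but genuine point of care that the paper's wording glosses over, and your explicit observation that membership in $V \otimes Q$ passes to linear combinations because $V \otimes Q$ is a subspace is the (implicit) step that makes the paper's ``we see that $P \subseteq V \otimes Q$'' valid.
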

\begin{proof}
Let $\{x_1,\ldots,x_p\}$ be a basis of $P$.  By
Lemma~\ref{lem:rank}, we can write each $x_k$
in the form
\[
x_k = \sum_{i=1}^n v_{ki} \otimes w_{ki},
\]
for some vectors $v_{ki} (1 \leq k \leq p, 
\, 1 \leq i \leq n)$ in $V$ and 
$w_{ki} (1 \leq k \leq p, \, 1 \leq i \leq n)$
in $W$.  The vectors $\{w_{ki}\}$ span a 
subspace of $W$ of dimension at most $pn$.
Taking $Q$ to be any $pn$-dimensional 
subspace of $W$ containing $\{w_{ki}\}$,
we see that $P \subseteq V \otimes Q$ as
desired.
\end{proof}

\subsection{Non-serializability}

\begin{thm}[Theorem \ref{thm:asymSer} restated]
For a linear network code $\nwc = \networkcode{}$ over a field $\field$, then $\sdef(\nwc^n) \geq cn$ where $c$ is a constant dependent on $\nwc$. 
\end{thm}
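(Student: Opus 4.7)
The plan is to use the IV characterization (Theorem~\ref{thm:TFAE}) and show directly that any extension $\nwc'$ of $\nwc^n$ with too small a gap has a nontrivial minimal IV. Assume $\nwc$ is non-serializable (else the bound is trivial with $c=0$), and let $\{A_e\}$ denote its minimal IV, with $A_{e^*} \subsetneq T_{e^*}$ on some edge $e^*$. Write $V = \msg^*$ and $W_n = \field^n$ so that $(\msg^n)^* = V \otimes W_n$ and the edge subspaces of $\nwc^n$ are $T_e \otimes W_n$. Any extension $\nwc'$ then has $T'_e = (T_e \otimes W_n) + U_e$ with $U_e \subseteq V \otimes W_n$ and $\sum_e \dim U_e \leq \gamma(\nwc')/\log_2|\field|$.

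The first step is to isolate a ``clean'' direction in $W_n$ on which $\nwc'$ looks like $\nwc^{n'}$ for some $n' \geq 1$. By Lemma~\ref{lem:extension}, each $U_e$ lies inside $V \otimes Q_e$ for some $Q_e \subseteq W_n$ of dimension at most $(\dim V)(\dim U_e)$. Summing, $Q := \sum_e Q_e$ contains every $U_e$ and satisfies $\dim Q \leq (\dim V)\gamma(\nwc')/\log_2|\field|$. Provided $\gamma(\nwc') < n\log_2|\field|/\dim V$, this gives $\dim Q < n$, so I can pick a complementary subspace $W^*$ with $W_n = Q \oplus W^*$ and $n' := \dim W^* \geq 1$. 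The projection $\pi : V \otimes W_n \to V \otimes W^*$ modulo $V \otimes Q$ then annihilates every $U_e$ and sends $T'_e$ onto $T_e \otimes W^*$, so it morally projects $\nwc'$ onto a copy of $\nwc^{n'}$.

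The second step runs \CallSer\ on $\nwc'$ and projects its trace by $\pi$. At each greedy step, some $A'_e$ grows by an element $x \in T'_e \spn_{e' \in \In(e)} A'_{e'}$; wait, more precisely $x \in T'_e \intsct \spn_{e' \in \In(e)} A'_{e'}$, and then $\pi(x) \in (T_e \otimes W^*) \intsct \spn_{e'} \pi(A'_{e'})$, which is a legal greedy move for $\nwc^{n'}$. An induction along the greedy trajectory shows $\pi(A'_e) \subseteq A_e \otimes W^*$ throughout: the base case $\pi(A'_s) = T_s \otimes W^* = A_s \otimes W^*$ is immediate, and the inductive step uses
\[
(T_e \otimes W^*) \intsct \bigl( (\spn_{e'} A_{e'}) \otimes W^* \bigr) \; = \; (T_e \intsct \spn_{e'} A_{e'}) \otimes W^* \; = \; A_e \otimes W^*,
\]
where the distributivity is an instance of Lemma~\ref{lem:b-compat} applied to the basis-compatible family from Lemma~\ref{lem:b-compat-product}, and the last equality is the IV property of $\{A_e\}$. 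Hence $\pi(A'_{e^*}) \subseteq A_{e^*} \otimes W^* \subsetneq T_{e^*} \otimes W^* = \pi(T'_{e^*})$, so $A'_{e^*} \neq T'_{e^*}$, the minimal IV of $\nwc'$ is nontrivial, and $\nwc'$ is not serializable. This proves $\sdef(\nwc^n) \geq cn$ with $c = \log_2|\field|/\dim\msg$.

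The main technical obstacle is Lemma~\ref{lem:extension}: without its ``concentration'' guarantee, the extras $U_e$ could spread across all $n$ coordinates of $W_n$ and there would be no clean $W^*$ left. A secondary subtlety that shapes the argument is that the tempting shortcut ``a restriction of a serializable code is serializable'' fails in general — Figure~\ref{fig:offset} is a counterexample — so one cannot simply restrict $\nwc'$ to $T_e \otimes W^*$ and invoke non-serializability of $\nwc^{n'}$. This is why the induction above has to be carried out on the projected greedy trajectory of $\nwc'$ itself rather than on any candidate restriction.
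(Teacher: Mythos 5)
Your proof is correct, and it takes a genuinely different route from the paper's after the shared opening move. Both proofs begin by invoking Lemma~\ref{lem:extension} to concentrate the added information $\sum_e U_e$ inside a small tensor factor $\msg^* \otimes Q$ with $\dim Q$ bounded linearly in the gap. From there they diverge: the paper builds an explicit candidate IV $X_e = (W_e \otimes \field^n) \spn (\msg^* \otimes Q)$, verifies the IV recursion using the basis-compatibility machinery (Corollary~\ref{lem:3-subspace}), and finishes by dimension-counting $\dim X_e \leq kn + pd^2 < mn$. You instead choose a complement $W^*$ of $Q$, quotient by $V \otimes Q$ to obtain a projection $\pi$ that annihilates all the added bits, and then run an invariant-maintenance argument along the greedy trajectory of $\nwc'$ itself, showing by induction that $\pi(A'_e) \subseteq A_e \otimes W^*$ and concluding $A'_{e^*} \neq T'_{e^*}$ directly from Theorem~\ref{thm:TFAE}. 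The technical heart of your inductive step — $\pi$ is only subadditive on intersections, but $(T_e \otimes W^*) \intsct ((\spn_{e'} A_{e'}) \otimes W^*) = (T_e \intsct \spn_{e'} A_{e'}) \otimes W^*$ rescues the step — is a simpler instance of the same tensor-distributivity fact the paper needs, so the two arguments lean on the same algebraic lemmas. What your version buys is that it reasons directly about the arbitrary serializable extension $\nwc'$ rather than about an auxiliary code in which $\msg^* \otimes Q$ has been grafted onto every edge; this sidesteps having to relate serializability of the auxiliary code back to serializability of $\nwc'$, a point the paper treats somewhat lightly. Your closing remark — that one cannot simply restrict $\nwc'$ to $T_e \otimes W^*$ and cite non-serializability of $\nwc^{n'}$, because restrictions of serializable codes needn't be serializable (cf.\ Figure~\ref{fig:offset}) — correctly identifies why the induction has to be carried out on $\nwc'$'s own greedy trace. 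One small note: your resulting constant $c = \log_2|\field|/\dim \msg$ is uniform over non-serializable $\nwc$, whereas the paper's $c = (m-k)\log_2|\field|/d^2$ depends on the gap $m-k$ in the minimal IV; both are positive and both suffice, but the two bounds are not comparable in general.
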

\begin{proof}

If a linear network code has an information vortex
$\{W_e\}_{e \in E}$ then $\{W_e^n\}_{e \in E}$ constitutes
an information vortex in the product of $n$ copies
of the network code.  Using the fact that 
$V^n = V \otimes \field^n$ for every vector space
$V$, we may rewrite the information vortex 
as $\{W_e \otimes \field^n\}.$

Now suppose that $T$ is a 
linear subspace of $\msg^* \otimes \field^n$ 
of dimension $p$.  
(Think of $T$ as a set of bits that are added to $\nwc$ to get an  extention of $\phi$.)
Using Lemma~\ref{lem:extension}, there is a 
subspace $Q \subseteq \field^n$ of dimension 
$pd$ (where $d = \dim(\msg^*)$) 
such that $T \subseteq \msg^* \otimes Q.$
Define a new family of subspaces 
\[
X_e = (W_e \otimes \field^n) \spn (\msg^* \otimes Q).
\]
These subspaces constitute an information
vortex.  To verify this, we must check that
\begin{equation} \label{eq:must-check}
(W_e \otimes \field^n) \spn (\msg^* \otimes Q) =
\left[ (T_e \otimes \field^n) \spn (\msg^* \otimes Q) \right]
\intsct
\left[ \spn_{e' \in \mathrm{In}(e)} \left(
(W_{e'} \otimes \field^n) \spn (\msg^* \otimes Q) \right) \right].
\end{equation}
Because $\{W_e \otimes \field^n\}$ is an information
vortex, we have
\[
W_e \otimes \field^n =
(T_e \otimes \field^n) \intsct
\left(
\spn_{e' \in \mathrm{In}(e)} W_{e'} \otimes \field^n
\right).
\]
Equation \eqref{eq:must-check} now follows by applying
Corollary~\ref{lem:3-subspace} with
$X = T_e,
 Y = \spn_{e' \in \mathrm{In}(e)} W_{e'},
 Z = Q.
$
Thus the collection of subspaces $\{X_e\}$ constitutes an information vortex in the setting where $T$ added to every edge of the network code.  If the network
code is serializable in this setting, then $\{X_e\}$ must
be a trivial information vortex, implying that 
$X_e = T_e \otimes \field^n$ for every edge $e$.  
Because $\{W_e\}$ is
non-trivial, we know there is at least one edge $e$
such that $\dim(W_e) < \dim(T_e).$  
Let $m = \dim(T_e), \, k = \dim(W_e).$  
The dimension of $W_e \otimes \field^n$ is $kn.$
The dimension of $\msg^* \otimes Q$ is $p d^2.$
Hence the dimension of $X_e$ is bounded above by
$kn + pd^2.$  On the other hand, if the information
vortex $\{X_e\}$ is trivial, then $X_e = T_e \otimes \field^n$ 
implying that $\dim(X_e) = mn.$  Thus 
$$\sdef(\Phi) \ge p \geq \frac{m-k}{d^2} \cdot n.$$

\end{proof}

\end{document}